\documentclass[a4paper,onecolumn,unpublished]{quantumarticle}
\pdfoutput=1

\usepackage[textheight=8.8in,textwidth=6in,top=1in,headheight=12pt,headsep=25pt,footskip=30pt]{geometry}

\usepackage{graphicx} 
\usepackage{pdfpages}
\usepackage{marginfix}
\usepackage{multicol}

\usepackage{tikz}
\usepackage{tikzscale}  
\usepackage{preamble/tikzfig}["draft"] 

\usepackage[utf8]{inputenc} 
\usepackage[T1]{fontenc}    
\usepackage[hyphens]{url}            
\usepackage{hyperref}       

\usepackage{booktabs}       
\usepackage{amsfonts}       
\usepackage{nicefrac}       
\usepackage{microtype}      
\usepackage{multirow}       
\usepackage{float}
\usepackage{braket}
\usepackage[font=small]{caption}
\usepackage{amssymb}
\usepackage{amsthm}
\usepackage{mathtools}
\usepackage{thmtools}
\usepackage{wrapfig}
\usepackage[style=numeric, maxbibnames=99, sortcites, sorting = none]{biblatex}
\usepackage[UKenglish]{babel}
\usepackage{bookmark}   
\usepackage[doipre={doi:}]{uri}   
\usepackage{csquotes}
\usepackage{import}
\usepackage{subfiles}
\usepackage{etoolbox}
\apptocmd{\sloppy}{\hbadness 10000\relax}{}{} 
\usepackage{physics}
\usepackage{amsmath}
\usepackage{mathrsfs}   
\usepackage{bigints}    
\usepackage{stmaryrd}   
\SetSymbolFont{stmry}{bold}{U}{stmry}{m}{n}   

\usepackage{extarrows} 
\usepackage{enumerate}
\usepackage{thm-restate}
\usepackage{placeins}
\usepackage{bm}   
\graphicspath{figures}
\usepackage{subcaption}
\usepackage{pgfplots} 
\pgfplotsset{compat=newest, compat/show suggested version=false} 
\usepackage{xtab}
\usepackage{xspace}
\usepackage{tabu}

\usepackage[capitalise, noabbrev]{cleveref}

\usepackage{todonotes}
\hypersetup{
    colorlinks = true,
    linkcolor = lime!50!black,
    anchorcolor = lime!50!black,
    citecolor = red!70!black,
    filecolor = lime!50!black,
    urlcolor = lime!50!black,
    breaklinks=true
}

\definecolor{cbpink}{RGB}{214,130,211}
\definecolor{cbyellow}{RGB}{241,131,108}
\definecolor{cbblue}{RGB}{110,148,189}
\definecolor{cbred}{RGB}{162,4,162}
\definecolor{cborange}{RGB}{251,145,10}
\definecolor{cbgreen}{RGB}{5,162,162}
\definecolor{cbcyan}{RGB}{204,234,207}
\definecolor{red}{RGB}{232, 165, 165}
\definecolor{darkRed}{RGB}{223, 130, 130}
\definecolor{green}{RGB}{216, 248, 216}
\definecolor{darkGreen}{RGB}{145, 235, 145}
\definecolor{purple}{RGB}{180, 105, 255}
\definecolor{yellow}{RGB}{255, 255, 130}

\theoremstyle{definition}
\newtheorem{theorem}{Theorem}[section]
\newtheorem*{theorem*}{Theorem}

\newtheorem{proposition}[theorem]{Proposition}

\newtheorem{definition}[theorem]{Definition}
\newtheorem{remark}[theorem]{Remark}

\newtheorem{example*}[theorem]{Example*}
\newtheorem{examples*}[theorem]{Examples*}

\def\bR{\begin{color}{red}}
\def\bB{\begin{color}{blue}}
\def\bG{\begin{color}{green}}
\def\bP{\begin{color}{purple}}
        \def\e{\end{color}}

\tikzstyle{Z dot}=[inner sep=0mm, minimum size=2mm, shape=circle, draw=black, fill=green, outer sep=-0.5mm, tikzit fill=green]
\tikzstyle{Z phase dot}=[draw=black, fill=green, shape=rectangle, minimum size=4.5mm, rounded corners=1.8mm, inner sep=0.5mm, outer sep=-0.5mm, scale=0.8, tikzit shape=circle, font={\footnotesize\boldmath}, tikzit fill=green]
\tikzstyle{X dot}=[shape=circle, draw=black, fill=red, inner sep=0 mm, minimum size=2 mm, outer sep=-0.5mm, tikzit fill=red]
\tikzstyle{X phase dot}=[Z phase dot, draw=black, fill=red, tikzit fill=red]
\tikzstyle{H box}=[fill=yellow, draw=black, shape=rectangle, inner sep=0.6mm, minimum height=1.5mm, minimum width=1.5mm, tikzit fill=yellow, font={\footnotesize\boldmath}]
\tikzstyle{box}=[draw=black, shape=rectangle, fill=white, minimum size=1em, inner sep=0.2em, scale=0.85, font={\scriptsize}, outer sep=-0.5mm]
\tikzstyle{black dot}=[fill=black, draw=black, shape=circle, inner sep=1pt]
\tikzstyle{sLabel}=[font={\scriptsize}, tikzit draw=black, auto]
\tikzstyle{H X Web}=[hadamard, preaction={ultra thick, draw=zx_x_web, opacity=0.4}]
\tikzstyle{H Z Web}=[hadamard, preaction={ultra thick, draw=zx_z_web, opacity=0.4}]
\tikzstyle{H XZ Web}=[hadamard, preaction={thick, draw=zx_x_web, opacity=0.4, offset=-.67pt}, preaction={thick, draw=zx_z_web, opacity=0.4, offset=.67pt}]
\tikzstyle{H ZX Web}=[hadamard, preaction={thick, draw=zx_x_web, opacity=0.4, offset=.67pt}, preaction={thick, draw=zx_z_web, opacity=0.4, offset=-.67pt}]
\tikzstyle{fault-location}=[fill=white, draw=black, shape=circle, minimum size=2mm, inner sep=0mm, outer sep=-0.5 mm, regular polygon, regular polygon sides=8, font={\tiny}]
\tikzstyle{not}=[draw=black, circle, addcross, minimum size=2mm, outer sep=-0.5mm, inner sep=0mm]

\tikzstyle{fault-free}=[-, draw={zx_fault_free}, line width=1pt, tikzit draw=magenta]
\tikzstyle{component}=[-, style=dashed, draw={zx_component}]
\tikzstyle{dotted}=[-, style=dashed, draw={zx_dotted}]
\tikzstyle{hadamard}=[-, style=dashed, draw=blue]
\tikzstyle{X Web}=[-, preaction={ultra thick, draw=zx_x_web, opacity=0.4}, tikzit draw=red]
\tikzstyle{Z Web}=[-, preaction={ultra thick, draw=zx_z_web, opacity=0.4}, tikzit draw=green]
\tikzstyle{XZ Web}=[-, preaction={thick, draw=zx_x_web, opacity=0.4, offset=-.6pt}, preaction={thick, draw=zx_z_web, opacity=0.4, offset=.6pt}]
\tikzstyle{ZX Web}=[-, preaction={thick, draw=zx_x_web, opacity=0.4, offset=.6pt}, preaction={thick, draw=zx_z_web, opacity=0.4, offset=-.6pt}]
\tikzstyle{X to Z Web}=[-, decoration={show path construction,
    lineto code={
      \draw [H X Web] (\tikzinputsegmentfirst) --
         ($(\tikzinputsegmentfirst)!0.5!(\tikzinputsegmentlast)$);
      \draw [H Z Web] ($(\tikzinputsegmentfirst)!0.5!(\tikzinputsegmentlast)$)
        -- (\tikzinputsegmentlast);
    },
  }, decorate]
\tikzstyle{XZ to XZ Web}=[-, decoration={show path construction,
    lineto code={
      \draw [H XZ Web] (\tikzinputsegmentfirst) --
         ($(\tikzinputsegmentfirst)!0.5!(\tikzinputsegmentlast)$);
      \draw [H ZX Web] ($(\tikzinputsegmentfirst)!0.5!(\tikzinputsegmentlast)$)
        -- (\tikzinputsegmentlast);
    },
  }, decorate]
\tikzstyle{braceedge}=[-, decorate, decoration={brace, amplitude=2mm, raise=-1mm}]
\tikzstyle{arrow}=[->]
\tikzstyle{logical}=[-, draw=blue]
\tikzstyle{ebit}=[-, style={double=blue, dashed}, tikzit draw=blue]

\input{preamble/floquet.tikzdefs}

\numberwithin{equation}{section}

\AtBeginDocument{%
  \DeclareFieldFormat{doi}{%
    doi\addcolon\space
    \ifhyperref
      {\href{https://doi.org/#1}{\nolinkurl{#1}}}
      {\nolinkurl{#1}}%
  }%
  \setcounter{biburllcpenalty}{100}
  \setcounter{biburlucpenalty}{100}
  \setcounter{biburlnumpenalty}{100}
}

\definecolor{darkG}{rgb}{0.,0.5,0.}

\newcommand\extrafootertext[1]{%
    \bgroup
    \renewcommand\thefootnote{\fnsymbol{footnote}}%
    \renewcommand\thempfootnote{\fnsymbol{mpfootnote}}%
    \footnotetext[0]{#1}%
    \egroup
}

\newcommand{\vone}{sequential repeat }

\newcommand{\vtwo}{inner parallel repeat }

\newcommand{\ed}{QPC distillation }

\newcommand{\dout}{d_\text{outer}}
\newcommand{\din}{d_\text{inner}}

\definecolor{purp}{rgb}{138,0.,196}
\definecolor{darkR}{rgb}{0.5,0.,0.}
\definecolor{darkB}{rgb}{0,0.,0.5}

\usepackage[normalem]{ulem}

\title{Resource-adaptive distributed fault tolerance with very noisy Bell pairs}

\author[1]{Moritz Schmidt}
\email{moritz.schmidt@uni-bremen.de}
\author[2]{Martin Moureau}
\author[3]{Benjamin Rodatz}
\author[3]{Boldizsár Poór}
\author[4]{Elie Mounzer}
\author[5]{Linnea Grans-Samuelsson}
\affil[1]{University of Bremen, Department of Mathematics and Computer Science, Bremen, Germany}
\affil[2]{École Polytechnique Fédérale de Lausanne (EPFL), Department of Physics, Lausanne, Switzerland}
\affil[3]{University of Oxford, Department of Computer Science, Oxford, United Kingdom}
\affil[4]{German Research Center for Artificial Intelligence GmbH,  Bremen, Germany}
\affil[5]{University of Oxford, Department of Theoretical Physics, Oxford, United Kingdom}
\date{}

\begin{document}
\maketitle

\begin{abstract}
    Distributed architectures have been proposed as a pathway to large-scale quantum computers.
Combined with the need for fault-tolerance, such architectures require distributed quantum error correction (DQEC) and distributed logical gates.
An important challenge is how to realize DQEC primitives in the setting where interaction between modules is restricted to shared Bell pairs that are significantly noisier than on-chip operations.
We extend the work in~\parencite{rodatzFaultToleranceConstruction2025} on fault tolerance by construction to this setting, deriving different strategies for handling the additional noise.
Through \emph{fault-improvement} we recover conventional entanglement distillation, and also find more dynamical protocols that enable space-time trade-offs.
We show that integrated decoding halves the required distillation code distance compared to entanglement distillation implemented using separate decoding, thus requiring significantly fewer Bell pairs. 
As a main focus of the work, we synthesize efficient circuits for an important primitive in distributed fault tolerance: distributed stabilizer measurements.
These circuits can be adapted to resource constraints, e.g. on the Bell pair generation rate or the space available for on-chip auxiliary qubits.
Noting that full local fault-tolerance is not always needed to preserve the correct scaling of logical error rates, we further optimize the circuits depending on the surrounding context.
We consider in particular the surface code and the color code, both as distributed memories and in the case of lattice surgery across separate modules.
Here, robustness to certain hook and readout errors reduces the number of required Bell pairs even further, compared to the context-free setting.
We numerically benchmark the resulting implementations under circuit level noise with additional interconnect noise.

\end{abstract}

\section{Introduction}

Distributed architectures, also referred to as modular architectures, provide a route towards large-scale fault-tolerant quantum computation in which several comparatively small quantum processing units (QPUs, or modules) are connected by long-range couplers or photonic interconnects.
Non-local operations can then be performed across the modules using entangled states shared between them.

At the level of fault-tolerant circuits, multiple distributed primitives have been proposed.
For instance, non-local logical operations can be realized by transversal CNOTs applied through shared Bell pairs~\parencite{stackTransversalFaultTolerant2026}, or by using logical Bell pairs generated in the same code as logical qubits stored on the modules~\parencite{sunamiEntanglementBoostingLowVolume2026}.
In this work we develop general methods that apply to different approaches, and use them to derive fault-tolerant distributed stabilizer measurements. This is a versatile primitive that can be  used to perform lattice surgery between logical qubits hosted in quantum error correcting codes implemented on different modules, and that can also be used to implement distributed quantum error correcting codes. The latter application opens the door for the implementation of high-rate codes in hardware with boundary-connected modules that each have 2D qubit connectivity: an example that has seen recent interest is hyperbolic surface codes and hyperbolic color codes, where the faces of a hyperbolic lattice map to different modules and where interconnects allow for the realization of a hyperbolic geometry~\parencite{higgottConstructionsPerformance2024,higgottHyperbolic2026}.

A defining feature of distributed architectures is the clear contrast between intra- and inter-QPU operations in both fidelity and speed.
Local two-qubit gates have reached infidelities below $10^{-4}$~\parencite{hughesTrappedionTwoqubitGates2025}, while remotely generated Bell pairs typically have infidelities at the level of single-digit percentages~\parencite{mainDistributedQuantumComputing2025,sahaHighfidelityRemoteEntanglement2025} and are often produced probabilistically at rates far below the on-chip gate speed. While the exact comparison depends on the specific implementation, the differences can span several orders of magnitude.
Distributed fault tolerance must jointly account for higher interconnect noise, limited Bell-pair generation rates, and available space for auxiliary qubits, as discussed in e.g.~\parencite{yuTamingSpacetimeOverhead2026}.

Existing approaches have primarily addressed this challenge at one of two levels.
Works at the ``code level'', meaning the level of the design or evaluation of distributed quantum error-correcting codes, treat inter-QPU gates similar to intra-QPU gates apart from their noise levels, and focus on the consequences of different noise level assumptions.
These assumptions might include that the inter-QPU gates have already been improved to a sufficient level via entanglement distillation, without considering the details of the distillation. Recent work at the code level has focused on specific code families, such as using hyperbolic Floquet codes to reduce the weight and Bell-pair cost of distributed measurements~\parencite{sutcliffeDistributedQuantumError2025}, and partitioning color codes across multiple QPUs with boundary stabilizers measured using nonlocal CNOT gates~\parencite{chandraDistributedRealizationColor2025}.

At the other end, works at the ``distillation level'' consider how multiple noisy Bell pairs can be used to prepare fewer, higher-fidelity  Bell pairs or GHZ states, without considering the intended usage of the improved Bell pairs or GHZ states.
Recent works at the distillation level include using quantum-LDPC codes with iterative decoders to make GHZ-state distillation more scalable and fault-tolerant~\parencite{rengaswamyEntanglementPurificationQuantum2024}, and developing
constant-rate Bell-pair distillation using concatenated error-detecting codes~\parencite{pattisonConstantRateEntanglementDistillation2025} and high-rate quantum-LDPC codes~\parencite{bonillaataidesConstantOverheadFaultTolerantBellPair2025}.

In works at the code level or the distillation level, the handling of interconnect noise through distillation and the implementation of distributed quantum error correction are treated as two separate processes. While this separation of the two processes is conceptually convenient,
optimizing each process independently can miss out on further optimizations that can be done in an integrated setting.
As an example, a treatment that integrates distillation into the larger context
allows for integrated decoding of measurement outcomes originating both from the distillation and the syndrome extraction of the distributed code, making it possible to more accurately identify errors. It also allows for more flexible use of the physical Bell pairs.
Recent works have started to consider both processes together within specific contexts, focusing on particular codes, operations, or noise mechanisms.
\cite{liuRemoteEntanglementLattice2026} analyzes the resource cost of direct and distillation-assisted lattice surgery compared to adjustments in circuit distance in a joint setting, but does not propose an integrated solution,
\parencite{haugLatticeSurgeryBell2025} uses Bell measurements to combine intermodule entanglement with syndrome extraction, and \parencite{vakninFaulttolerantDistributedQuantum2026}
exploits biased interconnect noise to suppress dominant Bell-pair faults through repeated syndrome measurements~\parencite{vakninFaulttolerantDistributedQuantum2026}.
However, a general and systematic approach to integrating the handling of interconnect noise within a larger fault-tolerant context is still lacking.

In the present work, we provide a new approach for systematically reasoning about the synthesis of syndrome extraction and lattice surgery circuits for distributed systems, and more broadly the synthesis of distributed fault-tolerant  circuits.
We build on the correct-by-construction circuit synthesis procedure proposed by~\parencite{rodatzFaultToleranceConstruction2025} and extend it to account for subcircuits that are substantially noisier than the rest of the circuit, such as the links between computing modules.
While our main focus is on distributed stabilizer measurements, this general framework makes the underlying methods extendable to other primitives such as distributed transversal CNOT gates.

The perspective proposed by~\cite{rodatzFaultToleranceConstruction2025} represents a \emph{relational} approach:
rather than asking whether a circuit is fault-tolerant, asking whether two circuits are ``equally fault-tolerant''. This establishes fault tolerance as an invariant that can be preserved under rewriting.
In this relational approach, there are different paths to constructing a fault-tolerant implementation of a given computation.
In the original work, the authors propose starting with an idealised \emph{specification} of the desired computation and its behavior under noise.
While such specifications are often easy to define, they are not usually expressed in terms of the noisy gates available on a quantum computer.
Fault-tolerant circuit synthesis then consists of transforming this specification, using local, provably sound rewrites, into an implementable circuit while preserving the original specification's fault tolerance.
In this work, we also emphasize the utility of a second path: taking an existing \emph{circuit} as a starting point.
In particular, one can start with a circuit known to perform well in the monolithic setting and adapt it to the distributed setting, again using local, provably sound rewrites that preserve fault tolerance despite the higher interconnect noise rate.
This second approach has the advantage of directly incorporating optimizations from the monolithic setting, such as resource savings from tolerance to certain hook errors within a certain global context. In the context of distillation and distributed fault tolerance, we consider \emph{weighted} fault equivalence, defined in~\cite{ruschCompletenessFaultEquivalence2025}, to account for different noise levels.

By applying the approach of fault-tolerance by construction to the construction of distributed stabilizer measurement circuits, we show that we can lower the resource requirements compared to conventional approaches where the distillation is treated as a separate process.
We provide a general method to construct distributed weight-$N$ stabilizer measurements, and focus in detail on weight-4 and weight-6 measurements both in the context-free case and in the context of the color code and surface code.
The relational nature of the synthesis framework enables fault tolerance requirements under monolithic circuit-level noise to be translated into requirements on the distributed circuit.

The synthesized circuits can be adapted to the surrounding context of a given quantum error correcting code or lattice surgery operation, and to the error rate $p_\text{Bell}$ of the physical Bell pairs relative to the on-chip error rate $p$.
In particular, we provide circuits that leverage known
tolerance to certain hook errors in order to further reduce resource costs in the context of distributed memory and inter-QPU lattice surgery for both the surface code and the color code.
In the example of a distributed rotated surface code memory at an interconnect error rate $p_\text{Bell} = p^{1/3}$,
we reduce the number of necessary Bell pairs by $75$\% compared to a context-free approach. (For physical error rates $p\in [10^{-4}, 10^{-3}]$ the interconnect noise in this example corresponds to $p_\text{Bell} \in [1\%, 5\%]$.) This reduction in required Bell pairs is \emph{on top of} the reduction that is achieved by integrated decoding, which halves the required distance of the distillation code compared to when distillation is treated separately.

Beyond reducing overall resource costs, practical applications require consideration of specific hardware constraints. The number of auxiliary qubits that can be used locally to facilitate a distributed circuit will be limited in practice, as will the rate of Bell pair generation.
The synthesis framework of this paper enables flexible space-time tradeoffs that allow for adaptations of distributed implementations to a given hardware.
As an example of this flexibility we transform a circuit with a larger spacelike footprint into a circuit with a larger timelike footprint. The first circuit uses the quantum parity code for entanglement distillation and requires all Bell pairs involved in the distillation to be processed in parallel, while the second circuit processes the Bell pairs sequentially in time.
We also show how to interpolate between the extremes of fully parallel and fully sequential execution, in order to construct intermediate circuits with partial parallelization.

We benchmark the circuits that we construct through Monte Carlo simulations under the SI1000 noise model, with additional interconnect noise. We simulate distributed weight-4 and weight-6 stabilizer measurements, both in the context-free case and for memory and lattice surgery in the surface code and color code. Across all derived approaches, the simulations demonstrate that the circuits have the expected asymptotic behavior, showing that the reductions in resource requirements do not come at the cost of reducing the circuit distance. As an example result that is particularly relevant for near term systems, for distributed lattice surgery in the surface code we find a threshold of $p~10^{-3}$ under tight hardware constraints and with $p_\text{Bell} = p^{1/2} \sim 0.03$. These noise levels are within reach of current hardware, supporting the feasibility of distributed quantum error correction.

The paper is structured as follows:
In \autoref{sec:prelim} we describe the assumptions we make about the interconnects and define the weighted adversarial noise model. We also provide a brief introduction to ZX calculus and fault-equivalent rewrites.
In \autoref{sec:fault_improvement} we introduce the fundamental building blocks that we use to derive fault-tolerant distributed circuits: fault-improvement and distribution edges.
In \autoref{sec:FT_dist_stab_meas}, we derive distributed stabilizer measurement circuits that ensure context-free fault tolerance, with a primary focus on weight-4 and weight-6 stabilizers.
In \autoref{sec:Context_aware} we describe how fault tolerance requirements can be reduced through consideration of a surrounding global context and through integrated decoding.
In \autoref{sec:surface_code} and \autoref{sec:color_code} we derive context-aware distributed circuits for memory and lattice surgery in the rotated surface code and the triangular color code, respectively.
Finally, in \autoref{sec:numerics} we present the results of our numerical benchmarks of the derived circuits.

\section{Preliminaries}
\label{sec:prelim}

In this section, we present the assumptions about distributed quantum computing setups that are used in this work, and how noise is modeled inside the modules and interconnects of such systems.
We then give a brief introduction to the prominent language for building the results of this work: ZX-calculus.
Finally, we summarize the main concepts of fault equivalence, which we use extensively in this work to reason about the behavior of circuits under noise.

\subsection{Setting and Noise Model}

A distributed quantum computing system is defined by multiple independent QPU modules that can interact through long-range interconnects, as shown in \autoref{fig:noise_model_illustration}.
Each module contains qubits dedicated to computation, and we assume interconnects that share photonic Bell pairs between the connected QPUs.
Communication qubits sit at the boundary of the modules; these function like compute qubits, but can also receive Bell pairs through transduction from photon to on-chip qubit technology.
We consider a setting in which the process of Bell pair generation and subsequent transduction has a higher error rate than on-chip operations. The process is treated as a black box, and our focus is on modifications to on-chip circuits to account for the higher error rate.\footnote{In particular, we do not treat photonic entanglement distillation within the present work; the Bell pairs may be pre-distilled through such methods before they are transferred onto the on-chip qubits, in which case the main source of additional noise could be the transduction itself rather than the Bell pair generation and transmission.}

More specifically, the noise model we consider in simulations is circuit level noise of strength $p$ for on-chip operations, and noise of strength $p^{1/\ebitnoise}$ across the entire process of Bell-pair generation and transduction to on-chip qubits, similar to~\cite{victoraEntanglementPurificationQuantum2023,bonillaataidesConstantOverheadFaultTolerantBellPair2025}. This is illustrated in Fig.~\ref{fig:noise_model_illustration}.
In the later sections of the present work, we will take $p^{1/3}$ as the go-to example. This corresponds to interconnect error rates in the range of $\left[0.05, 0.1\right]$ for an on-chip error rate in the range of $\left[10^{-4}, 10^{-3}\right]$. Given recent experimental demonstrations of long-range interconnects, these numbers are already on the conservative side~\parencite{qiu2025, vezvaee2026, niu2023}.

We refer to a noise model in which different parts of the circuit have distinct fault weights as \emph{weighted adversarial noise}. The weights correspond to the exponent $w$ of $p^w$. While this noise model is rather simplified, it captures the challenge of different noise levels and allows notions of fault-tolerance and fault-equivalent circuit rewrites to be conveniently extended from the case of same-weight adversarial noise. 
(While different noise levels could also be modeled through different prefactors, $p$ vs $\alpha p$, it is more cumbersome to track and update prefactors during ZX rewrites than to track and update weights.) 
In numerical simulations, we consider a slightly less simplified noise model based on the SI1000 model, again including additional interconnect noise. 

In addition to assuming that the interconnects have higher noise levels, we assume that the Bell pair generation rate may be limited and that constraints on space and connectivity may limit how many auxiliary on-chip qubits can be used to mediate a distributed stabilizer measurement.
For this reason, we are particularly interested in stabilizer measurement circuits for memory and lattice surgery that trade space for time, thereby relaxing the requirements on the number of auxiliary qubits per distributed stabilizer.

\begin{figure}
    \begin{center}
        \includegraphics[width=.92\textwidth]{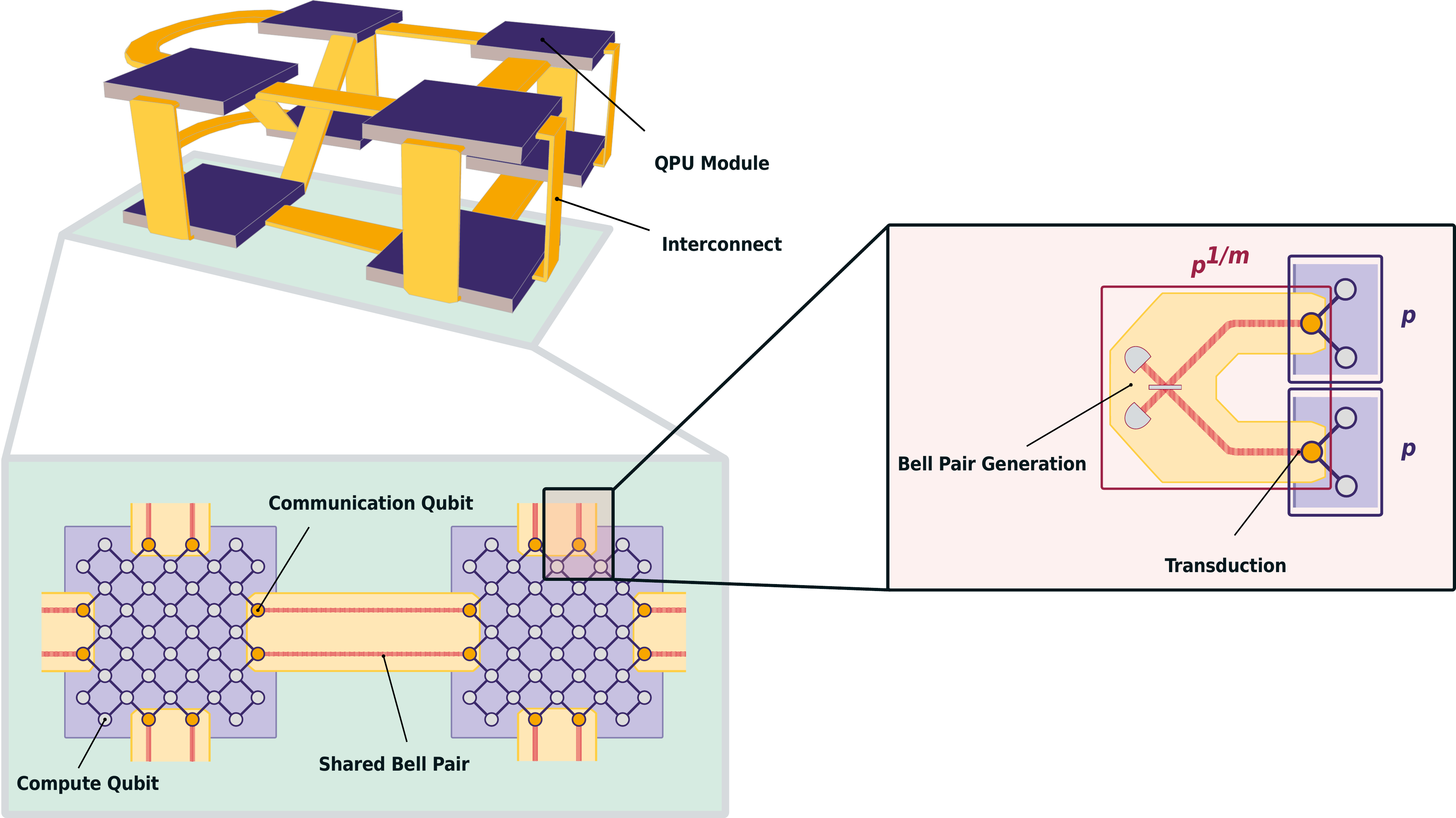}
    \end{center}
    \caption{Illustration of the setting and noise model considered. 
    Multiple QPU modules are connected via interconnects. 
    Each module contains dedicated compute qubits and communication qubits, with the latter used for sharing Bell pairs between QPUs. Following Bell pair generation and transduction, two-qubit Pauli errors $\{I,X,Y,Z\}^{\otimes 2}  \setminus \{I \otimes I\}$ on the on-chip qubits occur with probability $\frac{1}{15}p^{1/\ebitnoise}$ each, while on-chip gates are subjected to circuit level noise with probability $p$. We consider different integer values $\ebitnoise > 1$, so that Bell pairs experience noise at a higher rate than on-chip operations. 
The example setup for eight boundary-connected modules is taken from~\cite{higgottHyperbolic2026}, where it was shown that this setup allows for the implementation of a hyperbolic color code on a tiling of the Bolza surface. (See also~\cite{higgottConstructionsPerformance2024} for earlier work on codes on the Bolza surface and on the use of long-range inter-module connections to realize the geometry.) 
    }\label{fig:noise_model_illustration}
\end{figure}

\begin{figure}[!htb]
    \centering
    \[ \tikzfig[scale=0.4]{02_prelims/TP_LS} \]
    \caption{Logical state teleportation using lattice surgery.}
    \label{fig:tp_ls}
\end{figure}

In a distributed setting of multiple QPUs, there are two scenarios that require interaction between QPUs: (1) If the number of qubits per QPU is heavily constrained and the size of the patch of a quantum error correcting code exceeds the available space, the patch can be distributed over multiple QPUs.
Alternatively, if the on-chip connectivity is not geometrically suited for a desired code, interconnects can be used to realize a desired geometry, as in the example of hyperbolic surface codes and color codes.  
Whether motivated by qubit count or geometry, distributed codes require continuous stabilizer checks via distributed plaquettes along the seam between QPUs throughout the lifetime of the stored qubit. 
 (2) For larger architectures where logical qubits are stored locally, inter-QPU operations are only necessary whenever a multi-qubit interaction between logical qubits stored on separate QPUs needs to be executed. In this scenario, \emph{lattice surgery}
can be used to entangle logical qubits on different QPUs, or to teleport logical qubits between QPUs as illustrated in \autoref{fig:tp_ls}. Lattice surgery is particularly useful in settings with 2D QPUs that are connected only along their boundaries.

\subsection{ZX Calculus}
\label{sec:zx-intro}

When constructing distributed stabilizer measurement circuits, we will use the language of \emph{ZX-calculus}~\parencite{coeckeInteractingQuantumObservables2008}, which is a diagrammatic language for representing and reasoning about quantum circuits.
ZX diagrams correspond to particular types of tensor networks, representing arbitrary linear maps between qubits.
We have:
\begin{definition}
    A ZX diagram consists of:
    \begin{itemize}
        \item A graph $G = (V, E)$
        \item A phase assignment $\alpha: V \to [0,2\pi)$
        \item A type assignment $t : V \to \{X, Z, In(put), Out(put)\}$
    \end{itemize}
    Boundary vertices, i.e.\@ input and output vertices, are restricted to have degree one.
\end{definition}
ZX diagrams consist of the green $Z$ vertices and the red $X$ vertices, respectively representing matrices in the $Z$ and $X$ basis, along with boundary vertices.
We often refer to $X$ and $Z$ vertices as ``spiders'' and their edges as ``legs'':
\begin{align*}
    \textit{Z-spider:} \qquad
    \tikzfig{02_prelims/z-spider}
    \quad & \coloneqq \quad
    \ket{0}^{\otimes n}\! \bra{0}^{\otimes m} + e^{i \alpha} \ket{1}^{\otimes n}\! \bra{1}^{\otimes m} \\[8pt]
    \textit{X-spider:} \qquad
    \tikzfig{02_prelims/x-spider}
    \quad & \coloneqq \quad
    \ket{+}^{\otimes n}\! \bra{+}^{\otimes m} + e^{i \alpha} \ket{-}^{\otimes n}\! \bra{-}^{\otimes m}
\end{align*}

As ZX diagrams represent linear maps, we can translate quantum gates into corresponding ZX diagrams: \\
\begin{minipage}{.32\textwidth}
    \begin{align*}
        \ket{0}
        \quad = & \quad
        \tikzfig{02_prelims/x-state-zero} \\[8pt]
        \ket{1}
        \quad = & \quad
        \tikzfig{02_prelims/x-state-one}  \\[8pt]
        \ket{+}
        \quad = & \quad
        \tikzfig{02_prelims/z-state-zero} \\[8pt]
        \ket{-}
        \quad = & \quad
        \tikzfig{02_prelims/z-state-one}  \\[8pt]
        \tikzfig{02_prelims/not}
        \quad = & \quad
        \tikzfig{02_prelims/not-zx}       \\[8pt]
        \tikzfig{02_prelims/y}
        \quad = & \quad
        \tikzfig{02_prelims/y-zx}
    \end{align*}
\end{minipage}
\begin{minipage}{.32\textwidth}
    \begin{align*}
        \tikzfig{02_prelims/s}
        \quad =                             & \quad
        \tikzfig{02_prelims/s-zx}                                                  \\[8pt]
        \tikzfig{02_prelims/cnot}
        \quad =                             & \quad
        \tikzfig{02_prelims/cnot-zx}                                               \\[8pt]
        \tikzfig{02_prelims/cz}
        \quad =                             & \quad
        \tikzfig{02_prelims/cz-zx}                                                 \\[8pt]
        \tikzfig{02_prelims/z-meas} \quad = & \quad \tikzfig{02_prelims/z-meas-zx} \\[8pt]
        \tikzfig{02_prelims/x-meas} \quad = & \quad \tikzfig{02_prelims/x-meas-zx}
    \end{align*}
\end{minipage}
\begin{minipage}{.32\textwidth}
    \begin{align*}
        \tikzfig{02_prelims/cat-state-4}
        \quad = & \quad
        \tikzfig{02_prelims/cat-state-4-zx} \\[8pt]
        \tikzfig{02_prelims/z-dots-z-parity-check}
        \quad = & \quad
        \tikzfig{02_prelims/z-dots-z-parity-check-zx}
    \end{align*}
\end{minipage}
\\

Beyond representing linear maps, ZX diagrams can also be used for reasoning.
In particular, the ZX calculus is equipped with a set of rewrites that allow for manipulating diagrams while preserving the underlying linear map the diagram represents:
\[\tikzfig{02_prelims/axioms}\]
For more details on the ZX-calculus, see \textcite{vandeweteringZXcalculusWorkingQuantum2020}.

In this work, we will restrict ourselves to the \emph{Pauli fragment} of the ZX calculus by only allowing phases to $\{0,\pi\}$.
It can represent any circuit consisting of Pauli gates, CNOTs, state preparations and measurements~\cite{kissinger2022}.
By convention, spiders drawn without any annotation have a phase of 0.
The phase of a (multi-qubit) measurement is determined by its binary result $k \in \{0,1\}$. For simplicity, we only annotate these spiders with $k$ instead of $k \pi$:

\begin{equation*}
    \tikzfig{02_prelims/z-meas} \quad =  \quad \tikzfig{02_prelims/z-meas-k}\hspace{3cm}
    \tikzfig{02_prelims/x-meas} \quad =  \quad \tikzfig{02_prelims/x-meas-k}
\end{equation*}

\subsection{Fault Equivalence}
ZX-calculus allows us to manipulate quantum circuits while preserving the underlying linear map that they represent.
However, two circuits that implement the same linear map may still behave very differently under noise.
As such, when considering quantum circuits under noise, we not only need to preserve the computation the circuit implements, but also its behaviour under noise.
This stricter notion of equivalence is called \emph{fault equivalence}~\parencite{rodatzFaultToleranceConstruction2025,rodatzFloquetifyingStabiliserCodes2024}.

First we need to define noise on both ZX diagrams and circuits. For the latter, we will focus on Clifford circuits.
By defining noise on tensor networks, the definition naturally extends to both settings.

\begin{definition}[Faults in space-time]
    Let $D$ be a Clifford tensor network with edges $E$.
    A fault $F$ is a Pauli acting on these edges, i.e. an element in $\mathcal{P}^{|E|}$, with $\mathcal{P}$ the one-qubit Pauli group, indicating the action of $F$ on each edge of $D$. We denote $D^F$ for the tensor network we get when placing the Pauli rotations indicated by $F$ on each edge in $D$.
\end{definition}

For Clifford circuits, the locations where errors can occur correspond to the qubits at the time steps in between the gates.
As we consider faults all over the circuit, we consider faults in space-time, accounting for the interaction of faults at different time steps.

Next we consider the likelihood of faults.
Intuitively, we should consider faults that act on all fault locations as less likely than faults acting on only one fault location.
More formally, we model this by assigning faults weights:
\begin{definition}[Noise model]
    Let $D$ be a Clifford tensor network with edges $E$.
    Then an atomic noise function $\mathcal{A}: \mathcal{P}^{|E|} \to \mathbb{R}^+ \cup \{+\infty\}$ assigns each fault an atomic weight.
    This induces a noise model $\mathcal{N}_{\mathcal{A}}: \mathcal{P}^{|E|} \to \mathbb{R}^+ \cup \{+\infty\}$ defined as:
    \[
        \mathcal{N}_{\mathcal{A}}(F) \;=\; \min_{\mathcal{A}_F \subseteq \mathcal{P}^{|E|} \text{ such that } \prod_{F' \in \mathcal{A}_F} F' = F}  \left(
        \sum_{F' \in \mathcal{A}_F} \mathcal{A}(F)
        \right)
    \]
\end{definition}

The atomic weight function defines the independent error mechanisms, such as gate errors, measurement errors or qubit flips.
Each of the atomic faults gets a weight indicating its likelihood, with larger weight faults being considered less likely.
For example, we could imagine a noise model, where measurement errors have weight $1$ and idling errors have weight $1.5$.
A fault having weight $w$ corresponds to the probability of that fault scaling as $p^w$ for some physical error rate $p$.
The weights need not be integers; they can naturally be extended to non-negative reals. We will use fractional weights in the present work.  

The independent atomic errors each may or may not happen in a given execution of the circuit, and combinations of atomic faults create specific space-time errors when they occur.  
The weight of a space-time error is determined from the noise model by iterating over all combinations of atomic faults that result in the error and picking the lowest-weight one.
We can view the weight as the cost an adversary has to pay when trying to create a specific fault: to create a large combination of atomic faults, the adversary can pick the cheapest equivalent combination. 
In this framework, the circuit distance $d$ of a circuit-level implementation of an error correcting code is the minimum weight of an undetectable logical fault.
Since we do not restrict weights to integers, circuit distances can be fractional.

Within the general noise model definition we can consider multiple noise models, including circuit-level noise and SI$1000$-like models. 
In particular, we will consider two different families of noise models.
We have:
\begin{definition}[Circuit-level noise]
    Let $C$ be a quantum circuit, then the atomic noise function for circuit-level noise assigns all faults a weight of $+\infty$, except for the following:
    \begin{itemize}
        \item For each wire:
              \begin{itemize}
                  \item \textbf{Qubit flips:} A fault acting only on that wire.
              \end{itemize}
        \item For each gate (including Pauli measurements):
              \begin{itemize}
                  \item \textbf{Gate faults:} Faults acting on any subset of the output wires of the gate.
              \end{itemize}
        \item For each Pauli measurement:
              \begin{itemize}
                  \item \textbf{Measurement flip:} An anticommuting Pauli applied immediately before and after the measurement.
                  \item \textbf{Measurement flip + gate faults:} A measurement flip and a fault acting on any subset of the outputs of the measurement.
              \end{itemize}
    \end{itemize}
\end{definition}
In the most simple model, any non-$\infty$ atomic faults get a weight of $1$.
However, at times, we might give them different weights.
For example, we might give qubit flips on links between modules lower weights, indicating that they are more likely.

For ZX diagrams, on the other hand, we do not generally have a corresponding notion of gates with inputs and outputs.
Therefore, we define a different noise model on ZX diagrams:
\begin{definition}[Edge-flip noise]
    Let $D$ be a ZX diagram, then the atomic noise function for circuit-level noise assigns all faults a weight of $+\infty$, except for faults acting non-trivially on exactly one edge.
\end{definition}

In both cases, we can assign more complicated expressions for the likelihood of any given fault. As an example, we may keep a single variable $p$ and describe the relative noise level of each atomic fault by two parameters $\alpha,w$ to describe a fault occurring with probability $\alpha p^w$. Within the circuit-level noise family, the two-parameter subfamily contains the SI1000 noise model used in our numerical benchmarks (see \autoref{app:CLN}).

Throughout this work, we will be considering ZX diagrams under edge-flip noise and quantum circuits under circuit-level noise.
Since we consider different noise models, a direct translation between a circuit and a ZX diagram under noise is \emph{possible}, but \emph{not guaranteed}, even if their linear maps are equivalent.

Next we define:
\begin{definition}[Detectable faults]
    Let $D$ be a non-zero Clifford tensor network with edges $E$.
    A fault $F \in \mathcal{P}^{|E|}$ is \emph{detectable} if $D^F = 0$.
\end{definition}
We can view circuits and ZX diagrams as being post-selected on an expected measurement outcome, i.e. on an outcome with trivial syndrome.
If $D^F = 0$, then, according to the Born rule, the probability of observing that outcome is zero, or, in other words, we will always see a measurement outcome with a non-trivial syndrome --- the error is detected. (We emphasize that while the postselection viewpoint can be useful for derivations, the circuits that we derive do not rely on postselection, and the simulations only use postselection for examples with $d<3$.)

Finally, we can define:
\begin{definition}[Fault equivalence]
    Let $D_1, D_2$ be two tensor networks that implement the same linear map.
    Let $\mathcal{N}_{\mathcal{A}_1}, \mathcal{N}_{\mathcal{A}_2}$ be noise models on $D_1$ and $D_2$, respectively. We say $D_1$ under $\mathcal{N}_{\mathcal{A}_1}$ is $w$-fault-equivalent to $D_2$ under $\mathcal{N}_{\mathcal{A}_2}$ if and only if for all undetectable faults $F_1$ on $D_1$ with $\mathcal{N}_{\mathcal{A}_1}(F_1) \leq w$, there exists a fault $F_2$ on $D_2$ such that $D_1^{F_1} = D_2^{F_2}$ and $\mathcal{N}_{\mathcal{A}_2}(F_2) \leq \mathcal{N}_{\mathcal{A}_1}(F_1)$ and analogously with $(D_1, \mathcal{N}_{\mathcal{A}_1})$ and $(D_2, \mathcal{N}_{\mathcal{A}_2})$ interchanged.
    We write $(D_1, \mathcal{N}_{\mathcal{A}_1}) \wFaultEq{w} (D_2, \mathcal{N}_{\mathcal{A}_2})$ and omit $\mathcal{N}_{\mathcal{A}_1}, \mathcal{N}_{\mathcal{A}_2}$ when they are obvious from context.
    We say $(D_1, \mathcal{N}_{\mathcal{A}_1})$ and $(D_2, \mathcal{N}_{\mathcal{A}_2})$ are fault-equivalent, written $(D_1, \mathcal{N}_{\mathcal{A}_1}) \FaultEq (D_2, \mathcal{N}_{\mathcal{A}_2})$, when they are $\infty$-fault-equivalent.
\end{definition}

The main idea of fault equivalence is that for every fault on $D_1$ there exists a corresponding fault on $D_2$ of at most the same weight, and vice versa.
This means that the cost of creating a fault in some equivalence class on $D_1$ must be the same as the cost of creating that fault on $D_2$.

The rewrite rules shown in \autoref{sec:zx-intro} do not necessarily preserve fault equivalence. But, by restricting to the subset of \emph{fault-equivalent} rewrites we can manipulate ZX diagrams while maintaining their behaviour under edge-flip noise.
Similarly, we have to be careful when translating between ZX diagrams under edge-flip noise and quantum circuits under circuit-level noise.
In particular, the translations outlined in \autoref{sec:zx-intro} are not necessarily fault-equivalent.
However, we have:
\begin{proposition}
    \label{prop:CLN_CX}
    The following two ZX diagrams under edge-flip noise are fault-equivalent to the corresponding quantum circuits under circuit-level noise:
    \[
        \tikzfig{02_prelims/cnot-plus-state}
        \quad \FaultEq \quad
        \tikzfig{02_prelims/fault-accounting-cnot-plus-state}
        \qquad \qquad \qquad \qquad \qquad
        \tikzfig{02_prelims/cnot-measurement}
        \quad \FaultEq \quad
        \tikzfig{02_prelims/fault-accounting-cnot-measurement}
    \]
\end{proposition}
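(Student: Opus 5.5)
We prove this by directly comparing the sets of undetectable faults and their minimal weights on the two sides. The figures are not visible, so we assume the following reading. The left diagrams are the ZX representations of a CNOT acting on a $\ket{+}$ preparation (a $Z$-spider state fed into the control), and of a CNOT followed by a measurement. The right-hand circuits carry explicit fault-accounting annotations: the ZX diagram is drawn so that each of its edges corresponds to one circuit fault location.

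Both sides implement the same linear map. This follows from the translation table of \autoref{sec:zx-intro} together with spider fusion. What remains is to check the fault condition of the definition in both directions for every undetectable fault. Since all weights are sums of atomic weights and both sides are Pauli-fragment Clifford networks, it suffices to work modulo detectable faults, i.e.\ with equivalence classes of the effect $D^F$.

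\textbf{Circuit $\Rightarrow$ ZX.} For each atomic fault of circuit-level noise we exhibit an edge-flip fault of weight at most $1$ with the same effect. The atomic faults are:
\begin{itemize}
\item qubit flips on each wire;
\item gate faults on any subset of the two CNOT outputs;
\item for the second diagram, measurement flips, alone or combined with output faults.
\end{itemize}
A single-wire flip maps to the corresponding single edge. A two-output CNOT fault $P\otimes Q$ must be realised by one edge flip. We do this by pushing Paulis through the spiders: by the copy and $\pi$-commutation rules, an $X$ (resp.\ $Z$) on one leg of a $Z$ (resp.\ $X$) spider equals that Pauli on all its other legs, and a same-colour Pauli on a spider is detectable-equivalent when moved between legs. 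In the $\ket{+}$-plus-CNOT case the control spider has the state leg merged in. As a result, correlated output faults such as $X\otimes X$ are equivalent to a single $X$ on the internal edge, and a $Z$ on the target equals a $Z$ on the connecting edge. A measurement flip is equivalent to a $\pi$ on the measurement spider, which moves onto its single remaining input edge. Combinations of atomic faults then follow by subadditivity, since both noise models are defined as minima over decompositions.

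\textbf{ZX $\Rightarrow$ circuit.} Every single-edge flip on the ZX diagram has to be matched to a weight-$\le 1$ circuit fault. We enumerate the edges (a handful per diagram) and the three nontrivial Paulis on each. Internal edges correspond to CNOT gate faults on one or both outputs, or to a flip of the preparation or measurement. Boundary edges correspond to wire flips. Multi-edge faults again follow by decomposition.

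The main obstacle is the first direction for correlated two-qubit gate faults, e.g.\ $Y\otimes X$ after the CNOT. In a general CNOT these would cost weight $2$ in edge flips, so fault equivalence can only hold because the adjacent $\ket{+}$ preparation or measurement absorbs one component. In the first diagram, for example, a $Z$ on the control immediately after a $\ket{+}$ prep becomes trivial up to phase once it is pushed back. We therefore verify all $15$ output Paulis explicitly, checking that each reduces via stabiliser absorption to at most one edge flip. This check is exactly why the proposition is restricted to these two gadgets.
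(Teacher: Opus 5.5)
Your skeleton is the right one, and it is essentially the direct check behind the result the paper cites (the paper's own proof is just that citation). You match every atomic fault on one side with a weight-$\le 1$ fault of the same effect on the other, then extend to combinations by subadditivity. The gap is that the only step with real content is announced rather than carried out, and the mechanism you give for it is backwards. That step is showing that every two-qubit gate fault after the CNOT collapses to a single edge flip. A $\ket{+}$ preparation absorbs $X$, not $Z$. An $X$ on a one-legged $Z$-spider copies onto no other legs and disappears, whereas $Z\ket{+}=\ket{-}$ is orthogonal to $\ket{+}$. So in the first gadget the new trivial fault is $X$ on the control before the CNOT. Equivalently, the output fault $X\otimes X$ is a stabilizer of the gadget. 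It is not ``a single $X$ on the internal edge''; an $X$ on the connecting edge equals $X$ on just one output. Likewise, $Z$ on the connecting edge equals $Z$ on the control output, not $Z$ on the target. If you ran your 15-case check with $Z$ absorbed, you would misidentify the equivalence classes.

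Your example of the obstacle is also misplaced. $Y\otimes X$ (control $\otimes$ target) is harmless even for a bare CNOT, since it is the image of a single $Y$ on the control input. Pushing the three Paulis on each of the five edges of a bare CNOT to the outputs reaches exactly $11$ of the $15$ output Paulis. The missing ones are $X\otimes Y$, $X\otimes Z$, $Y\otimes Y$ and $Y\otimes Z$, i.e.\ those with an $X$-component on the control and a $Z$-component on the target. These are precisely what the adjacent operation repairs:
\begin{itemize}
\item With $\ket{+}$ on the control, multiplying by the trivial $X\otimes X$ turns them into $I\otimes Z$, $I\otimes Y$, $Z\otimes Z$ and $Z\otimes Y$ up to phase. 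These are single flips on the target output or target input.
\item With a measurement that fuses into the CNOT, e.g.\ $Z$ on the target, the trivial $I\otimes Z$ turns them into $X\otimes X$, $X\otimes I$, $Y\otimes X$ and $Y\otimes I$. These are $X$ or $Y$ on the control input or control output. An $X$ on the measured leg is the measurement flip itself.
\item For an $X$-measurement on the control the argument is analogous, with $X\otimes I$ trivial.
\end{itemize}
Writing out this reduction explicitly is what turns your outline into a proof.
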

\begin{proof}
    See~\textcite[Prop. 6.8 and 6.10]{rodatzFaultToleranceConstruction2025}.
\end{proof}

These two circuits and diagrams correspond to the split and merge operation, known from lattice surgery~\parencite{debeaudrapZXCalculusLanguage2020}.
As fault equivalence is compositional~\parencite[Prop. 3.12]{rodatzFaultToleranceConstruction2025}, this means that whenever we can identify a ZX diagram that consists of splits and merges, we can fault-equivalently extract a quantum circuit.
\FloatBarrier
\section{Weighted Fault Tolerance and Fault Improvement}\label{sec:fault_improvement}
The key insight underlying this work is that we can use weighted noise to account for the effects of particularly noisy components in a quantum circuit, such as the links connecting modules in a distributed quantum computer.
Our goal is to ensure that an implementation containing such noisy interconnects is fault-equivalent to an idealised system in which all circuit components operate at a common, lower physical error rate.
In other words, the effect of the noisy interconnects is suppressed to make the computation behave as though it were executed on a monolithic quantum computer.
We do this by providing stronger protection to the noisier circuit elements, compensating for their increased error rate.

For the purposes of constructing a distributed syndrome measurement circuit, we will use distributed $XX$ and $ZZ$ measurements as key building blocks. Up to classical corrections to account for the measurement outcomes, these correspond to edges joining same-color phase-free spiders. We refer to these as \emph{distribution edges}.

We consider on-chip physical errors to have an error weight of $1$, while the noisy interconnects have an error weight of ${w} = \frac{1}{\ebitnoise} < 1$, so that errors on the interconnects scale with $p^{w}$. As a notational convention, while edges normally have three weights (for $X, Y$ and $Z$ respectively), if we only write one weight, we assume all of them to be the same.
Unlabelled edges are assumed to have weight $1$.
If components or edges are idealised as fault-free, we omit a weight annotation and draw them in purple. In terms of weights, fault-free corresponds to weight $+\infty$. An example of weight annotations is shown in \autoref{fig:w_example}.
\begin{figure}[!htb]
  \[ \tikzfig[scale=0.6]{03_FT_weighted/weighted_example} \]
  \caption{Bell state preparation under different edge-flip noise models: (a) Symmetric and asymmetric noise: For edges with asymmetric noise, fault-weights are annotated as tuples $(w_x, w_y,w_z)$. Edges with symmetric noise $w_x=w_y=w_z=w$ are simply labeled with $w$. (b) Noise-free components: Edges idealized as fault-free are drawn in purple. (c) Default noise: Edges with symmetric noise $w=1$ are left unlabelled.}
  \label{fig:w_example}
\end{figure}

We define:
\begin{definition} [Fault-improved distribution edge]
  \label{def:FI-DE}
  A circuit $C$ implements a fault-improved distribution edge of $XX$ type based on noisier distribution edges with fault weights $(w,w,w)$, $w = \frac{1}{\ebitnoise}$, with an improvement $(d_X,d_Y,d_Z)$, if and only if:
  \[ \tikzfig[scale=0.5]{03_FT_weighted/FI-DE} \]
\end{definition}
\noindent
where the dashed line indicates the separation between the two modules.
The $ZZ$ type definition follows analogously.

\subsection{Distillation}
A key strategy to protect quantum information against noise is using quantum error correction.
Similarly, we can use quantum error correcting codes to protect subcircuits against increased noise levels.

First, observe:
\begin{proposition}\label{prop:distillation}
  Let $enc: \mathbb{C}^2 \to (\mathbb{C}^2)^{\otimes n}$ be the encoder of an $[\![n, 1, d]\!]$ code.
  Then we have:
  \[ \tikzfig{03_FT_weighted/ED_FT_enc} \]
\end{proposition}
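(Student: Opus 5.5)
I have to guess the content of the figure. Most likely it reads: a wire with noisy weight $w$ on each of the $n$ encoded wires between a fault-free encoder and a fault-free decoder (the adjoint encoder, with postselection on trivial syndrome) is fault-equivalent to a single wire with weight $d\cdot w$. Equivalently, $enc$ followed by $n$ parallel edges of weight $w$ followed by $enc^\dagger$ is fault-equivalent to an identity edge of weight $(d_X w, d_Y w, d_Z w)$, where $d$ is the code distance, with the encoder and decoder drawn in purple, i.e.\ idealised as fault-free.

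\textbf{Step 1: faults live only on the middle edges.} Since the encoder and decoder are fault-free, any fault $F$ on the left-hand side is a Pauli $P\in\mathcal{P}^{n}$ on the $n$ middle edges. Its weight is $w\cdot|P|$: the atomic faults are single-edge Paulis of weight $w$, so the cheapest decomposition uses one atomic fault per nontrivially acted-on edge.

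\textbf{Step 2: classify $P$.}
\begin{itemize}
\item If $P$ anticommutes with some stabilizer, then $enc^\dagger P\, enc$ projected onto the trivial syndrome is zero. Here I read the $enc^\dagger$ box as postselecting the stabilizer measurements. So $P$ is detectable and imposes no condition.
\item If $P$ lies in the stabilizer group, then $enc^\dagger P\, enc$ equals $enc^\dagger enc$, the identity, so it matches the trivial fault of weight $0$ on the right-hand side.
\item If $P$ is an undetectable nontrivial logical $\bar L$, then $enc^\dagger P\, enc = L$. Because the code has distance $d$ (resolved per Pauli type to $d_X,d_Z$, and $d_Y$ for logical $Y$), $|P|\ge d_L$ and hence $w|P|\ge d_L w$. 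The right-hand side realises $L$ with weight exactly $d_L w$.
\end{itemize}

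\textbf{Step 3: converse direction.} Each fault $L$ on the single right-hand edge must be matched by a left-hand fault of weight at most $d_L w$. Choose a minimum-weight representative of $\bar L$, which has weight exactly $d_L$ by the definition of distance. Then $D_1^{F_1}=D_2^{F_2}$ follows from the same computation $enc^\dagger \bar L\, enc = L$.

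\textbf{Main obstacle.} The delicate point is the formal treatment of the decoder. The identity $enc^\dagger P\, enc = 0$ for syndrome-flipping $P$ holds only under the postselection reading of detectability that the paper adopts. I also need to check that the noise-model minimisation on the right-hand side reproduces $d_Y w$ for $Y$ rather than $\min(d_X+d_Z, d_Y)w$; I would handle this by stating the weight tuple on the right-hand edge explicitly as $(d_Xw,d_Yw,d_Zw)$.
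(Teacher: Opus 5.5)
Your proof is correct and is essentially the paper's argument: undetectable faults on the $n$ noisy edges are logical operators of weight at least $d$, which propagate through the idealised encoder to a single outer-edge fault, after which $enc$ and $enc^\dagger$ cancel by ordinary ZX rewrites. You merge the paper's two steps into one direct computation of $enc^\dagger P\,enc$. Your per-Pauli weights $(d_Xw,d_Yw,d_Zw)$ are a genuine sharpening. With a single symmetric weight $dw$, the converse direction would need a weight-$d$ representative of every logical, which fails e.g.\ for $\bar Y$ in the Shor code, where the minimum weight is $5$. Your $Y$-minimisation check is automatic because $d_Y\le d_X+d_Z$. No postselection reading is needed either: $enc^\dagger$, as the adjoint of an isometry onto the code space, already annihilates $P\,enc$ whenever $P$ anticommutes with a stabilizer.
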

\begin{proof}
  This first step follows from the fact that the code can detect any non-trivial fault of weight less than $d$.
  Therefore, if $d$ faults of weight $w$ occur, they can create a logical operator in the code that propagates through the encoder to a fault on the outer edge.
  The second step follows from the fact that, if everything is completely idealised, we can simply perform regular (semantic) ZX rewrites.   As encoding our data and immediately decoding it again is semantically equivalent to doing nothing, we can remove the encoder and its adjoint.
\end{proof}

The above proposition states that if we can encode quantum information fault-freely into a larger code space, then the encoded information is more protected against increased noise levels. By bending the edges of the diagram, we recover the approach in~\cite{shi2025a}, where Bell pair distillation protocols are constructed from stabilizer codes\footnote{See also Craig Gidney's pedagogical summary of the parallels between error correction in quantum memories and Bell pair distillation in the stack exchange thread \url{https://quantumcomputing.stackexchange.com/questions/33661/understanding-entanglement-distillation-via-stabilizer-codes}.}:

\[\tikzfig{03_FT_weighted/distillation}\]
This formally proves that we can use $n$ noisy Bell pairs to distill a single, higher quality Bell pair.
In particular, it gives us a precise quantification of what code we should choose: if the Bell pairs have weight $w=\frac{1}{\ebitnoise}$, we should choose a code with distance $d = \ebitnoise$. In later sections, where distillation is performed in order to implement stabilizer measurements in a distributed code, we will sometimes refer to the distance of a code used for distillation as $\dBell$, to distinguish it from the distance of the distributed code itself.

\begin{remark}
  We note that we consider only codes with one logical qubit.
  This is because, in general, we have no guarantees on the weight of the logical operator that might be created from $d$ faults of weight $\frac{1}{\ebitnoise}$ among the noisy Bell pairs.
  Thus, if we have multiple logical qubits, $d$ faults of weight $\frac{1}{\ebitnoise}$ may affect multiple distilled Bell pairs, and achieving the same guarantees as above would require a code with distance $d = \ebitnoise \times k$ to compensate for this. This constitutes such a significant increase in distance that, in practice, it is easier to encode the Bell pairs individually.
  However, having stronger guarantees for the effect of minimum-weight logical errors on the distilled Bell pairs could allow for less stringent requirements and make (some) high-rate distillation codes useful.
\end{remark}

The above derivations assume that the encoder is implemented fault-freely, which is impossible in practice.
Therefore, we define:
\begin{definition}[Fault-tolerant implementation of an encoder]
  Let $enc: \mathbb{C}^2 \to (\mathbb{C}^2)^{\otimes n}$ be the encoder of an $[\![n, 1, d]\!]$ code. A fault-tolerant implementation $C$ of $enc$ is a circuit that satisfies:
  \[ \tikzfig{03_FT_weighted/state_encoding_weighted} \]

\end{definition}

The implementation of a fault-tolerant encoder depends on three characteristics: the weights of the internal errors, the set of available operations, and the code structure. More specifically, fault tolerance depends on the interplay among these three: Error weights identify which errors require extra care, the available operations set the error propagation pattern, and the code structure determines both the order of operations and the relevance of the errors.

The proposition below is a sufficient condition for implementing a fault-tolerant entanglement distillation protocol.

\begin{proposition}[On-chip distillation]\label{prop:on-chip-dist}
  Let $enc: \mathbb{C}^2 \to (\mathbb{C}^2)^{\otimes n}$ be the encoder of an $[\![n, 1, d]\!]$ code, and $C$ the circuit that implements $enc$.
  An entanglement distillation protocol using $C$ is fault-tolerant if all atomic faults in $C$ have weight $w=1$.
\end{proposition}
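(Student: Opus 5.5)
The plan is to reduce the claim to \cref{prop:distillation} together with the definition of a fault-tolerant implementation of an encoder. The guiding observation is that every atomic fault inside $C$ has weight $1$, whereas the target distilled Bell pair only needs protection up to weight $1$.

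Concretely, the protocol consists of three parts: the encoder $C$ on one module, the noisy Bell pairs of weight $w=\frac{1}{\ebitnoise}$ (with $d=\ebitnoise$), and the adjoint encoder $C^\dagger$ on the other module. The target, obtained by bending wires as in the distillation diagram following \cref{prop:distillation}, is a single distribution edge of weight $1$.

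The proof has three steps.

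\textbf{Step 1: $C$ is a fault-tolerant encoder.} Take an undetectable fault $F$ inside $C$ of total weight $<1$. Since every atomic fault has weight $1$, such an $F$ must be trivial. Every nontrivial fault therefore has weight $\geq 1$, so it is accounted for by the weight-$1$ outer edge of the idealised encoder. By the definition of fault tolerance, it suffices to match each fault of weight below the target threshold. I would argue the following:
\begin{itemize}
  \item any fault of weight exactly $1$ in $C$ (a single atomic fault) is, semantically, some Pauli on the output of the encoder;
  \item such a Pauli is either detectable by the code or equal to (logical)$\times$(stabilizer);
  \item it can therefore be pushed to a single fault on the input edge of the ideal encoder, or on its output edges.
\end{itemize}
Here one must be careful about the output-edge case, since output edges carry weight $w<1$. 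A single atomic fault of weight $1$ that appears as a high-weight Pauli on the outputs, i.e.\ a hook error, must still be matched by a fault of weight $\leq 1$ in the idealised diagram. That matching is supplied by the weight-$1$ input edge, because the fault equals a logical times a detectable pattern. Undetectable nonlogical patterns are stabilizers and hence trivial.

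\textbf{Step 2: the decoding side.} The same argument applies verbatim to $C^\dagger$.

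\textbf{Step 3: compose.} Using compositionality of fault equivalence \parencite[Prop. 3.12]{rodatzFaultToleranceConstruction2025}, replace $C$ and $C^\dagger$ by fault-free encoders with weight-$1$ boundary edges. Then apply \cref{prop:distillation} to collapse the noisy Bell pairs into a single edge of weight $\min(1, d w)=1$.

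I expect the main obstacle to be Step 1's treatment of hook-like faults. A single weight-$1$ fault can spread into a weight-$\geq d$ pattern on the $n$ Bell-pair edges, which costs only $d\cdot w=1$ if it were produced by the noisy edges. The equivalence must hold in both directions, with weights no larger in the target. I would verify that any such pattern, being undetectable, is logical modulo stabilizers, so it costs exactly $1$ in the target, matching its cost $1$ in $C$. Note that this holds only because the threshold is exactly $1$: faults of total weight $\geq 1$ anywhere in the protocol need only map to some target fault of weight $\leq$ theirs. The target's single edge of weight $1$ always provides such a fault, so only weights strictly below $1$ need to be checked, and these involve the Bell-pair edges alone, which \cref{prop:distillation} already handles.
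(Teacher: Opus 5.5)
Your Steps 1--3 do not go through, because Step 1 asks for more than the hypothesis gives.

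The standalone encoder diagram is an isometry, so nothing in it is detectable: $C^F\neq 0$ for every $F$. Take a single weight-$1$ fault in $C$ that propagates to a Pauli $P$ on the $n$ outputs which is neither a stabilizer nor a logical. On the idealised encoder it must be matched by placing a representative of $P$ (modulo stabilizers) on the output legs. The weight-$1$ input edge only supplies the logical part. Your sentence ``that matching is supplied by the weight-$1$ input edge, because the fault equals a logical times a detectable pattern'' therefore conflates detectability by the code with detectability in the diagram.

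The cost of that output pattern can exceed $1$. For example, pad any encoder with $VV^{-1}$ for a Clifford $V$ on the outputs. A single weight-$1$ fault between $V$ and $V^{-1}$ then becomes an arbitrary Pauli $V^{-1}QV$ on the outputs. For codes in which some syndrome can only be produced by Paulis of weight larger than $d=m$ (larger surface codes, say), no matching fault of weight $\le 1$ exists. So $C$ need not be a fault-tolerant encoder on its own. The paper says as much: the QPC encoder circuit is ``not fault-tolerant on its own'' (\cref{sec:QPC_encoder}), and the Steane circuits are not built by fault-equivalent rewrites (\cref{app:Steane}). Compositionality also cannot be applied piecewise here. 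The detection that neutralises such patterns only exists once $C$, the Bell pairs, and $C^\dagger$ with its syndrome measurements are composed.

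The argument that works is the one in your closing paragraph, and it is the paper's proof: argue on the whole protocol.
\begin{itemize}
  \item Any fault containing at least one atomic fault of $C$ or $C^\dagger$ already costs at least $1$.
  \item If such a fault is undetectable, its net effect is a Pauli on the distilled Bell pair. A correlated Pauli on both legs of a Bell pair equals a Pauli on one leg. So the fault coincides with a single fault on one of the weight-$1$ output edges, which both the protocol and the target already contain.
  \item Faults of weight below $1$ live purely on the Bell-pair edges and are handled by \cref{prop:distillation} with $d=m$.
  \item The converse direction is immediate: a target fault is matched by the same single fault on the protocol's weight-$1$ output edges.
\end{itemize}
Make that the proof, state the Bell-pair equivalence explicitly (it is what justifies ``the target's single edge of weight $1$ always provides such a fault''), and drop Steps 1--3.
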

\begin{proof}
  Errors that directly occur on one of the two output edges of a distillation protocol have a weight $w=1$. An atomic fault inside $C$ of weight $w=1$ can, in the worst case, propagate to a correlated fault on both output edges. However, since the output edges form a Bell pair, all correlated faults on both output edges are equivalent to (at most) a single fault on one output edge. Therefore, any error inside $C$ is at most fault-equivalent to an output error and can be neglected.
\end{proof}

This means that in the scenario under consideration, where only the Bell pairs have lower noise weights and all other operations of the distillation are executed \emph{on-chip} with weight $w=1$, we can use any valid encoding circuit and can disregard internal error propagation.\footnote{
  In contrast, if parts of the distillation get moved \emph{off-chip} where lower noise weights apply, e.g. through photonic operations before transduction, the same encoding circuits may no longer be sufficient.}

If we have a fault-tolerant distillation protocol, we can use it to construct a fault-improved distribution edge:
\begin{proposition}\label{prop:FE-DE-distillation}
  Given a fault-tolerant distillation protocol using an encoder $enc$ and circuit $C$ of a code with distance $d$, we have:
  \[\tikzfig{03_FT_weighted/FE-DE-with-distillation}\]
\end{proposition}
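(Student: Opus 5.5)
The plan is to chain together the fault equivalences already established, using compositionality of fault equivalence \parencite[Prop.~3.12]{rodatzFaultToleranceConstruction2025}. Since the figure is not reproduced in the text, I read the statement as follows. The left-hand side is the distillation protocol: $n$ noisy Bell pairs of weight $w=\frac{1}{\ebitnoise}$, with the implemented encoder $C$ on one module and its adjoint on the other, feeding into a distribution edge. The right-hand side is a single distribution edge whose fault weight has been improved to $d\cdot w$ per Pauli type. In the terminology of Definition~\ref{def:FI-DE}, this is a fault-improved distribution edge with improvement $(d,d,d)$, where the outer edges keep weight $1$.

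\textbf{Step 1: replace the implemented circuit by an idealised encoder.} Because $C$ is a fault-tolerant implementation of $enc$, the defining equivalence of that notion lets me substitute the idealised (fault-free) encoder and decoder for $C$ and $C^\dagger$. The price is a weight-$1$ fault on the outer edges. By Proposition~\ref{prop:on-chip-dist}, all internal faults have weight $1$ and, through the Bell-pair symmetry, are equivalent to at most a single fault on one output edge. Since the outer edges already carry weight-$1$ faults, these contribute nothing new.

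\textbf{Step 2: invoke Proposition~\ref{prop:distillation} in its bent form.} Bending the wires turns the $n$ Bell pairs between the encoder and its adjoint into the encode/decode sandwich. That proposition collapses the sandwich to a single edge of weight $d\cdot w$. The reason is that any undetectable fault on the $n$ noisy edges is either a stabilizer, which can be absorbed, or a logical operator of weight at least $d$. Such an operator costs at least $d\cdot w$, and its image through the idealised encoder is a single Pauli on the logical edge. Conversely, any Pauli on the logical edge can be realised by a minimum-weight logical operator of exactly that cost. Bending back, and recalling that the distribution edge is a same-colour phase-free spider connection (an $XX$ or $ZZ$ measurement up to classical corrections), I then recover the right-hand side. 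Composing Steps~1 and~2 gives the claimed equivalence.

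\textbf{Main obstacle.} The delicate point is the converse direction of fault equivalence. For each fault on the improved edge, I must exhibit a fault on the protocol of no greater weight. I also have to check that combining faults does not undercut the bound: a weight-$1$ internal fault plus a few Bell-pair faults could in principle yield a logical action at total cost below $\min(1, d w)$. I would handle this by arguing that, in the noise model, any fault with a nontrivial action on the internal code block decomposes into a part acting on the Bell pairs and a part equivalent to an outer-edge fault. The first part costs at least $d w$ if it acts logically, and the second costs at least $1$. Taking the minimum then yields weights $\min(1,dw)$ at the outer level, consistent with the statement whenever $d\le\ebitnoise$ and with saturation at $1$ otherwise.
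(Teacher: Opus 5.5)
Your proposal is correct and follows the paper's diagrammatic proof: use the fault-tolerance hypothesis on $C$ to replace the implemented encoder and its adjoint by idealised ones, apply Proposition~\ref{prop:distillation} in bent form to collapse the noisy Bell pairs into a single edge improved by a factor $d$, and conclude by compositionality of fault equivalence. One small correction: Proposition~\ref{prop:on-chip-dist} is a sufficient condition for your hypothesis, not a consequence of it, so you only need the defining equivalence of a fault-tolerant encoder; and the worry about combined faults in your last paragraph is already handled by the compositionality you invoke at the outset.
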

\begin{proof}
  \[\tikzfig{03_FT_weighted/FE-DE-with-distillation-proof}\]
\end{proof}

\subsection{Repetition}\label{sec:Repetition}

While distillation generally protects edges, we will specifically use \emph{distribution edges} as building blocks. As an alternative to distillation, distribution edges can also be protected via {repetition}:
\begin{proposition}
  For fault weights $v_X, v_Y, v_Z, w_X, w_Y, w_Z > 0$ the following rewrite is fault-equivalent:
  \label{prop:fault-improve}
  \[ \tikzfig{03_FT_weighted/improve_rewrite} \]
  where the weights on the RHS are bounded at $w'_Y, w'_Z \leq 2$ and $w'_X \leq 1$.
\end{proposition}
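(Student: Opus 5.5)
The plan is to verify the two directions of the definition of fault equivalence directly, by enumerating the undetectable faults on the repeated (left-hand) diagram and matching each to a fault on the right-hand diagram, and conversely. I read the rewrite as replacing a single noisy distribution edge by two same-coloured phase-free spiders joined by parallel edges. One of these edges has weights $(v_X,v_Y,v_Z)$ and the other $(w_X,w_Y,w_Z)$; the spiders' remaining legs are ordinary weight-$1$ on-chip edges. On the right-hand side there is a single edge with improved weights $(w'_X,w'_Y,w'_Z)$. I would \emph{define} each $w'_P$ as the minimum $\mathcal{N}$-weight, on the left-hand side, of an undetectable fault that is semantically equal to a single $P$ flip on the right-hand edge. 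The proof then reduces to computing these minima and checking that they respect the stated caps. The semantic part, that both sides implement the same linear map, is a routine application of spider fusion and self-loop removal from \autoref{sec:zx-intro}.

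First I classify single-edge Pauli faults on the parallel pair using the spider rules. Take the spiders to be $Z$ spiders; the $X$-type case follows by colour symmetry. A $Z$ flip on either parallel edge commutes through the spider, so it is equivalent to a $Z$ flip on the outer leg. An $X$ flip on only one of the parallel edges anticommutes around the cycle formed by the two edges. Fusing the spiders then leaves a $\pi$-phase self-loop, so the diagram evaluates to $0$ and the fault is detectable. It only becomes undetectable when paired with an $X$ flip on the other parallel edge, and that pair is then equivalent to the identity or to flips on the outer legs.

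From this classification the effective weights follow. The cheapest undetectable fault equal to a logical flip of the relevant type must hit both parallel edges, which gives sums of the form $v_P+w_P$ for the protected components. The unprotected component is the one that passes through a single edge, so its weight stays as a minimum over the two edges, bounded by the weight-$1$ outer legs.

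The caps come from the spider's outer on-chip legs. A single weight-$1$ flip on an outer leg already produces the logical fault of the unprotected type, which forces $w'_X\le 1$. Since both Pauli components of $Y$ must be produced, and the protected one can be produced by flips on the two weight-$1$ outer legs, we get $w'_Y,w'_Z\le 2$. Thus each $w'_P$ is the minimum of the repetition sum and the on-chip cost.

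For the converse direction, given a fault on the right-hand edge of weight $w'_P$, I exhibit the minimising left-hand fault from the enumeration. For the first direction I use linearity: every undetectable left-hand fault is a product of the single-edge cases above, so it maps to a right-hand fault whose weight is bounded by the same minimisation.

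The main obstacle will be this case analysis on combined faults, in particular $Y$ flips that mix a detectable $X$ part on one parallel edge with $Z$ parts elsewhere. I must show that no cheaper undetectable combination exists. Here I expect to argue component-wise: undetectability forces the $X$ components on the cycle to come in pairs, and the $Z$ components are then minimised independently. This decomposition has to be checked against the minimum over atomic decompositions in the definition of $\mathcal{N}_{\mathcal{A}}$.
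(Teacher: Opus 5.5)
Your framework is sound and differs from the paper's route. You define each $w'_P$ as the cheapest undetectable left-hand realization of the class of a single $P$ flip on the right-hand edge. One direction then follows by subadditivity. For the other, you split a minimal atomic decomposition into its boundary part and its inner part, which is then automatically undetectable. The paper instead rewrites both sides to a common idealised diagram carrying fault gadgets. The trouble is the computation of the minima, which is where the content of the proposition lies.

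\textbf{Colour convention.} The caps $w'_X\le 1$ and $w'_Y,w'_Z\le 2$ belong to $X$-spiders, i.e.\ an $XX$ edge on which a single $Z$ flip on one parallel edge is detected. With the $Z$-spiders you chose, the unprotected Pauli is $Z$. A single outer $X$ flip is then not in the class of the right-hand $X$ flip, so your cap argument yields $w'_Z\le 1$, not $w'_X\le 1$.

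\textbf{The weights.} In the correct convention, your claim of ``sums $v_P+w_P$'' is wrong for $Y$. A pair of $Y$ flips on the two parallel edges is undetectable, its $X$ parts cancel, and it realizes the protected $Z$-class. The $Y$-class needs one $Y$ and one $Z$. Hence $w'_Z=\min(v_Z+w_Z,\,v_Y+w_Y)$ and $w'_Y=\min(v_Y+w_Z,\,v_Z+w_Y)$, each capped at $2$.

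\textbf{Component-wise minimisation fails.} An atomic $Y$ supplies both Pauli components at cost $v_Y$, so pairing the protected components first and then adding the cheapest unprotected one separately overestimates. With all weights $\tfrac13$ it gives $1$ for the $Y$-class. The true value is $\tfrac23$, from a $Y$ on one parallel edge plus a $Z$ on the other. With $w'_Y=1$ the right-hand side has no fault of weight $\le\tfrac23$ in that class, so the left-to-right direction fails. You need an explicit case analysis over pairs of flips on the two parallel edges; the paper does this for the mixed pairs in \autoref{sec:yproof}.

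\textbf{Missing fault locations.} The left-hand side is not a digon but a square of four spiders, two on each wire. Its two horizontal segments are weight-$1$ on-chip edges lying on the detecting cycle. A $Z$ flip on a segment, combined with a $Z$ flip on a parallel edge or on the other segment, is undetectable, and its effect lies outside your cases. Examples are a single outer $Z$ at weight $1+v_Z$, and $Z$ on both input legs at weight $2$. Your claim that every undetectable left-hand fault is a product of the parallel-edge cases is therefore false as written. These faults are dominated by outer-leg faults of at most equal weight, so the result survives, but this must be argued; the paper does so at the start of \autoref{sec:yproof}.

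\textbf{Mapping factor by factor.} In the left-to-right direction you cannot map factors one at a time, because a single flip of the detected type on one parallel edge is detectable. You must treat the inner undetectable part as a whole and compare it against the class minima.
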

\begin{proof}
  To show this, we use the approach presented in~\cite{ruschCompletenessFaultEquivalence2025}: We use fault-equivalent rewrites to bring both constructions to an equal common diagram that consists of an idealised semantically equivalent circuit together with fault gadgets that account for all noise.
  We separately show the equivalence for $X$ and $Z$ type errors, to keep the diagrams manageable, and discuss $Y$ type errors in \autoref{sec:yproof}.
  Beginning with $X$ type errors, we can push out all inner faults to a single edge as the entire diagram just has a single corresponding stabilizer, and then just keep the highest probability fault:
  \[ \tikzfig{03_FT_weighted/square_x} \]
  This applies analogously to the improved edge case.
  \[ \tikzfig{03_FT_weighted/vert_x} \]
  For  $Z$-type errors, we must first resolve the inner detecting region, generating all possible undetectable pairs of inner faults. We then simplify and remove redundant and less likely faults.
  The only remaining faults are the already existing single outer faults, and a new correlated two-qubit error of weight $v_Z + w_Z$ that spreads to either the top or bottom qubits:
  \[ \tikzfig{03_FT_weighted/square_z} \]
  The case of the improved distribution edge is simpler: here, we only have a single inner fault, which gets pushed out to the same kind of two-qubit error as in the other case:

  \[ \tikzfig{03_FT_weighted/vert_z} \]
\end{proof}

\autoref{prop:fault-improve} protects distribution edges against errors that flip the corresponding measurement outcome: for a $ZZ$ type distribution edge, it yields an improvement in the $Y$ and $X$ bases. With a larger number of repetitions, this improvement is made stronger.
Assuming for the sake of simplicity that all faults on the noisy edges have the same weights, we have:
\begin{proposition}\label{prop:repetition_protection}
  For $w_X=w_Y=w_Z=\frac{1}{\ebitnoise}$, we have:
  \[\tikzfig{03_FT_weighted/repeating-measurement}\]
\end{proposition}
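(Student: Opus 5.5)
The plan is to obtain the statement by iterating \autoref{prop:fault-improve} and inducting on the number $m$ of repeated noisy distribution edges. I read the diagram as $m$ parallel distribution edges of weight $\frac{1}{\ebitnoise}$ between the same pair of spiders, whose ends are fused on each side by the on-chip spiders. The target is a single distribution edge whose weights in the two protected bases grow as $m\cdot\frac{1}{\ebitnoise}$ until they saturate. The unprotected basis keeps weight $\frac{1}{\ebitnoise}$.

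The argument has three steps.
\begin{enumerate}
\item \emph{Base case $m=1$.} This is trivial: the diagram is already a single noisy distribution edge.
\item \emph{Inductive step.} Assume the first $m$ copies have been rewritten into one edge with weights $(w^{(m)}_X,w^{(m)}_Y,w^{(m)}_Z)$. Isolate the subdiagram consisting of this edge and the $(m{+}1)$-th noisy copy, together with the spiders joining them. Apply \autoref{prop:fault-improve} to it with $v=(w^{(m)}_X,w^{(m)}_Y,w^{(m)}_Z)$ and $w=(\tfrac{1}{\ebitnoise},\tfrac{1}{\ebitnoise},\tfrac{1}{\ebitnoise})$.
\item \emph{Lifting to the whole diagram.} Use compositionality of fault equivalence~\parencite[Prop.~3.12]{rodatzFaultToleranceConstruction2025} to conclude that the rewrite is fault-equivalent in the full diagram.
\end{enumerate}
\autoref{prop:fault-improve} holds for arbitrary positive and possibly asymmetric weights. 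This matters because after the first step the merged edge is no longer symmetric, so the iteration is legitimate.

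The weights in the inductive step are read off from the proof of \autoref{prop:fault-improve}. In the basis that is not protected (the one where the whole diagram has a single corresponding stabiliser), the faults on the two edges are merged and only the cheaper one is kept. The new weight is therefore $\min(v,w)=\frac{1}{\ebitnoise}$, so it does not improve.

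In each protected basis, every undetectable fault needs one fault on each parallel edge. The new weight is therefore $v+w$, capped by the bound stated in \autoref{prop:fault-improve} ($1$ or $2$ depending on the basis). The cap holds because a correlated fault of that weight is already fault-equivalent to faults on the adjacent on-chip outer edges of weight $1$. This gives the recurrence $w^{(m+1)}=\min(w^{(m)}+\tfrac{1}{\ebitnoise},\,\text{cap})$, whose solution is $w^{(m)}=\min(\tfrac{m}{\ebitnoise},\text{cap})$. In particular, $m=\ebitnoise$ repetitions suffice to reach weight $1$ in the protected bases; the cap-$2$ basis would need $2\ebitnoise$ to reach its own cap of $2$. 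This is the fault-improved distribution edge of \autoref{def:FI-DE}.

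The main obstacle will be the bookkeeping of the caps and of the $Y$ component. A $Y$ fault combines the protected and unprotected parts, so I must check, in the way \autoref{sec:yproof} does for the base proposition, that the iterated $Y$ weight still adds up rather than degrading to $\frac{1}{\ebitnoise}$ through a mixed fault pattern. My first attempt will be to decompose $Y$ faults on each copy into their $X$ and $Z$ parts. Each decomposed fault is then pushed out separately using the $X$ and $Z$ fault gadgets. At the end I verify that the cheapest undetectable combination still needs a protected-type fault on every copy.
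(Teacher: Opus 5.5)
Your proposal is correct and follows the paper's proof, which is exactly repeated application of \autoref{prop:fault-improve}. One notational point: $m$ already denotes the noise exponent $\ebitnoise$ in the paper, so the repetition count needs another letter, e.g.\ $d$.

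The $Y$ bookkeeping you single out as the main obstacle is already contained in \autoref{prop:fault-improve}, which holds for arbitrary asymmetric weights and settles its $Y$ cases in \autoref{sec:yproof}. Its new $Y$ and $Z$ weights are minima of sums $v_P+w_Q$ over pairs of single-edge faults with the same outer effect, and these collapse to $v+\frac{1}{\ebitnoise}$ because the induction keeps $v_Y=v_Z$. Both protected bases, $Y$ and $Z$, carry cap $2$; the cap $1$ belongs to the unprotected $X$ basis, whose weight simply stays at $\frac{1}{\ebitnoise}$.
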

\begin{proof}
  This follows from repeatedly applying \autoref{prop:fault-improve}.
\end{proof}

Interpreting $Z$ faults on a noisy $XX$ type distribution edge as measurement errors, and $Y$ faults as measurement errors that furthermore create an outgoing $X$ fault, an undetectable measurement error can be created by flipping all $d$ measurements, while an undetectable combined measurement error and outgoing $X$ error can be created through $d-1$ measurement flips together with one measurement flip that also causes an outgoing $X$ fault. More generally, to create these undetectable errors, an even or odd number of the measurement flips should be combined with an outgoing $X$ error, respectively. Taking $\ebitnoise_Y = \ebitnoise_Z = \ebitnoise$, the lowest-weight fault that can cause either of these undetectable errors has $w = \frac{d}{\ebitnoise}$, meaning that the protection against $Y$ faults on the distribution edge is as strong as the protection against $Z$ faults.

To further add protection against $X$ faults, we employ nested repetition:
\begin{proposition}\label{prop:nested_reps}
  \begin{equation}
    \tikzfig{03_FT_weighted/edge-both-bases-updated}
  \end{equation}

  and equivalently for an improved distribution edge of $ZZ$ type, swapping the bases of the inner and outer repetitions.
\end{proposition}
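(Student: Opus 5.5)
We will prove the statement by applying the repetition result twice, once in each basis, using compositionality of fault equivalence \parencite[Prop.~3.12]{rodatzFaultToleranceConstruction2025}. The picture we have in mind is an $XX$-type distribution edge built as an outer repetition: $\dout$ parallel $XX$ distribution edges between the same pair of spiders. Each of these outer edges is itself realised by an inner repetition of $\din$ noisy distribution edges of the opposite ($ZZ$) type, with fresh auxiliary spiders on each module. We treat the two layers separately.

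\textbf{Step 1 (inner layer).} We apply \autoref{prop:repetition_protection}, in its $ZZ$ version, to each inner block. This collapses the $\din$ noisy $ZZ$ edges with weights $(w,w,w)$, $w=\frac{1}{\ebitnoise}$, into a single fault-improved $ZZ$ edge. Its $X$ and $Y$ weights are raised to $\frac{\din}{\ebitnoise}$, capped at the bound of \autoref{prop:fault-improve}, while its $Z$ weight stays at $w$. The auxiliary spiders joined by this improved $ZZ$ edge are phase-free and of opposite colour to the outer $XX$ structure. We therefore need an idealised identity (spider fusion plus an identity removal) that turns the resulting gadget back into a plain edge carrying these asymmetric weights. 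Because this identity is a fault-equivalent rewrite, the weights transfer unchanged, with the bases relabelled appropriately.

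\textbf{Step 2 (outer layer).} Every outer edge now carries asymmetric weights in which the inner layer has strengthened exactly one basis. We apply \autoref{prop:fault-improve} repeatedly to the $\dout$ outer edges, just as in the proof of \autoref{prop:repetition_protection}, but with general weights $(v_X,v_Y,v_Z)$ and $(w_X,w_Y,w_Z)$. Each application adds the weights in the two bases that the outer repetition protects, and leaves the remaining basis as the minimum. After $\dout-1$ applications we read off the final triple. The basis protected only by the inner layer has weight $\frac{\din}{\ebitnoise}$. The bases protected by the outer layer have weight $\dout$ times the inner edge's weight in that basis. For $Y$ this sum involves the already improved inner weight, so we take the minimum over the admissible decompositions, as in the discussion following \autoref{prop:repetition_protection}. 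The $ZZ$ case follows by colour-swapping every diagram, which is a symmetry of the Pauli ZX calculus and of edge-flip noise.

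\textbf{Main obstacle.} The hard part is bookkeeping the $Y$ component and the caps. \autoref{prop:fault-improve} bounds the output weights by $w'_Y,w'_Z\le 2$ and $w'_X\le 1$, which reflects that on-chip faults on the outer legs have weight $1$. We must check that these caps do not destroy the improvement. We also must check that the correlated two-qubit fault of weight $v_Z+w_Z$ created in the inner layer, pushed onto the outer auxiliary spiders, combines under the outer repetition only into faults whose weight is at least the claimed value. Our first attempt will be the method of \cite{ruschCompletenessFaultEquivalence2025}: push all inner faults outward and resolve detecting regions separately for each Pauli type, as in the proof of \autoref{prop:fault-improve}. We will then verify that every undetectable combination needs either $\din$ inner flips or $\dout$ outer flips, with $Y$ faults counted via the even/odd parity argument.
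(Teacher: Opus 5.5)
This is essentially the paper's argument, and the proposal is correct. You collapse each inner block of $\din$ noisy $ZZ$ edges with the repetition rewrite, remove the two-legged spiders so each outer branch becomes a single $XX$ edge carrying the inner-improved weights, and then apply \autoref{prop:fault-improve} repeatedly across the $\dout$ outer branches, with compositionality gluing the steps (the QPC proof in \autoref{app:Shor} iterates inner-then-outer in the same way).

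Two bookkeeping slips to fix. First, removing the two-legged spiders relabels no Pauli weights; only which basis acts as the measurement flip changes. Second, the final $Y$ weight is not $\dout$ times the inner $Y$ weight: the cheapest undetectable combination puts $Y$ on one branch and $Z$ on the other $\dout-1$ branches, giving $\frac{\din+\dout-1}{\ebitnoise}$ before capping.
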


\begin{proof}
  \[\tikzfig{03_FT_weighted/both-edge-bases-proof}\]
\end{proof}
From here on we will use $\dout$ to denote the distance created by outer repetitions and $\din$ as the distance from inner repetitions, e.g. $\dout=d_Z$ and $\din=d_X$ for an $XX$ type distribution edge. For consistency, we also keep this labeling even if the distribution edge was constructed using entanglement distillation (\autoref{prop:FE-DE-distillation}).

Finally, we translate the above diagram into an implementable distributed circuit and consider circuit level noise. We have:
\begin{proposition}\label{prop:repeat_circuit}
  \[\tikzfig{03_FT_weighted/repeat-circuit}\]
\end{proposition}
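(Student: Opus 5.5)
We read \autoref{prop:repeat_circuit} as asserting that the nested-repetition ZX diagram of \autoref{prop:nested_reps}, under weighted edge-flip noise, is fault-equivalent to an explicit distributed circuit under circuit-level noise. In that circuit, each noisy Bell pair (weight $w=\frac{1}{\ebitnoise}$) is consumed by local CNOTs and single-qubit measurements on either side of the module boundary, which realises a repeated $XX$ (resp.\ $ZZ$) distribution edge. The proof will be a chain of fault-equivalent rewrites. By compositionality of fault equivalence~\parencite[Prop.~3.12]{rodatzFaultToleranceConstruction2025}, each local piece can be handled separately.

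\textbf{Step 1: decompose the diagram.} Starting from the right-hand side of \autoref{prop:nested_reps}, we unfuse spiders so that every inner repetition becomes a chain of small building blocks. Each block is either
\begin{itemize}
  \item a Bell-pair cup carrying weight $w$ on its edge, or
  \item a same-colour spider connecting one half of a Bell pair to a local (weight $1$) wire.
\end{itemize}
For this unfusing to be fault-equivalent, we must check that the newly introduced internal edges carry weight $\geq$ the weight of the faults they could emulate. Since they are on-chip edges of weight $1$, this reduces to showing that any single fault on them is equivalent to a fault of weight $\leq 1$ already present. We expect this to follow from pushing the fault through the phase-free spider onto an adjacent weight-$1$ edge, as in the proof of \autoref{prop:fault-improve}.

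\textbf{Step 2: extract circuit pieces.} Each local spider attaching a Bell-pair qubit to a data or auxiliary wire is a merge. Each auxiliary state feeding several repetitions is a split. By \autoref{prop:CLN_CX}, these are fault-equivalent to a CNOT followed by a measurement, and to a $\ket{+}$ (or $\ket{0}$) preparation followed by a CNOT, respectively. The Bell-pair cup maps directly to the Bell-pair preparation with its two-qubit depolarising noise of weight $w$. Every two-qubit Pauli on the pair is equivalent to a single-qubit fault on one half, so the cup's edge-flip noise matches the circuit's interconnect noise.

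\textbf{Step 3: the main obstacle.} Circuit-level gate faults on the CNOTs adjacent to a Bell pair may act on both outputs, possibly including the Bell-pair half. We must show these correlated weight-$1$ faults are no worse than faults already accounted for. We would argue as in \autoref{prop:on-chip-dist}:
\begin{itemize}
  \item a correlated fault across a Bell-pair half and a local wire decomposes into a fault on the Bell edge (weight $w<1$) plus a local weight-$1$ fault;
  \item alternatively, it propagates to a measurement flip of one repetition, which the diagram already assigns weight $\leq 1$.
\end{itemize}
The detailed check is that the resulting fault weight never drops below the bounds $(d_X,d_Y,d_Z)$ claimed in \autoref{prop:nested_reps}, and that hook-type faults from the ordering of the CNOTs on the shared auxiliary cannot combine to lower the outer repetition distance. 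We would try first to place each repetition's CNOTs so that auxiliary faults propagate only to measurement outcomes, whose flips are already modelled by the outer repetition. Failing that, we would cap the outcome at weight $\min(1,\cdot)$, consistent with the bound $w'_X\leq 1$ of \autoref{prop:fault-improve}, which the statement presumably inherits.
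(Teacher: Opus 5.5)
\textbf{Gap in Step 3: the outer CNOTs.} Your skeleton matches the paper's proof: translate piecewise between circuit and ZX diagram via \autoref{prop:CLN_CX} and compositionality, then invoke \autoref{prop:nested_reps}. The paper goes circuit $\to$ diagram, the reverse of your direction, but that is immaterial. The gap is in the step you flag as the main obstacle.

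In Step 2 you assume every CNOT is a split or a merge in the sense of \autoref{prop:CLN_CX}. That holds for the inner CNOTs. It does not hold for the outer CNOTs, which do not meet the hypotheses of \autoref{prop:CLN_CX}, so their correlated two-qubit gate faults have no a priori single-edge counterpart.

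Your Step 3 argument for these faults fails for three reasons:
\begin{itemize}
\item Writing a weight-$1$ correlated gate fault as a Bell-edge fault plus a local fault gives it edge-flip weight $w+1>1$. That is exactly the situation that \emph{violates} fault equivalence: the circuit would contain a weight-$1$ undetectable fault whose cheapest diagram counterpart costs more than $1$.
\item The reasoning of \autoref{prop:on-chip-dist} does not transfer. It relies on the two affected wires being the two halves of the output Bell pair, so that a correlated fault on both collapses to a single-wire fault. The two outputs of an outer CNOT are instead a Bell-pair half and a local auxiliary.
\item The fallback of capping weights at $\min(1,\cdot)$ changes the statement rather than proving it.
\end{itemize}

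\textbf{The missing idea: a detecting-region argument.} The target of each outer CNOT lies inside the first inner detecting region. This is the Pauli web of the first inner repetition, which closes off on the initial Bell pair (cf.\ \autoref{sec:corrections}). Consequently, any correlated gate fault of an outer CNOT that is not equivalent to a single edge flip is caught by that detector. Fault equivalence only quantifies over undetectable faults, so such faults need no counterpart. The combined faults that remain undetected are exactly those with an equal-weight counterpart under edge-flip noise. With this in place, the translated diagram can be simplified and \autoref{prop:nested_reps} applied, yielding the claimed weights without any ad hoc capping.

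A smaller omission: the single-qubit measurements must be translated to phase-free spiders together with classically controlled corrections, because the outer-repetition outcomes are not deterministic (see \autoref{sec:corrections}).
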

\begin{proof}
  First, we can translate the quantum circuit into a fault-equivalent ZX diagram:
  \[\tikzfig[scale=0.45]{03_FT_weighted/repeat-circuit-proof}\]
  All inner CNOT gates fulfill the requirements of \autoref{prop:CLN_CX} and are fault-equivalent to their ZX counterparts. For the outer CNOT gates, we can make use of the fact that their target sits in the first inner detecting region and only combined errors which have equal weight under edge-flip noise remain undetected.
  Also notice that we translate single-qubit measurements into empty phase spiders instead of $k$-spiders.
  In practice, we have to insert classically controlled correction operations into the quantum circuit to have a faithful translation.
  We omit this here for visual clarity (see \autoref{sec:corrections} for details).

  Now, we can simplify and apply the nested improvement of \autoref{prop:nested_reps}:
  \[\tikzfig[scale=0.45]{03_FT_weighted/repeat-circuit-proof2}\]
\end{proof}

We note that the overall circuit is better protected against $Y$ faults than against $X$ and $Z$ faults, as both the inner and outer measurements protect against them. In the case of noisy interconnects with biased noise, the choice of basis can be made to provide the strongest protection against the most likely type of fault.

\subsection{Quantum Parity Code Encoder}\label{sec:QPC_encoder}

The concatenation in \autoref{prop:nested_reps} of an inner measurement repetition protecting against $X$ errors with an outer measurement repetition protecting against $Z$ errors is reminiscent of the Shor code at $d_X = d_Z = 3 $, or more generally of quantum parity codes (QPCs).
Indeed, the scheme of nested repetitions can be fault-equivalently rewritten as a distillation protocol using a QPC, with an encoder and an inverse encoder\footnote{We avoid the term decoder here, as it is also used to refer to classical algorithms for processing syndrome data.} defined as follows:

\begin{definition}
  \label{def:shor_encoder}
  An encoder $enc_\text{QPC}$ and inverse encoder  $enc_\text{QPC}^\dagger$ for the QPC with distance $d_X$ for $X$-type faults and distance $d_Z$ for $Z$-type faults implement the following map, with the encoder mapping right to left and the inverse encoder mapping left to right:
  \[ \tikzfig{03_FT_weighted/shor_enc_map} \]
\end{definition}
\noindent
The above diagram can be rewritten into an implementable circuit of the form
\[ \tikzfig{03_FT_weighted/shor_enc_circ} \]
This  implementation is not fault-tolerant on its own, but the full distillation circuit using it accomplishes the desired edge improvement:

\begin{proposition}
  \label{prop:shor_improv}
  Entanglement distillation using the above implementation of $enc_\text{QPC}$ with $X$-type distance $d_X$ and $Z$-type distance $d_Z$ for Bell pairs with fault weights $\frac{1}{\ebitnoise}$ and local fault weights of 1 is fault-equivalent to a fault-improved single edge as follows:
  \[ \tikzfig{03_FT_weighted/ED_shor_improv} \]
  where the weights on the RHS are bounded at $w'_X, w'_Y, w'_Z \leq 1$.
\end{proposition}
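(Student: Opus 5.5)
The plan is to reduce \autoref{prop:shor_improv} to the nested-repetition result \autoref{prop:nested_reps}, rather than analysing the QPC distillation circuit directly. The key steps are as follows.

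\textbf{Step 1: translate to ZX.} Write the full distillation circuit (encoder on one module, $n = d_X d_Z$ noisy Bell pairs, inverse encoder on the other module) as a ZX diagram under edge-flip noise. For the CNOTs of the encoder and inverse encoder that are splits or merges, use \autoref{prop:CLN_CX}. For the CNOTs that are not of that form (the fan-out CNOTs of the outer repetition level), argue as in the proof of \autoref{prop:repeat_circuit}. Each such CNOT's target lies in a detecting region of the inner level, so any local fault on it is either detected or equivalent to an edge fault of weight $1$ on a local edge.

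\textbf{Step 2: fuse onto the nested-repetition diagram.} Using only spider fusion and identity removal, which are fault-equivalent in the Pauli fragment, fuse the spiders of $enc_\text{QPC}$ and $enc_\text{QPC}^\dagger$ with the Bell-pair edges. By \autoref{def:shor_encoder}, the encoder is an outer $X$-type (respectively $Z$-type) fan-out into $d_Z$ blocks, each followed by an inner fan-out of opposite colour into $d_X$ qubits. After fusion, each Bell pair becomes a noisy distribution edge of weight $\frac{1}{\ebitnoise}$ between two inner spiders. The inner spiders of one block are joined through one local edge per side into an outer spider, which is exactly the shape of the left-hand side of \autoref{prop:nested_reps}.

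\textbf{Step 3: control the weight-$1$ local edges.} The local edges introduced by the encoder circuit carry weight $1$. Each is either an external edge of a repetition block or an internal edge between the two levels. As in \autoref{prop:fault-improve}, a single fault on such an edge pushes out to a correlated fault on the two outer edges. By \autoref{prop:on-chip-dist}, such a correlated fault on a Bell pair is equivalent to a single output fault. This yields weights of at most $1$ on the outer edge, which matches the bounds $w'_X, w'_Y, w'_Z \leq 1$.

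\textbf{Step 4: apply the nested result.} Apply \autoref{prop:nested_reps}, with the inner repetition giving distance $d_X$ against one Pauli type and the outer giving $d_Z$ against the other. This produces the single improved edge with weights $(d_X/\ebitnoise, \cdot, d_Z/\ebitnoise)$, each capped at the on-chip value. The $Y$ weight is at least the minimum of the two, by the discussion after \autoref{prop:repetition_protection}.

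\textbf{Main obstacle.} I expect Step 1 to be the hardest, specifically the fault-equivalence of the non-split/merge CNOTs. The circuit is stated not to be fault-tolerant on its own, so the argument cannot be purely local. A fault on an outer CNOT can spread to several qubits of a block, and I must show this is undetectable only in combinations equivalent to edge faults of no smaller weight. I would first try to push each such fault through to the Bell-pair side, and check that it becomes either a stabilizer of the QPC (harmless) or a logical operator carrying an on-chip weight $\geq 1$ (hence absorbed by the cap). If that fails, I would instead handle the CNOTs gate by gate, inserting fault gadgets as in \cite{ruschCompletenessFaultEquivalence2025} and simplifying.
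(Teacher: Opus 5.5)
Your outline matches the paper's proof in Appendix~\ref{app:Shor}: that proof also collapses the distillation diagram with the repetition rewrite, first on the inner blocks and then on the outer block.

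The gap is in Step~2. Spider fusion is \emph{not} fault-equivalent under edge-flip noise; only identity removal is. Deleting the internal edge between two same-colour spiders removes a weight-$1$ fault location. Its opposite-colour flip is equivalent, in the fused diagram, only to a flip on all legs on one side. Reading \eqref{naive_sc_circuit} from right to left is exactly such a failure, and it is the hook-error mechanism the whole paper is built around. So you cannot link the circuit, the fused diagram and the nested-repetition diagram by fault equivalences as written. Fusing also gives the wrong shape. The fused diagram is the parallel form of \autoref{def:shor_encoder}, where each block is two spiders joined by $d_X$ parallel Bell edges. By contrast, \autoref{prop:fault-improve} and \autoref{prop:nested_reps} are stated for the ladder form, in which consecutive rungs are separated by weight-$1$ local edges. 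Returning to a ladder requires unfusing, which is the direction that creates hooks.

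The paper avoids fusion altogether. In this implementation every CNOT is immediately followed by a measurement, so it is a split or merge by \autoref{prop:CLN_CX}; this is \autoref{prop:QPC_encoder}. The translation is therefore immediate, and the main obstacle you expected in Step~1 does not arise. The paper then merges the one-legged measurement spiders into their neighbours and recolours two-legged spiders, both of which are fault-equivalent. After that, each inner block already is a ladder with Bell-pair rungs. \autoref{prop:fault-improve} is applied rung by rung, and then the same is done on the outer block.

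If you want to keep your fused picture, replace the general fusion claim with an argument specific to this diagram. Your Step~3 points at this argument, but it has to carry the proof rather than patch it:
\begin{itemize}
\item All caps satisfy $w'_P \le 1$, and every fault touching a local edge has weight at least $1$. So in the direction from the circuit to the improved edge, only Bell-only faults with fewer than $\ebitnoise$ flips need checking.
\item Such a fault has a nontrivial undetected effect only if it contains a minimal logical of the QPC: at least $d_X$ flips for an $X$-type effect, or at least $d_Z$ flips for a $Z$-type effect.
\item Conversely, every Pauli on the improved edge is realized on the circuit either by a single flip on the input or output edge (weight $1$) or by such a minimal logical on the Bell pairs.
\end{itemize}
This direct comparison proves the statement without any intermediate fusion step.
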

\begin{proof}
  See \autoref{app:Shor}.
\end{proof}

\begin{proposition}\label{prop:QPC_encoder}
  The default QPC-encoding circuit under circuit-level noise is fault-equivalent to its edge-flip noise ZX diagram:
  \[\tikzfig{03_FT_weighted/shor_enc_CLN}\]
\end{proposition}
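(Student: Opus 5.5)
We prove the statement by translating the QPC-encoding circuit gate by gate into ZX fragments that are individually fault-equivalent under the respective noise models. We then use compositionality of fault equivalence \parencite[Prop. 3.12]{rodatzFaultToleranceConstruction2025} to glue the pieces together.

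\textbf{Decomposing the circuit.} The default encoding circuit (cf. the implementable circuit following \autoref{def:shor_encoder}) has two layers.
\begin{itemize}
\item An outer layer: the input qubit is fanned out by CNOTs onto $d_Z-1$ fresh $\ket{+}$- or $\ket{0}$-prepared auxiliary qubits, forming the outer repetition.
\item An inner layer: each of the resulting $d_Z$ qubits is fanned out onto $d_X-1$ further fresh qubits, forming the inner repetition blocks.
\end{itemize}
Every CNOT whose target is a freshly prepared qubit has exactly the form of the left pattern in \autoref{prop:CLN_CX} (a CNOT acting on a fresh $\ket{+}$/$\ket{0}$ state, i.e.\ a split). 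It is therefore fault-equivalent to the corresponding spider-with-extra-leg under edge-flip noise. Idle wires contribute only qubit-flip faults, which coincide with edge flips on the corresponding ZX edges.

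\textbf{Handling the sequential fan-outs.} If the fan-out within a block is done sequentially from a single control, I would not treat the CNOTs in isolation. Instead I would note that consecutive splits on the same control fuse into a single spider. A gate fault on the control output of one CNOT then lands on an internal edge of that fused spider, and such an edge can be pushed to an external leg without increasing weight.

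\textbf{Composing the layers.} Applying \autoref{prop:CLN_CX} to each CNOT and composing yields a ZX diagram equal to the edge-flip-noise diagram of \autoref{def:shor_encoder}, up to spider fusion. Fusion here is fault-equivalent because each merged internal edge carries a single edge-flip fault of weight $1$, matching a gate fault of weight $1$ in the circuit.

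\textbf{Main obstacle: correlated CNOT faults.} The hard step is checking that two-qubit gate faults on a CNOT inside a layer do not create a fault that is cheaper in the circuit than in the ZX diagram. Here I would reuse the argument from the proof of \autoref{prop:repeat_circuit}. Each correlated output fault of a split CNOT is equivalent, via the stabilizer of the freshly prepared state, to a single-edge fault. I would verify this explicitly for the $X$, $Z$ and $Y$ cases; $Y$ factorises into the $X$ and $Z$ cases as in \autoref{sec:yproof}.

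For the reverse direction, every single edge flip in the ZX diagram must correspond to some weight-$1$ circuit fault. This is immediate, since each ZX edge corresponds to a circuit wire segment, and a qubit flip on that segment is a weight-$1$ circuit fault.
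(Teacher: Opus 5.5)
Your skeleton (translate gate by gate with \autoref{prop:CLN_CX}, then glue by compositionality) matches the paper's, but you apply it to the wrong circuit. The default QPC circuit acts on the $d_Xd_Z$ Bell-pair halves of a QPU. Every CNOT in it is directly followed by a single-qubit $X$- or $Z$-basis measurement, so its building blocks are merges, not fan-outs onto freshly prepared $\ket{+}$/$\ket{0}$ qubits. The half of \autoref{prop:CLN_CX} you need is therefore the CNOT-plus-measurement one. Your proposal never accounts for the measurement-flip and measurement-flip-plus-gate atomic faults. Circuit-level noise is not time-symmetric: gate faults sit on outputs, and measurement flips are a separate class. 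So the split version does not carry over by reading the circuit backwards. Once each CNOT is paired with its following measurement, \autoref{prop:CLN_CX} already matches every correlated gate fault, $Y$ components included. Single-qubit measurements and idle wires match edge flips directly, and compositionality finishes the proof. This is the paper's entire argument, so your ``main obstacle'' needs no separate treatment.

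The more serious problem is the fusion step: it is false, and carrying it out would make the claim false. Fusing spiders is not fault-equivalent under edge-flip noise, for two reasons. It deletes the internal edge together with its weight-$1$ faults. And a Pauli of the opposite colour on that edge copies onto \emph{all} legs on one side of the cut, so it cannot be pushed to a single external leg. For instance, take two consecutive $Z$-merges on one surviving wire: CNOT $q_1\to q_2$, $M_Z$ on $q_2$, CNOT $q_1\to q_3$, $M_Z$ on $q_3$. An $X$ flip on $q_1$ between the two CNOTs is a weight-$1$ circuit fault. In the fused four-legged green spider, however, it is equivalent only to $X$ on two legs (e.g.\ $q_3$ and the output), i.e.\ weight $2$. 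This is the hook mechanism of \eqref{naive_sc_circuit}, and it is why the paper calls this implementation ``not fault-tolerant on its own''. The same objection applies to your sequential fan-out paragraph, even under your preparation-based reading. The right-hand side must therefore be the unfused, gate-by-gate ZX diagram with one merge spider per CNOT, not the fused map of \autoref{def:shor_encoder}. With that diagram, every internal edge is a wire segment of the circuit, faults are matched edge for edge, and nothing has to be fused or pushed.
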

\begin{proof}
  The $X$- and $Z$-basis measurements circuit-level and edge-flip noise are directly equivalent.
  All CNOT gates have a fitting directly following measurement according to \autoref{prop:CLN_CX}, such that all correlated faults from the circuit-level noise model have a direct corresponding fault in the edge-flip noise model of the same weight.
\end{proof}

\subsection{Generalizations}

It is possible to combine the two methods of distillation and repetition. One example is to use repetition to protect against errors in one basis and distillation with a classical error correcting code to protect against errors in the other basis. Alternatively, distillation can be done with a quantum error correcting code of $d_X \neq d_Z$, with repetition making up for the lacking distance in one basis. Through different choices, the edge improvement strategy can be adapted to available resources, and the extent of improvement across different bases can be tuned as needed. (We will see in later sections that in certain contexts, one basis requires more improvement than the other in order to achieve fault tolerance of the larger circuit that the distribution edge is embedded in.)

Yet another generalization follows from the observation that the nested measurement repetitions in \autoref{sec:Repetition} can be viewed as a version of the QPC encoder that is ``stretched out in time'', thereby reducing the space overhead at any given time step. In \autoref{prop:QPC_encoder}, $n=d_Xd_Z$ auxiliary qubits are required per QPU to implement the QPC encoder.
Meanwhile, in \autoref{prop:repeat_circuit}, only two auxiliary qubits per QPU are needed at any given time. (Recall the setup illustrated in \autoref{fig:tp_ls}: we consider the transfer to on-chip auxiliary qubits as part of the black-box interconnect, and the top and bottom half-edges of the bent edges in \autoref{prop:repeat_circuit} correspond to such auxiliary qubits.)
Encoding circuits for other codes can be similarly deformed to reduce the space overhead.

To showcase the generalization to other codes, we consider the Steane code. Compared to the $d_X=d_Z=3$ QPC code (the Shor code), the Steane code requires fewer data qubits (in the distillation setting: Bell pairs), 7 instead of 9, while its encoding circuit requires more CNOTs, 11 instead of 8.
In \autoref{app:Steane}, we show how different encoding circuits for the Steane code can be constructed. The following $XX$ type distribution edge improved by $d_X=d_Y=d_Z=3$ is one example:
\begin{equation}\label{eq:steane_example}
  \tikzfig[scale=0.5]{03_FT_weighted/Steane_edge}
\end{equation}

We see that the above circuit reduces the space overhead compared to default Steane code distillation, requiring 5 auxiliary qubits per QPU rather than 7.

Similarly, one can attempt to obtain stretched-out
circuits that reduce the space overhead for other codes, starting from ZX-representations of their encoding circuits. For CSS codes in particular, it has been shown in~\cite{huang2023} that the encoding circuits can be represented by a bipartite ZX diagram consisting of one layer of green spiders followed by one layer of red spiders, as in the diagram used in \autoref{app:Steane} for the Steane code.  
We infer from the Steane example that higher rate codes reduce the number of required Bell pairs at the cost of more interleaved measurements, which reduces the room for space-time tradeoffs.
For a longer discussion, see \autoref{app:Steane}.

\section{Fault-Tolerant Distributed Stabilizer Measurements in the Context-free Case}\label{sec:FT_dist_stab_meas}

After establishing the foundational tools for building fault-improved distribution edges, we now consider how they can be used to construct fault-tolerant distributed stabilizer measurement circuits.
The presentation focuses on stabilizers of weight $4$ and $6$ split across two QPUs; for general weight $N$ stabilizer measurements split across two or more QPUs, the fault-equivalent decompositions of $N$-legged spiders shown in~\cite{rodatzFaultToleranceConstruction2025} can be rewritten into implementable distributed circuits by partitioning of the decomposed diagram across the desired number of QPUs and using fault-improvement on any edge crossing between partitions.

In this section, the derivations of distributed syndrome extraction circuits start from the high-level ZX-representation of weight $N$ $Z$-type stabilizer measurements shown in \autoref{sec:zx-intro}.
Since we do not consider post-selected measurements, the $N$-legged spider will be given a phase $k\pi$ to represent the measurement outcome, with $k = 0 (1)$ representing the measurement of a $+1 (-1)$ stabilizer eigenvalue.
Stabilizers of $X$-type can be obtained by swapping the $X$ and $Z$ bases for each spider, while general Pauli stabilizers can be obtained by conjugation with $S$ or $H$ on the relevant data qubits.
The extraction circuits derived in this section are \emph{context-free} fault-tolerant, meaning that any set of inner faults of weight $w$ is only allowed to propagate to a set of data qubit and readout errors of at most equal weight.
In later sections, where specific codes are considered, we also show derivations that start from existing monolithic syndrome extraction circuits that are only fault-tolerant in the context of their specific codes and adapt them to the distributed setting.

The splitting of the weight $4$ and $6$ stabilizers is done by rewriting the ZX-diagrams of the corresponding stabilizer measurements as implementable circuits.
We indicate the transition from the abstract functional diagram to a concrete distribution setting by a squiggly arrow. The split of the ZX-diagram into two partitions is denoted as a horizontal grey dashed line and distribution edges crossing between the partitions as thick blue dashed lines with variable Bell pair noise strength $m$ and improvement $d$ for each Pauli type.
We then continue with fault-improving the distribution edges (with the amount of fault-improvement depending on the interconnect noise under consideration), and finally rewriting the fault-improved edges in terms of Bell pair mediated operations. There are multiple possible partitionings of the $4$ or $6$ data qubits between the two QPUs. We focus on the $2-2$ split for the weight $4$ stabilizer, and the $3-3$ and $4-2$ splits for the weight $6$ stabilizer.

\subsection{Weight-4 Stabilizer Measurements}\label{subsec:Weight-4-stabilizers}

Starting from the high-level representation of weight-4 $Z$ type stabilizer measurement, different implementable circuits can be derived.
One option is to unfuse the red spider to obtain a syndrome extraction that uses a single auxiliary qubit, which we will also refer to as the simple circuit:
\begin{equation}\label{naive_sc_circuit}
    \tikzfig[scale=0.5]{04_FT_dist_stab/w4_circification_mono}
\end{equation}
The two diagrams above are semantically equivalent, but they are not \emph{fault-equivalent} with respect to edge-flip noise: there is no fault on the diagram to the left that is equivalent to a single fault between the third and fourth red spider on the auxiliary qubit in the diagram to the right. The latter fault is a \emph{hook error}, propagating to the two lower (or, equivalently, two upper) data qubits.

A hook-free, fault-tolerant circuit can be constructed by instead using fault-equivalent rewrites~\cite{rodatzFaultToleranceConstruction2025}. This can be accomplished either with two auxiliary qubits if two-qubit measurements are natively available, or with a third auxiliary measurement qubit using  CNOT gates to mediate the two-qubit measurement. The version with two auxiliary qubits will form the basis for fault-tolerant distributed circuits, and takes the form
\begin{equation}\label{ft_sc_circuit}
    \tikzfig[scale=0.4]{04_FT_dist_stab/w4_circif_square}
\end{equation}
for $k_1 \oplus k_2 = k$.

We can rewrite the circuits of both \autoref{naive_sc_circuit} and \autoref{ft_sc_circuit}  into distributed weight-$4$ extraction circuits with a $2-2$ split, where the data qubits are split as pairs on each QPU. In the context of rotated surface codes, this allows for an efficient splitting of logical patches and lattice surgery operations along straight seams, as we will discuss in detail in the later sections of the paper.

The first goal of our distribution synthesis is to transform the monolithic extraction circuits into a form where the data qubits are correctly split between QPUs and only distribution edges operate between QPUs in the form of Bell pairs or two-qubit measurements.
\autoref{naive_dist_circuit} shows the distribution synthesis for the simple extraction circuit.
In the first step, we split the circuit between modules as indicated by the dotted line i.e., the first half of the circuit is executed on the first QPU and the second half on the second QPU\footnote{The syndrome extraction circuit of \autoref{naive_sc_circuit} also allows for a $3-1$ split, where the distribution edge can be bent into a Bell pair in the same way as for the $2-2$ split.
    In contexts where some hook errors are benign (as will be discussed in later sections), this can allow for a reordering of the data qubits that achieves global fault tolerance.
    If the distribution edge is not sufficiently improved to compensate for the additional interconnect noise, it also ensures that the hook error prone part of the circuit (the edge between the third and fourth spider of the auxiliary qubit) is not subject to more noise than in the monolithic setting. For the purpose of this section, where we consider context-free fault tolerance, the focus will be on the syndrome extraction circuit of \autoref{ft_sc_circuit}, where the $2-2$ split is the natural choice, and we therefore restrict our analysis to the $2-2$ split also for the circuit of \autoref{naive_sc_circuit}.}. When the auxiliary qubit crosses the seam, it introduces a distribution edge (indicated by a blue dashed line), which is not yet implementable.
Therefore, in the second step we transform the auxiliary initialization to a measurement, resulting in a circuit that is symmetric when reflected across the QPU-separation (up to a reordering of the lines representing on-chip data qubits), and where the distribution edge becomes a (possibly improved) Bell pair.
\begin{equation}\label{naive_dist_circuit}
    \tikzfig[scale=0.4]{04_FT_dist_stab/w4_simple_dist}
\end{equation}

For the hook-free circuit in \autoref{ft_sc_circuit}, we split the circuit such that each QPU carries two data qubits and one auxiliary qubit each. The two-qubit measurements between the auxiliary qubits are no longer local and become distribution edges. We can further simplify the circuit by merging the two distribution edges into a single edge with a higher level of fault improvement:
\begin{equation}\label{ft_dist_circuit}
    \tikzfig[scale=0.4]{04_FT_dist_stab/w4_square_dist}
\end{equation}
This circuit is again symmetric under reflection across the QPU-separation.

To construct a fault-tolerant distributed circuit, we finally need to add enough fault improvement $d_X, d_Y, d_Z$  to compensate for the interconnect noise.
Given that the decomposition of \autoref{naive_sc_circuit} is not fault tolerant, while the one of \autoref{ft_sc_circuit} is,
it will come as no surprise that a fault-tolerant distributed circuit can only be obtained from the latter, but it is instructive to consider precisely where fault tolerance breaks down for the distributed version of the former. To this end, we analyze the implications of the fault weights on the distribution edges.
\autoref{w4_bell_effects} below shows the isolated effect of the distribution edges under edge-flip noise by idealizing all on-chip edges (indicated in pink). The resulting faults are shown as fault gadgets:
\begin{equation}\label{w4_bell_effects}
    \tikzfig{04_FT_dist_stab/w4_plaquette3}
\end{equation}

We observe that for both circuits the same three types of resulting faults are triggered by faults on the distribution edges:
\begin{enumerate} \itemsep0em
    \item  $Z$ faults on the distribution edges lead to hook errors on one of two QPUs. This behavior matches the hook errors of the simple circuit in the monolithic setting.
    \item  $X$ faults on the distribution edges lead to readout errors of the plaquette measurement.
    \item $Y$ faults on the distribution edges lead to a correlated hook and readout error.
\end{enumerate}
Decreasing or increasing the improvements $d_X, d_Y, d_Z$ makes these three faults more or less likely to occur.
To construct a fault-tolerant extraction circuit under circuit-level noise, the weights of the three types of faults listed above must be greater than or equal to the lowest possible weights of an equivalent combination of atomic faults, which are as follows:
\begin{enumerate} \itemsep0em
    \item The combination of the two data qubit errors has a probability of $p^2$ to occur, requiring $\frac{d_Z}{\ebitnoise_Z}\geq 2$.
    \item Readout errors have a probability $p$, requiring $\frac{d_X}{\ebitnoise_X}\geq 1$.
    \item The effect of $Y$ errors matches the combination of a correlated measurement flip plus data qubit error, which is an atomic fault of probability $p$ under circuit-level noise, with a second data qubit error of probability $p$; therefore, overall requiring $\frac{d_Y}{\ebitnoise_Y}\geq 2$.
\end{enumerate}

The above considerations ensure that the fault-improvement required for the distribution edges alone does not break fault tolerance.
However, for a fault-tolerant syndrome measurement, these distribution edges must also be embedded in a circuit with local fault tolerance on each module.
The distributed simple extraction circuit is still prone to hook errors from on-chip faults, so while a fault-improvement of $\frac{d_Z}{\ebitnoise_Z}\geq 2$ would suppress hook errors from the distribution edge in the simple circuit derived from \autoref{naive_sc_circuit}, the overall circuit will remain non-fault-tolerant, and any improvement $\frac{d_Z}{\ebitnoise_Z}\geq 1$ will not increase the quality of the extraction circuit in terms of fault tolerance.

In contrast, the logical performance of the distributed version of the hook-free circuit derived from \autoref{ft_sc_circuit} is \emph{mainly held back by the distribution edge}. Hence, with sufficient fault-improvement $d_X,d_Y,d_Z$, the distributed version is also fault-tolerant:
\begin{equation}\label{dist_square_ft_goal}
    \tikzfig[scale=0.4]{04_FT_dist_stab/w4_square_dist2}
\end{equation}

We note that the tools developed in \autoref{sec:fault_improvement} allow us to fault-improve distribution edges in all three Pauli bases, but that they can neither be improved \emph{independently} nor \emph{arbitrarily}.
Due to local QPU noise, we can improve $w_Z =\frac{d_Z}{\ebitnoise_Z}$ ($w_X = \frac{d_X}{\ebitnoise_X}$) up to $w_X \leq 2$  ($w_Z \leq 2$) for a distribution edge of $XX$ type ($ZZ$ type) and the other weight is upper-bounded to $w_X \leq 1$  ($w_Z \leq 1$).
Additionally, we cannot independently protect against $Y$ errors but only indirectly improve $w_Y=\frac{d_X-1}{\ebitnoise} + \min(1, \frac{d_Z}{\ebitnoise})$ through our choice of $d_X$ and $d_Z$ (see \autoref{prop:repeat_circuit}). For asymmetric noise with $\ebitnoise_Y \gg \ebitnoise_X, \ebitnoise_Z$, an overprotection against $X$ or $Z$ errors might be necessary to sufficiently improve against $Y$ errors, e.g., choosing $d_Z=\ebitnoise_Y$ to improve $w_Y\ge1$. On the other hand, in a symmetric noise setting we overprotect against $Y$ errors with the repetition strategy. For distillation based improvement, this depends on the chosen code. Altogether, we see that despite the constraints on the amount of possible improvement,
we can {always} construct an implementable distributed extraction circuit that reaches the fault weights of \autoref{dist_square_ft_goal} and is therefore fault-tolerant, with the overhead depending on the noise levels.

As a final example, \autoref{dist_square_ft_practice} below shows how the required improvement for the fault-tolerant extraction circuit of \autoref{dist_square_ft_goal} can be achieved under symmetric interconnect noise $\ebitnoise_X = \ebitnoise_Y = \ebitnoise_Z=\ebitnoise$ using the repetition of \autoref{sec:Repetition} with $d_X=\ebitnoise$ and $d_Z=2\ebitnoise$:
\begin{equation}\label{dist_square_ft_practice}
    \tikzfig[scale=0.4]{04_FT_dist_stab/w4_square_dist3}
\end{equation}
We see that we reach the desired protection against $X$ and $Z$ errors, and also sufficiently improve against $Y$ errors, for any $\ebitnoise$. The last fault equivalence of \autoref{dist_square_ft_practice} follows from the fact that, for both diagrams, the distribution edge's fault weight matches or surpasses the weight of their corresponding atomic faults.

Considering improvements $d_X,d_Y,d_Z$ such that $d_Y \geq \min(d_X,d_Z)$, as in the repetition based strategy, we will use the notation $X(d_X,d_Z)$ to denote the improvements of $X$ stabilizers, and $Z(d_X,d_Z)$ to denote the improvements of $Z$ stabilizers. Assuming again symmetric interconnect noise $\ebitnoise$, the context-free effects of different strategies are summarized in \autoref{tb:plaq_impl} for the weight-4 $Z$ stabilizer. The $X$ stabilizer follows by swapping all $X\leftrightarrow Z$.

\begin{table}[h]
    \centering
    \begingroup
    \renewcommand{\arraystretch}{1.2}
    \setlength{\tabcolsep}{10pt}
    \begin{tabular}{l cc}

        \multirow{2}*{\textbf{Strategy}} & \multicolumn{2}{c}{\bf Error Probability}                               \\ \cmidrule(lr){2-3}
                                         & \textbf{W-2 Hook}                         & \textbf{Readout}            \\
        \midrule
        $Z({\ebitnoise,2\ebitnoise})$    & $p^2$                                     & $p$                         \\
        $Z({\ebitnoise,\ebitnoise})$     & $\mathbf{p}$                              & $p$                         \\
        $Z({1,\ebitnoise})$              & $\mathbf{p}$                              & $\mathbf{p^{1/\ebitnoise}}$ \\
        $Z({1,1})$                       & $\mathbf{p^{1/\ebitnoise}}$               & $\mathbf{p^{1/\ebitnoise}}$ \\
        \bottomrule
    \end{tabular}

    \endgroup

    ~\\

    \caption{The context-free effects of different choices of improvement strategies in distributed weight-4 stabilizer measurements under Bell pair noise $(w,w,w)$ with $w=\frac{1}{\ebitnoise}$. Reducing the amount of improvement increases the probability of readout and weight-2 hook errors. Hook errors always occur parallel to the distribution seam. With repetition-based improvement for a $Z$ stabilizer, $d_Z$ is determined by the number of outer repetitions and $d_X$ by the inner, and vice versa for an $X$ stabilizer.
        Error probabilities that break fault-tolerance are indicated in bold.
    }
    \label{tb:plaq_impl}
\end{table}

\subsection{Weight-6 Stabilizer Measurements}\label{subsec:Weight-6-stabilizers}

The approach presented in the weight-$4$ stabilizer measurement case can be similarly adapted to weight-$6$ stabilizers.
Using the fault-equivalent rewrite of a $6$-legged spider introduced in~\cite{rodatzFaultToleranceConstruction2025}, we obtain the following fault-tolerant ZX-diagram, which we will use as an initial point for distributed circuits:
\begin{equation}\label{w6_init}
    \tikzfig[scale=0.4]{04_FT_dist_stab/w6_circif_mono}
\end{equation}
We first propose a possible $3$--$3$ splitting and bring it into an implementable form by choosing three two-qubit measurements as distribution edges and utilizing fault-equivalent rewrites:
\begin{equation}
    \tikzfig[scale=0.4]{04_FT_dist_stab/w6_33_dist}
\end{equation}
for $k_1 \oplus k_2 \oplus k_3 = k$.

The circuit requires two auxiliary qubits per QPU. To find the required fault improvement on the distribution edges, and to understand the implications of \emph{insufficient} improvement, we need to consider which of the detecting regions from the initial rewrite in \autoref{w6_init} each distribution edge is part of. In Pauli web notation~\parencite{bombinUnifyingFlavorsFault2024}, the two relevant independent detecting regions are as follows:
\begin{equation}\label{w6Pauliwebs}
    \tikzfig[scale=0.45]{04_FT_dist_stab/w6_33_regions}
\end{equation}
The middle distribution edge is part of both detecting regions, and a $Z$ fault on this edge is undetected if there is also an independent (and inequivalent) second $Z$ fault in each detecting region, either from on-chip faults or from faults on the other distribution edges. (We do not count the combination of two $Z$ faults on the middle edge, as it is trivial.)
In the case where the distribution edges have not been sufficiently fault-improved, the most likely scenario is a $Z$ fault on all three distribution edges, which would propagate to a weight-$3$ hook error on one of the QPUs.
Additionally, the left and right distribution edges are part of only one of the two detecting regions. If a $Z$ fault on the distribution edge occurs together with another local $Z$ error that is also only part of the same detecting region, the errors propagate at most to a weight-$2$ hook error on one of the QPUs.
Similar to the weight-$4$ case, $X$ faults on the distribution edges translate to readout errors. In total, we see that each distribution edge must be improved to $w_X, w_Y, w_Z \geq 1$ to achieve fault-tolerance.
\begin{equation}\label{w6_33_alt}
    \tikzfig[scale=0.45]{04_FT_dist_stab/w6_33_regions_other}
\end{equation}

Alternative implementations of a $3$--$3$ splitting can be achieved, as shown in \autoref{w6_33_alt}, by shifting the distribution seam closer to one QPU than to the other. However, this means that the distribution edges all sit within the same detecting region or even outside of detecting regions, which reduces the ability to distinguish faults occurring on them from one another, and also increases the number of auxiliary qubits that need to fit on one of the QPUs. For these reasons we only consider the symmetric split.

Another possible distribution of interest for the weight-$6$ measurement is a $4$--$2$ split. Starting from the decomposition of the $6$-legged spider in \autoref{w6_init}, multiple choices of distribution edges realize the desired splitting. \autoref{w6_42_dist} shows a variant that utilizes two distribution edges:
\begin{equation}\label{w6_42_dist}
    \tikzfig[scale=0.4]{04_FT_dist_stab/w6_circif_42_v2}
\end{equation}
for $k_1 \oplus k_2 = k$.
The circuit requires two auxiliary qubits on the QPU that hold four data qubits and one auxiliary qubit on the QPU that holds two data qubits.
The distribution edges are part of both detecting regions:
\begin{equation}\label{w6webs_42}
    \tikzfig[scale=0.45]{04_FT_dist_stab/w6_42_regions}
\end{equation}
Consequently, simultaneous $Z$ errors on both edges stay undetected and propagate to a weight-2 hook error on the QPU that holds only two data qubits.
Once again, each distribution edge must be improved to $w_X, w_Y, w_Z \geq 1$ to achieve fault-tolerance.

With improvements such that $d_Y \geq \min(d_X,d_Z)$, and with symmetric noise $\ebitnoise$, the context-free effects of different strategies for weight-6 stabilizer measurements are shown in \autoref{tb:plaq_impl_wt6}.

\begin{table}[h]
    \centering
    \begingroup
    \renewcommand{\arraystretch}{1.2}
    \setlength{\tabcolsep}{10pt}

    \begin{tabular}{ll ccc}

        \multirow{2}*{\textbf{Split}} & \multirow{2}*{\textbf{Strategy}}                     & \multicolumn{3}{c}{\bf Error Probability}                                                             \\ \cmidrule(lr){3-5}
                                      &                                                      & \textbf{W-2 Hook}                         & \textbf{W-3 Hook}           & \textbf{Readout}            \\
        \toprule
        \multirow{4}*{3-3}            & $Z({\ebitnoise,\ebitnoise, \ebitnoise, \ebitnoise})$ & $p^2$                                     & $p^3$                       & $p$                         \\
                                      & $Z({\ebitnoise,1, \ebitnoise, \ebitnoise})$          & $p^2$                                     & $\mathbf{p^2}$              & $p$                         \\
                                      & $Z({\ebitnoise,1, \ebitnoise, 1})$                   & $p^2$                                     & $\mathbf{p^2}$              & $\mathbf{p^{1/\ebitnoise}}$ \\
                                      & $Z({1,1, 1, 1})$                                     & $\mathbf{p^{1 + 1/\ebitnoise}}$           & $\mathbf{p^{3/\ebitnoise}}$ & $\mathbf{p^{1/\ebitnoise}}$ \\
        \midrule
        \multirow{3}*{4-2}            & $Z({\ebitnoise, \ebitnoise, \ebitnoise})$            & $p^2$                                     & $p^3$                       & $p$                         \\
                                      & $Z({\ebitnoise, \ebitnoise, 1})$                     & $p^2$                                     & $p^3$                       & $\mathbf{p^{1/\ebitnoise}}$ \\
                                      & $Z({1,1,1})$                                         & $\mathbf{p^{2/m}}$                        & $\mathbf{p^{1+2/m}}$        & $\mathbf{p^{1/\ebitnoise}}$ \\
        \bottomrule
    \end{tabular}
    \endgroup

    \caption{The context-free effects of different choices of improvement strategies in distributed weight-6 stabilizer measurements under Bell pair noise $(w,w,w)$ with $w=\frac{1}{\ebitnoise}$. Reducing the amount of improvement increases the probability of hook and readout errors. Distributed hook errors always occur parallel to the distribution seam.
        Error probabilities that break fault-tolerance are indicated in bold.
    }
    \label{tb:plaq_impl_wt6}
\end{table}

\subsection{Adaptability to Resource Constraints and Repeat-Until-Success}\label{section4_3}

We have seen that the above circuits for distributed weight-$4$ and weight-$6$ stabilizer measurements are adaptable to varying noise levels of physical Bell pairs, and that the degree of edge improvement can be adjusted to achieve fault tolerance. However, the noisiness of Bell pairs between modules is not the only hardware constraint.
The physical Bell pairs are consumed during a distributed computation and must constantly be created and shared between the QPUs. The \emph{generation rate} of Bell pairs restricts how fast distributed computations can proceed, as do the number of
\emph{interconnect channels} between QPUs and the number of availeble \emph{communication qubits} i.e., specialized qubits for transduction.
Another constraint is space and qubit connectivity, limiting how many auxiliary qubits can be used per stabilizer and how they can interact. This consideration is especially important for hardware with fixed qubit layout such as superconducting hardware.

If the Bell pair rate is limited, one solution is to idle ahead of each round of stabilizer measurement, until all Bell pairs that will be used in the circuit have been generated and transferred onto auxiliary qubits. However, this requires sufficient space for these ``storage'' auxiliary qubits. Another option, which is more space efficient, is to spread the distribution edges maximally in time, transferring Bell pairs only as needed. This may still require idling between each distribution edge, but requires fewer auxiliary qubits.

Conversely, if the Bell pair rate is high and there is sufficient space, reducing circuit depth by maximizing parallelization reduces the effects of idle noise and the total runtime. It is clear that the desired properties of a distributed syndrome extraction circuit depend strongly on the hardware. We now show one of the major advantages of deriving circuits via fault-equivalent rewrites:
the \emph{flexibility} to optimally utilize the underlying quantum hardware while considering its resource constraints.

In the discussion below, we focus on the weight-4 stabilizer measurement. The discussion extends naturally to the weight-6 stabilizer measurement or, more generally, to weight-$N$ cases.
We begin by analyzing distribution edge improvement via repetition:
\begin{equation}\label{w4_seq_rep}
    \tikzfig[scale=0.5]{04_FT_dist_stab/w4_sequential_repeat}
\end{equation}
The circuit is efficient in terms of space requirements for the underlying distributed hardware:
It requires only one Bell pair at a time and a single auxiliary qubit per QPU.
In early fault tolerance where qubits are scarce, this gives more room to implement larger code distances or allows for more computational workspace.

On the flip side, it has two major disadvantages.
First, because it does not allow parallel processing of Bell pairs, the syndrome extraction has a slow execution time $T$.
Let us denote the time to generate a Bell pair as $\tau_\text{bell}$, and $\tau_\text{gate}$ as the time to execute a gate in the native gate set.
Since all Bell pairs are processed and generated sequentially, the overall execution time scales as $T=\mathcal{O}(\dout \cdot \din)\tau_\text{bell}$ + $\mathcal{O}(\dout \cdot \din)\tau_\text{gate}$, given the improvements $\dout$ and $\din$.
While this circuit, which we will denote from now on as the \emph{\vone} approach, can operate under harsh space constraints, it suffers from a long runtime.
The second disadvantage is that data interaction and Bell pair interaction are interleaved as a single process: the sequential Bell pair processing is wrapped in between an initial and final data interaction step.
This makes it impossible to preprocess the Bell pairs separately. Such separate processing can reduce the overall runtime, since some of the operations involved can be scheduled in parallel with other operations in the larger circuit.
Separate processing also makes \emph{repeat-until-success} schemes possible: here, instead of using the detectors present in the fault-improved edge for decoding, the outcome of the subcircuit that processes the Bell pairs is discarded whenever a fault is detected, and new Bell pairs are generated and processed until all detectors return a trivial measurement outcome. In the integrated circuit, this strategy would not be possible, as the Bell pairs are already entangled with the first data qubits by the time the detector outcomes are obtained. The repeat-until-success strategy simplifies the overall decoding problem when distributed stabilizers are used within a larger code, as discussed further in \autoref{sec:integrated_decoding}.
This way, the choice of circuit can also depend on the hardness of the decoding problem and on the \emph{classical} resources available for decoding.

Based on these observations, we leverage the strength of ZX calculus to reorder the above distributed weight-4 stabilizer measurement and divide the measurement circuit into two steps. We achieve this by bending the diagram at the distribution edge so that we first create the improved distributed connection, and only afterward interact with the data qubits:
\begin{equation}\label{RUS_hl}
    \tikzfig[scale=0.5]{04_FT_dist_stab/RUS}
\end{equation}

We now show how to transform the \vone circuit to a highly parallel circuit using such a bending. This begins with the rearrangement of the inner nested improvements:
\begin{equation}\label{RUS_inner}
    \tikzfig[scale=0.4]{04_FT_dist_stab/RUS_parallel_inner}
\end{equation}
In the first step of the above equation, we bend open the diagram along the distribution seam.
All Bell pairs are now generated and processed at the same time on individual auxiliary qubit pairs.
The two auxiliary qubits of the \vone approach are transformed into columns of three-legged spiders. These columns are not yet implementable, requiring the second step where
we keep every other pair of three-legged spiders as two-body measurements on the Bell pairs and then connect these two-body measurements with the following local Bell measurements and also move the outgoing three-legged red spiders.

We can leverage the same idea for the outer improvements. To achieve parallel nested repetition, the outer repetitions are first similarly bent and interpreted as parallel two-body measurements, and the parallel inner improvement from \autoref{RUS_inner} is then used inside:
\begin{equation}\label{RUS_nested}
    \tikzfig[scale=0.35]{04_FT_dist_stab/RUS_parallel_nested}
\end{equation}
Notice how the CNOT gates interacting with the data qubits are now separated at the end of the circuit, making a repeat-until-success strategy possible.  Also note that although these final CNOT gates are shown sequentially to make the diagram more readable, they can both be scheduled in parallel. If a repeat-until-success scheme is not applied, all measurements can also be scheduled in parallel.
In this way, the circuit achieves a significantly reduced runtime compared to the sequential circuit. It completely parallelizes the generation and processing of Bell pairs, achieving a constant scaling of the execution time with $\din$ and $\dout$: $T=\mathcal{O}(1)\tau_\text{bell}$ + $\mathcal{O}(1)\tau_\text{gate}$. However, it requires at least $\din \dout$ auxiliary qubits per QPU onto which Bell pairs can be transferred, plus additional auxiliary qubits to mediate the required two-qubit measurements if these are not available as native gates.

The circuits in \autoref{w4_seq_rep} and \autoref{RUS_nested} display the extremes of full sequential and full parallel processing.
We can interpolate between these two approaches to gain different trade-offs between space and time, which allows for more fine-grained optimization depending on available resources in terms of Bell pair generation and transmission, space for auxiliary qubits, and computation time.

One helpful tool for interpolating between levels of parallelization is local Bell state preparations and Bell measurements on each QPU, which can easily be constructed using CNOTs and resets or single-qubit measurements.
For instance, we can show that a repeat-until-success-friendly structure can also be achieved while maintaining sequential Bell pair processing:
\begin{equation}\label{RUS_sequential}
    \tikzfig[scale=0.38]{04_FT_dist_stab/RUS_sequential2}
\end{equation}
By employing local Bell states, we replace each local auxiliary qubit with two entangled auxiliary qubits.
The initial data interactions can now be moved across the bend representing the Bell state preparation, so that they can be postponed until after the fault-improved distribution edge is constructed.
We could also move half of the outer $XX$ type distribution edges across the bend, allowing two outer distribution edges at a time to be processed in parallel.

As a second and last example of the use of local Bell pairs, we show how more nuanced adaptations can be made by constructing a slightly different parallel variant of the inner improvements, inspired by the Shor code encoder:
\begin{equation}\label{v2}
    \tikzfig[scale=0.38]{04_FT_dist_stab/inner_parallel_martin}
\end{equation}
As with \autoref{RUS_inner}, this approach also requires enough auxiliary qubits to host the Bell pairs of each inner repetition all at once, but the above strategy processes the Bell pairs sequentially through a chain of CNOT gates. While this approach slightly increases execution time compared to the approach in \autoref{RUS_inner}, it avoids the need for any additional auxiliary qubits to implement local two-body measurements.
We denote the combination of this parallel variant of inner improvement with sequential outer improvement as the \emph{\vtwo} approach. In what follows, we use it as a representative middle ground between the \vone and completely parallel approaches, as in \autoref{RUS_nested} or the \ed approach.
The partial parallelization is reflected by the scaling in execution time, which falls between the extremal approaches:
The \ed approach processes all Bell pairs in parallel, followed by parallel $CNOT$ chains of first the inner and then the outer improvements. Therefore, the overall extraction time scales in $T=\mathcal{O}(1)\tau_\text{bell}$ + $\mathcal{O}(\dout+\din)\tau_\text{gate}$. In contrast, the extraction time of the \vtwo approach, which only processes inner Bell pairs in parallel, scales in $T=\mathcal{O}(\dout)\tau_\text{bell}$ + $\mathcal{O}(\dout \cdot \din)\tau_\text{gate}$.

\section{Integrating Distributed Stabilizers Within a Larger Code}\label{sec:Context_aware}

Having derived distributed stabilizer measurement circuits in isolation, we next turn to the overarching goal: distributed fault-tolerant computation.
When distributed stabilizers are integrated into a larger context, two important questions arise. First, which optimizations can be made based on the larger context? Second, how should decoding be performed?

\subsection{Context-aware Fault Tolerance}

Up until this point, we have considered context-free fault tolerance: ZX diagrams that are fault-equivalent to the corresponding  idealised ZX diagram:
\[ \tikzfig[scale=0.5]{045_context_aware/def-fault-tolerance} \]

Syndrome extraction using such fault-tolerant measurement circuits always preserves the code distance $d$ of a given stabilizer code, but may not do so in the most resource-efficient manner.

In a global context, the fault-tolerance requirements for local circuits can often be relaxed. Generally, it is not trivial to determine which relaxations of local fault tolerance are allowed by the global context. An advantage of the relational nature of fault equivalence is that this task only needs to be performed once. In particular, this means that if a relaxation has already been found in the monolithic setting, the corresponding circuit can serve as a starting point for further rewrites to adapt it to the distributed setting. Alternatively, correlated faults known to not reduce the circuit distance can be added onto an idealised specification via fault gadgets, ahead of applying fault-equivalent rewrites to find a distributed implementation.

We here detail how context informs the treatment of two classes of faults,  \emph{hook errors} and \emph{timelike errors}, both of which can occur due to Bell pair errors in the distributed setting, and both of which, in certain contexts, do not require reaching on-chip noise levels. By hook errors, we mean combinations of atomic faults that propagate to \emph{a larger number} of faults on the data qubits.  By timelike errors we mean combinations of atomic faults that change a stabilizer measurement outcome. (A more general usage of the term ``hook error'' would include fault combinations that propagate to a larger number of data errors \emph{and} readout errors, we instead refer to these as combined hook errors and readout errors.) When using the repetition-based edge-improvement of \autoref{prop:repeat_circuit}
within the syndrome extraction circuits for $Z$ type stabilizers shown in \autoref{sec:FT_dist_stab_meas},
outer $XX$ type repetitions catch $Z$ faults that lead to hook errors, while inner $ZZ$ type repetitions catch $X$ faults that lead to timelike errors. This allows for separate relaxations in the protection of the two error classes.

\subsubsection{Hook errors}

While the stabilizer-measurement circuits of \autoref{sec:FT_dist_stab_meas} protect against all hook errors, not all hook errors reduce the circuit distance of a given larger code that the measurement is embedded in.
We call a set of hook error \emph{benign} if it does not affect the circuit distance, and \emph{malign} if it reduces the circuit distance.

For hook errors to be benign within a code of distance $d$, they must not overlap with any logical representative of the code in such a manner that the corresponding logical may be created by a fault of weight less than $d$. For individual weight-$2$ hook errors, for example, they must not align with the minimum weight logical representatives. For higher-weight hook errors and sets of hook errors, larger weight representatives must also be considered.
This criterion is captured by the notion of \emph{residual distance} in~\cite{strikis2026}. In the rotated surface code, the minimum weight logicals follow a straightforward pattern and the hook errors from the CNOT based syndrome extraction circuit in \autoref{naive_sc_circuit}  have at most weight two (up to stabilizer equivalence), making it easy to determine which hook errors are benign. In other settings, such as high-rate qLDPC codes, merely finding the set of minimum weight logical representatives is not easy. As an illustrative example, in Section $4.2$ of~\cite{beverland2025a}, the authors note that testing all weight-$12$ $X$-operators in the gross code that could be caused by bit-flip noise, to see whether they form logical operators, would require evaluating approximately $10^{17}$ cases.

A proposed automated approach to circuit design is to map the identification of minimum weight logicals to a MaxSAT problem formulation on the decoding graph and then iteratively improve extraction circuits to increase the residual distance~\cite{gidneyStimFastStabilizer2021,viszlaiProphuntAutomatedOptimization2026}. However, due to the NP-hardness of MaxSAT, this approach has limited capability.
Another approach is the left-right circuits of~\cite{strikis2026}, which creates efficient extraction circuits using graph-coloring and evaluates their {residual distance}.

\subsubsection{Timelike errors}

Faults that only flip the stabilizer measurement outcome form an interesting subclass of errors, as the effect of such errors on the logical performance depends strongly on the context. In many settings, there is a baseline of stabilizer measurement outcomes to compare to, so that even a long sequence of stabilizer measurement errors can be detected. The clearest example is a memory experiment in which the logical qubit is initialized and measured in a given Pauli basis. In this case, the only timelike boundaries are the initialization and the final readout. The only timelike error strings that can terminate there are those that commute with the logical operator, meaning that all timelike errors are benign.

Even in more general scenarios where the initial or final values of the logical operator cannot be used, the surrounding past and future stabilizer outcomes can provide a baseline for detecting erroneous stabilizer outcomes. In the setting of CSS codes under transversal operations~\cite{cainCorrelatedDecodingLogical2024, zhouLowoverheadTransversalFault2025} have shown that it suffices to perform $\mathcal{O}(1)$ rounds of syndrome extraction between logical gates to achieve protection against timelike errors, a result that relies on the fact that the resulting syndrome outcomes can be compared to baselines via correlated decoding.

The setting of lattice surgery differs from that of a quantum memory since it requires adding to the instantaneous stabilizer group a \emph{new} stabilizer: the one corresponding to the joint parity measurement of the logical qubits. There is no baseline to compare this new stabilizer to. Stated differently, timelike boundaries are added onto which logical error strings can terminate. Unlike in the case of a memory experiment, these strings can affect the logical information, and the new timelike boundaries must be sufficiently far apart that no such strings have a weight less than $d$. An intuitive sketch of how lattice surgery differs from the memory setting is shown in \autoref{fig:pipes}. \cite{cainCorrelatedDecodingLogical2024, zhouLowoverheadTransversalFault2025} also recover this result, explicitly showing how the logical performance is affected if the timelike boundaries during lattice surgery are too close.

\begin{figure}
  \hspace{.7cm}
  \begin{subfigure}[t]{0.3\textwidth}
    \centering
    \includegraphics[height=3cm]{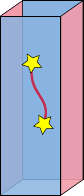}
    \caption{}
    \label{fig:sc_mono_mem}
  \end{subfigure}
  \hspace{3cm}
  \begin{subfigure}[t]{0.3\textwidth}
    \centering
    \includegraphics[height=3cm]{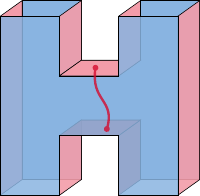}
    \caption{}
    \label{fig:sc_mono_mem}
  \end{subfigure}
  \caption{Examples of timelike errors in the surface code, illustrated in the style of pipe diagrams: (a) A timelike error string that is detectable, with the stars representing detection events due to a mismatch in sequential stabilizer outcomes. (b) Joint logical measurement, with a timelike error string that is undetectable due to terminating on the timelike boundaries.}\label{fig:pipes}
\end{figure}

\subsection{Integrated Decoding}\label{sec:integrated_decoding}

When integrating distributed stabilizers within a larger code, the decoding strategy becomes an important question. Here, treating the entanglement distillation as an integrated part of the stabilizer measurements in the larger code has a distinct advantage over treating it as a separate process: integrated decoding can significantly reduce the resource requirements compared to a separate, two-step decoding process.

To see why, it is illustrative to consider decoding in the setting of concatenated CSS codes and to compare the distance achieved by a two-step decoder with the true code distance.\footnote{Discussion inspired by the stack exchange thread \url{https://quantumcomputing.stackexchange.com/questions/34334/distance-of-the-concatenated-quantum-error-correcting-code}.} Consider an inner code with distance $d_i$ and an outer code with distance $d_o$. A codeword in the concatenated code has weight $d_i d_o$,
requiring $\left\lfloor \frac{d_i d_o +1}{2} \right\rfloor$
faults for an optimal decoder to fail. However, a two-step decoder will only reach half the distance: a failure in the outer decoder can be induced by $\left\lfloor \frac{d_o+1}{2} \right\rfloor$ failures of the inner decoders, each of which can be induced by $\left\lfloor \frac{d_i+1}{2} \right\rfloor$ faults.

Similarly, the required distance $\dBell$ can be larger when a stabilizer code used for distillation is decoded separately than in an integrated setup. In what follows, we consider, for simplicity, the same improvement in the $X,Y,Z$ bases.
In \autoref{sec:fault_improvement}, distribution edges constructed out of ebits with weight $w=\frac{1}{\ebitnoise}$ are improved up to weight $w=1$ through codes of distance $\dBell=\ebitnoise$. The distribution edges are used to rewrite monolithic syndrome extraction circuits into distributed ones.
As the distributed circuits are \emph{fault-equivalent} to their monolithic counterparts, the chosen distance $\dBell=\ebitnoise$ is sufficient in order for the distributed code to reach the same circuit distance as its monolithic counterpart under optimal decoding. Meanwhile, if a stabilizer code is used for distillation and decoded by a separate decoder, protection against Bell pair noise of weight $w=\frac{1}{\ebitnoise}$ requires twice the distance in order for the logical Bell pair noise to reach weight 1, as the probability for the separate decoder failure scales as $p^{w \left\lfloor \frac{\dBell+1}{2}\right\rfloor}$.

The above reasoning shows that \emph{if} an efficient integrated decoder that reaches full distance can be constructed, it can lead to lower resource requirements. However, constructing such a decoder may be nontrivial, whereas constructing a two-step decoder is straightforward under the assumption that both the distillation code and the larger code have efficient decoders. One possibility that falls between these two options is that an efficient integrated decoder can be constructed that achieves a higher distance than a two-step one, but not the full distance. Such decoding can also lead to lower resource requirements than the two-step process, even if it requires $\dBell > \ebitnoise$ to compensate for the loss of distance during decoding.

We note that in the case of a larger, matchable code, such as the surface code, restricting to \emph{matchability-preserving} rewrites in the derivation of the distributed stabilizers guarantees that the decoding problem remains matchable and that an efficient integrated decoder, capable of reaching full distance, can be constructed.
Such rewrites are described in~\cite{schweikart2026}.
As an example, the ``square rewrite'' of \autoref{ft_sc_circuit}, which forms the basis for distributed weight-4 plaquette measurements, is CSS matchability preserving, as are the insertion of a phase-free spider and spider-unfuse, so that the final diagram on the right-hand side of \autoref{ft_sc_circuit} is a matchability-preserving rewrite of the left-hand side.

Finally, we note that in the repeat-until-success setting, $\dBell$ atomic faults on the interconnects are required in order for the postselection to fail to reject a nontrivial fault.
Therefore, just like for integrated decoding $\dBell = \ebitnoise$ is sufficient for each repetition in order to obtain $w=1$ on the accepted improved distribution edge.
Since repeat-until-success requires an average number of repetitions larger than one (assuming nonzero noise), this means that integrated decoding still reduces the required number of Bell pairs more than repeat-until-success. However, when the average number of repetitions is not very large, repeat-until-success will still significantly reduce the required number of Bell pairs compared to two-step decoding, while only requiring decoding for the larger code. When the integrated decoding problem is too hard, this strategy is therefore an interesting alternative.

\FloatBarrier
\section{Distributed Surface Code}\label{sec:surface_code}

In this section, we demonstrate how distributed stabilizer measurements can be adapted to the context of the rotated surface code.
With context-aware fault tolerance, we can reduce the overhead compared to the weight-4 stabilizer measurements shown in \autoref{subsec:Weight-4-stabilizers}.

\begin{figure}
  \begin{subfigure}[t]{0.3\textwidth}
    \centering
    \includegraphics[height=4cm]{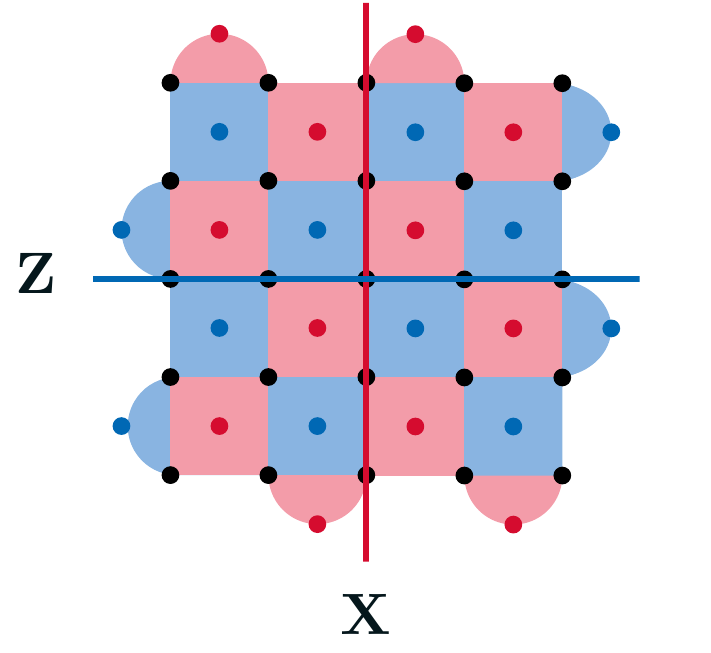}
    \caption{}
    \label{fig:sc_mono_mem}
  \end{subfigure}
  \hfill
  \begin{subfigure}[t]{0.65\textwidth}
    \centering
    \includegraphics[height=4cm]{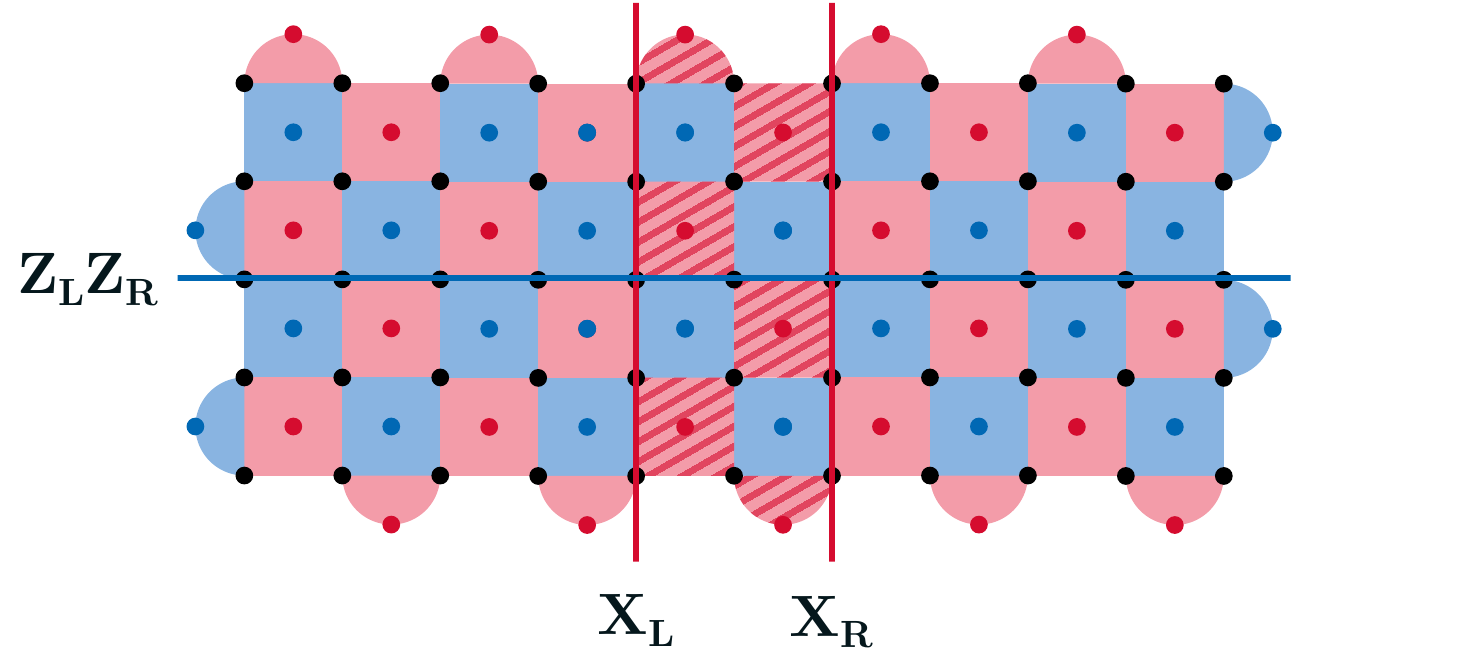}
    \caption{}
    \label{fig:sc_mono_surg}
  \end{subfigure}
  \caption{Rotated surface code. (a) A distance $d=5$ surface code patch, hosting one logical qubit. The red and blue plaquettes are the $X$ and $Z$ stabilizers, respectively. The black dots represent the data qubits, while the colored dots represent auxiliary qubits used for the stabilizer measurements. Logical $X$ forms a vertical line and logical $Z$ forms a horizontal line. (b) Surface code lattice surgery. Two patches with independent logical operators $X_L$ and $Z_L$ for the left patch and $X_R$ and $Z_R$ for the right patch are merged into a single patch.
    The added $X$ plaquettes (shaded) measure the joint $X_L X_R$ operator, classically correlating the individual logicals. By adding $X_L X_R$ to the stabilizer group, the $Z_L$ and $Z_R$ logicals are replaced by the joined $Z_L Z_R$ operator, which may create entanglement.}
  \label{fig:sc_mono}
\end{figure}

The rotated surface code is defined from a checkerboard tiling of weight-$4$ $X$ and $Z$ plaquettes (together with weight-$2$ boundary plaquettes), as shown in  \autoref{fig:sc_mono_mem}. Logical $X$ ($Z$) operators connect boundaries with $X$ ($Z$) boundary plaquettes.
Both in the context of a single distributed surface code patch and distributed lattice surgery between surface code patches hosted on separate QPUs, an important consideration  is how to choose the seam
that separates stabilizers between QPUs.
With seams that cut either vertically or horizontally through surface code patches, the number of distributed stabilizers can be minimized. However, this minimization can only be achieved with 2-2 splitting of the distributed stabilizers. With a 3-1 splitting, twice the number of stabilizers must be distributed, as shown in \autoref{fig:sc_splits}.
At the same time, the choice of splitting affects the types of hook errors that can be created, as discussed further in \autoref{sec:sc_hook_errors}.
There, we find that the difference in hook error behavior can be compensated for, so that the $2$-$2$ splitting remains the better option.
We will choose a vertical seam for both memory and lattice surgery, aligning it with the surgery boundaries in the latter case. This is shown in~\autoref{fig:sc_surg_comb}.

\begin{figure}[!htb]
  \hspace{.3cm}
  \begin{subfigure}[t]{0.4\textwidth}
    \centering
    \includegraphics[width=\textwidth]{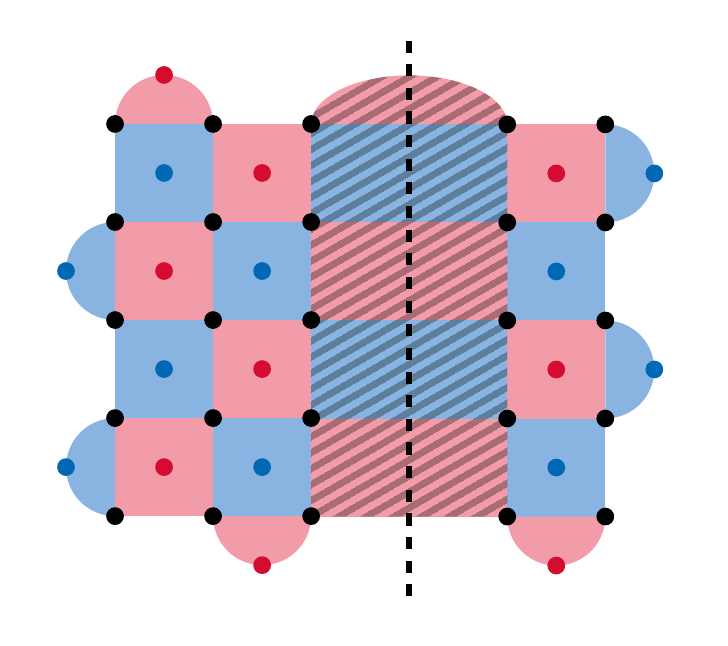}
    \caption{}
    \label{fig:sc_22}
  \end{subfigure}
  \hfill
  \begin{subfigure}[t]{0.4\textwidth}
    \centering
    \includegraphics[width=\textwidth]{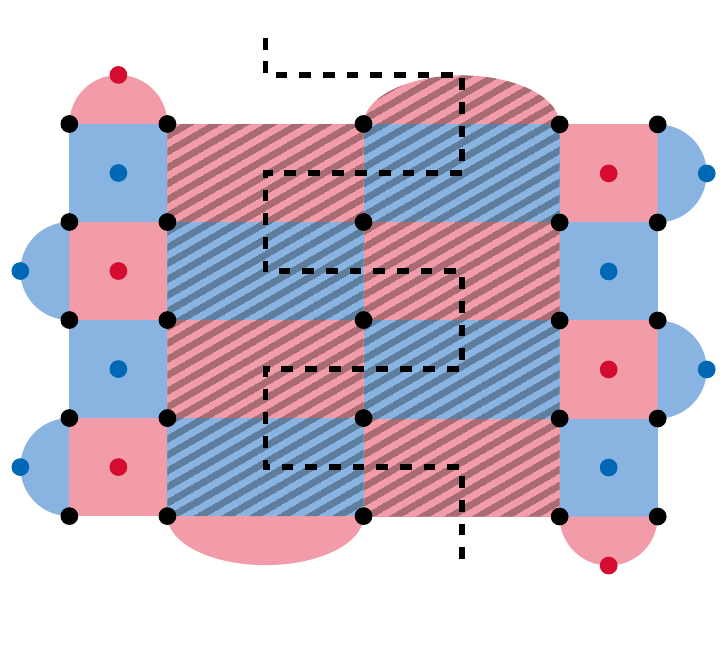}
    \caption{}
    \label{fig:sc_31}
  \end{subfigure}
  \hspace{.3cm}
  \caption{Distributed surface code memory, color conventions as in \autoref{fig:sc_mono}. The choice of how to split weight-4 plaquettes determines the distribution seam. (a) The 2-2 split introduces distributed hook errors but aligns well with the chessboard pattern of the code, requiring only a single column of plaquettes to be distributed. (b) The 3-1 split introduces only local hook errors but requires two plaquette columns to be distributed. The dashed line indicates the distribution seam. All shaded plaquettes operate between QPUs and must be distributed, possibly requiring multiple auxiliary and communication qubits.}
  \label{fig:sc_splits}
\end{figure}

\begin{figure}[!htb]
  \begin{subfigure}[t]{0.45\textwidth}
    \centering
    \includegraphics[width=\textwidth]{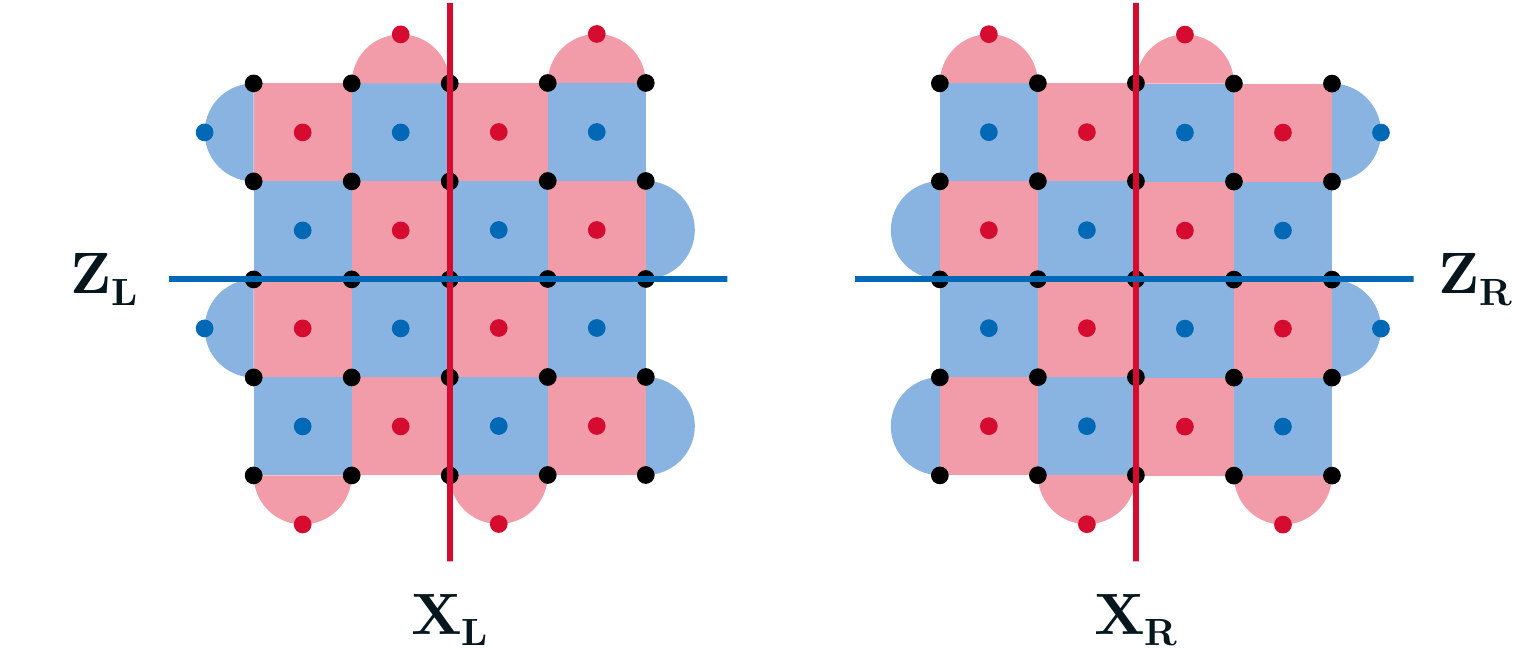}
    \caption{}
    \label{fig:sc_surg_pre}
  \end{subfigure}
  \hfill
  \begin{subfigure}[t]{0.45\textwidth}
    \centering
    \includegraphics[width=\textwidth]{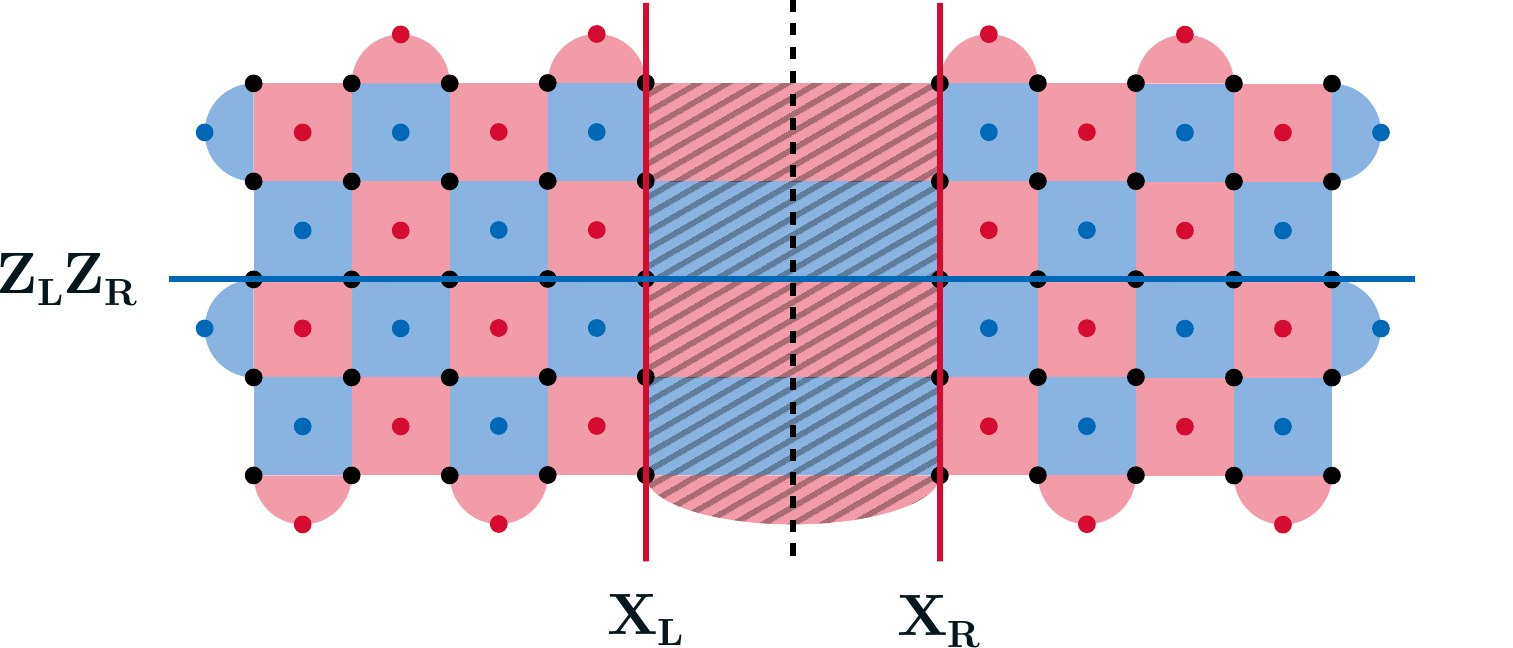}
    \caption{}
    \label{fig:sc_surg}
  \end{subfigure}
  \caption{Distributed lattice surgery: (a) Two $d=5$ surface code patches on separate QPUs. Color conventions as in \autoref{fig:sc_mono}. (b) $M_{XX}$ lattice surgery is performed by merging the two patches along $Z$ boundaries. The left surface code patch is mirrored to align plaquettes at the seam. The dashed line indicates the distribution seam. All shaded plaquettes operate between QPUs and must be distributed, possibly requiring multiple auxiliary and communication qubits.}
  \label{fig:sc_surg_comb}
\end{figure}

\FloatBarrier
\subsection{Syndrome Extraction and Benign Hook Errors in the Surface Code}\label{sec:sc_hook_errors}

For monolithic systems with a square-grid layout, the most commonly used surface-code syndrome-extraction circuit is the simple circuit introduced in \autoref{naive_sc_circuit}. In the context-free setting, this circuit is not fault-tolerant. Still, with a careful ordering of the CNOT gates that ensures hook errors are not aligned with minimum weight logicals, it can be made fault-tolerant within the larger context of the rotated surface code.

In the distributed setting, a $2$-$2$ split of the weight-$4$ stabilizer does not allow for the same CNOT ordering. Here, both $X$ and $Z$ type hook errors propagate to two data qubits on the same QPU, as shown in~\autoref{fig:sc_mem_dist}. This means that either $X$ or $Z$ type hook errors will be aligned with the corresponding logical operator, so that the circuit distance is halved.
Without additional protection against hook errors, this makes the 3-1 splitting more favorable, despite the increase in the number of distributed stabilizers. This splitting ensures both that the CNOT gates can be ordered to make hook errors benign, and that the location of atomic faults that generate hook errors does not coincide with the noisier part of the circuit~\cite{jacintoNetworkRequirementsDistributed2026, shalbyOptimizedNoiseresilientSurface2025}.

A strategic choice of fault improvement changes which splitting is the most favorable.
Given that only the hook errors from \emph{half} of the plaquettes in the 2-2 split are malign, doubling the number of distributed plaquettes through a 3-1 split is more expensive than protecting against malign hook errors in the 2-2 split.
The strategy that we employ for the 2-2 split is based on the hook-free circuit in \autoref{ft_sc_circuit}, but we relax the amount of improvement when the hook errors generated by a plaquette are benign.
Recall that in \autoref{dist_square_ft_goal}, the single distribution edge in the $Z$ type plaquette must be fault-improved to weight $(1,2,2)$ for context-free fault tolerance. With a distribution seam orthogonal to the logical $Z$ operator, $Z$ type hook errors are benign, so that an improvement to weight $(1,1,1)$ suffices for the distributed $Z$ stabilizers. Meanwhile, the $X$ type hook errors of the distributed $X$ stabilizers align with the logical $X$ operator, so that an improvement of $(2,2,1)$ is still necessary.
Compared to the context-free case, we can halve the cost of fault improvement for half the plaquettes. In comparison, a 3-1 split requires only an improvement to the weight $(1,1,1)$ per stabilizer, but doubling the number of stabilizers makes the overall cost higher.

The above hook error considerations apply not only to the memory setting, but also to distributed lattice surgery.
The logical operators of the merged patches are shown in~\autoref{fig:sc_surg_comb} (b).
Distributed hook errors are benign to the joined operator orthogonal to the seam, but are malignant to the correlated operators parallel to the seam, as shown in~\autoref{fig:sc_surg_comb_hooks}.
The plaquettes with malignant hook errors again need to be fault-improved up to context-free fault tolerance levels, while for the plaquettes with benign hook errors only half of the fault-improvement is needed.

\begin{figure}[!htb]
  \begin{subfigure}[t]{0.4\textwidth}
    \centering
    \includegraphics[width=\textwidth]{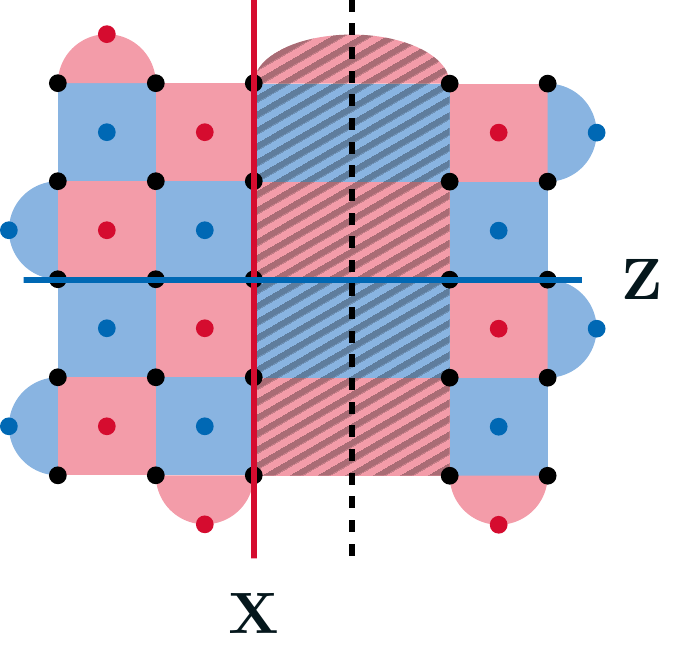}
    \caption{}
    \label{fig:sc_mem_obs}
  \end{subfigure}
  \hfill
  \begin{subfigure}[t]{0.4\textwidth}
    \centering
    \includegraphics[width=\textwidth]{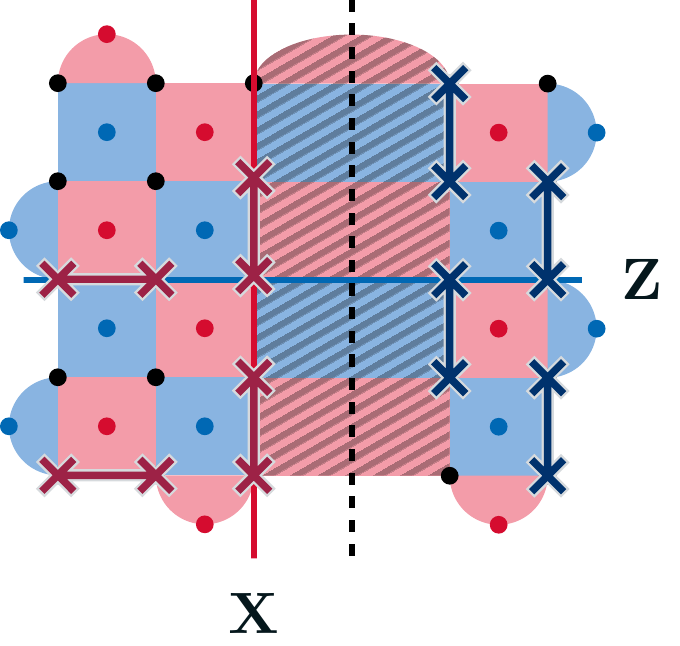}
    \caption{}
    \label{fig:sc_mem_hook}
  \end{subfigure}
  \caption{(a) A surface code patch hosting one logical qubit is distributed along a vertical seam illustrated by the dashed line. The red vertical line represents a logical $X$ operation, while the blue horizontal line represents a logical $Z$ operation. (b) The monolithic hook errors can be orthogonally aligned with the corresponding logical operators. The distributed plaquettes always produce hook errors parallel to the distribution seam. With the seam shown in the figure, $Z$ hook errors can be neglected but $X$ hook errors reduce the circuit distance, allowing a logical $X$ operation to be realized from fewer atomic faults.}
  \label{fig:sc_mem_dist}
\end{figure}

\begin{figure}[!htb]
  \begin{subfigure}[t]{0.45\textwidth}
    \centering
    \includegraphics[width=\textwidth]{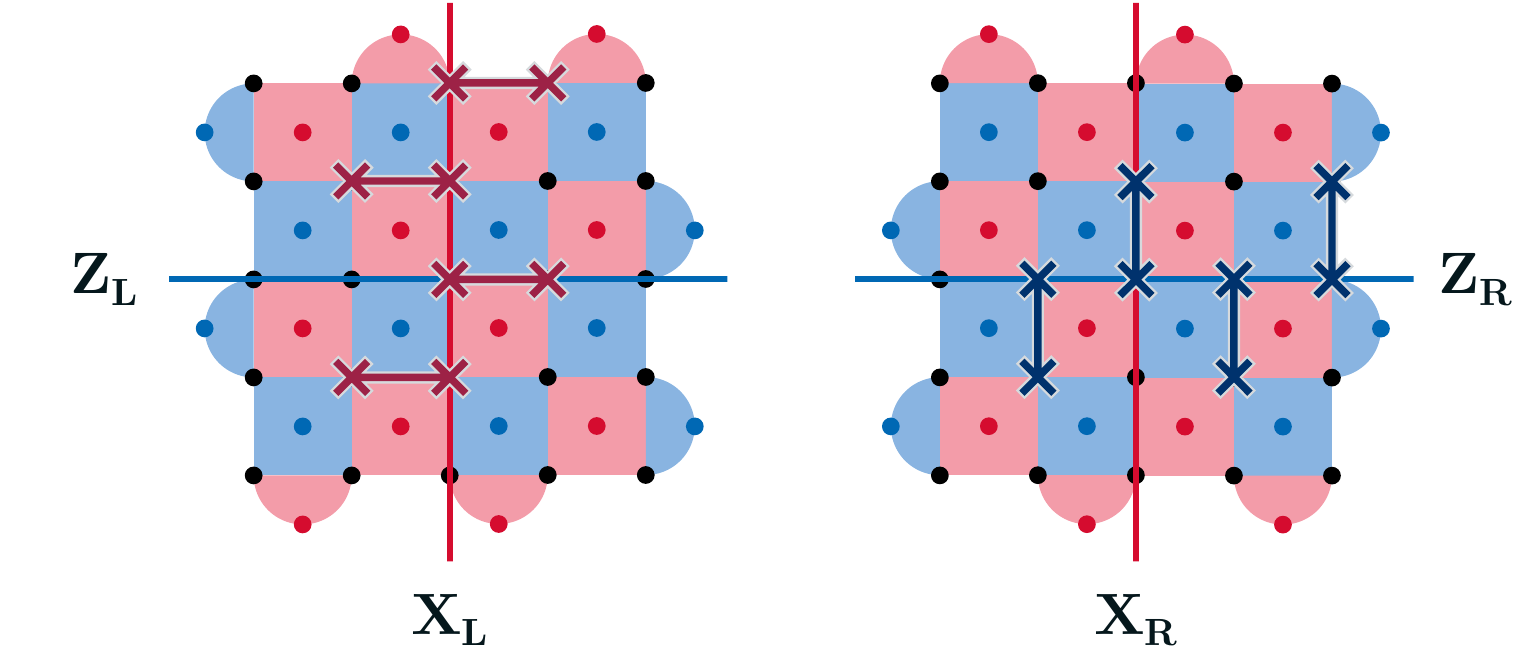}
    \caption{}
    \label{fig:sc_surg_pre_hooks}
  \end{subfigure}
  \hfill
  \begin{subfigure}[t]{0.45\textwidth}
    \centering
    \includegraphics[width=\textwidth]{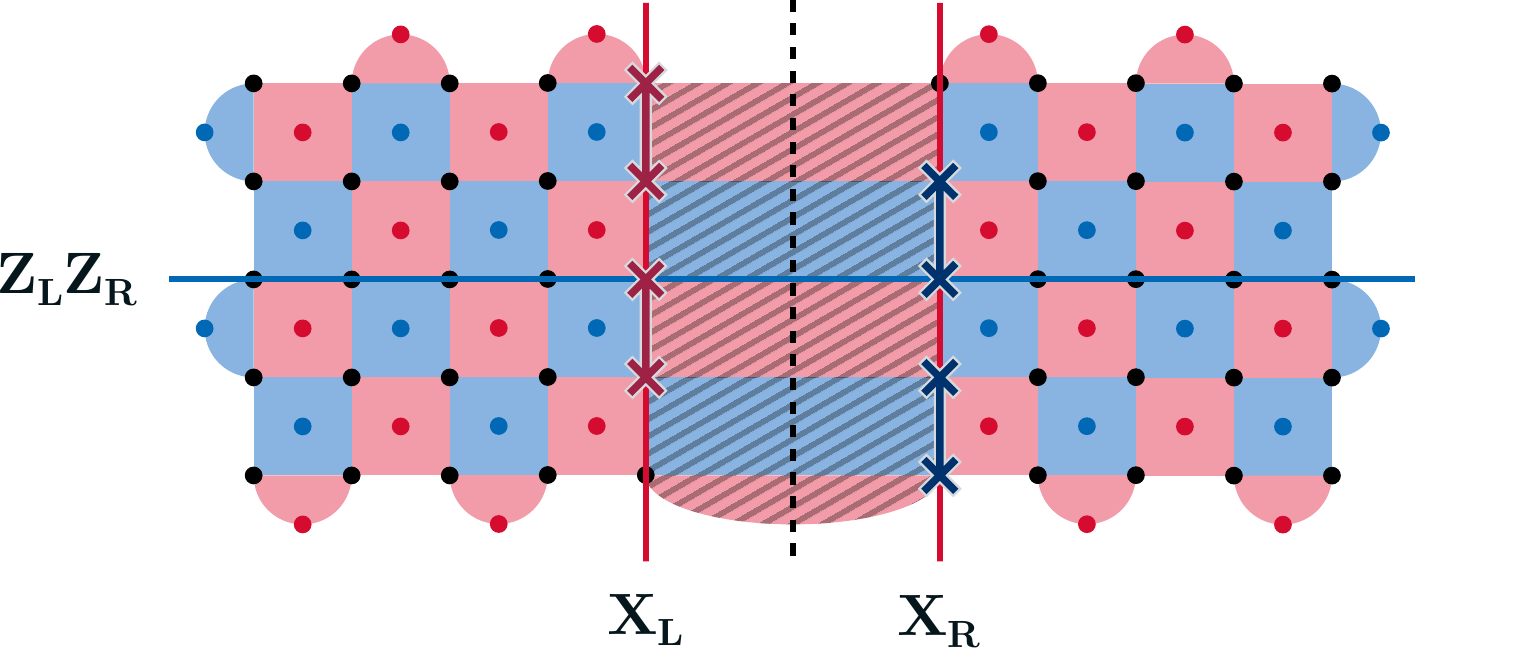}
    \caption{}
    \label{fig:sc_surg_hooks}
  \end{subfigure}
  \caption{Distributed lattice surgery: (a) Using the default extraction circuits in the monolithic setting, hook errors occur but are aligned orthogonal to the respective logical operators, preserving full circuit distance. (b) During the distributed lattice surgery, hook errors can occur and always spread parallel to the seam direction. In the example shown in the figure, $Z_{L}Z_R$ is unaffected but the circuit distance of the vertical $X$ logical operators is reduced.}
  \label{fig:sc_surg_comb_hooks}
\end{figure}

\FloatBarrier

\subsection{Timelike Errors and Lattice Surgery in the Surface Code}
\label{sec:sc_tle}

In the setting of a distributed surface code memory, any timelike errors on the distributed plaquettes can be detected by comparing to the surrounding baseline, just as in the monolithic setting. In the distributed lattice surgery setting, the same holds for \emph{half} of the distributed stabilizer measurements across the seam. The other half of the stabilizers are used jointly to extract the outcome of the logical two-qubit $XX$ or $ZZ$ measurement. Neither the logical measurement nor the individual plaquettes that constitute it have baselines for their outcomes, and the 2D surface code does not have any metachecks that would signal if any of the measurement outcomes is wrong. Instead, the logical measurement must be repeated, so that the outcome can be determined through a majority vote.

In monolithic lattice surgery, assuming fault-tolerant syndrome extraction and noise that scales as $p$, a given stabilizer measurement will produce the wrong outcome with a probability that scales with $p$. To compensate, we repeat the stabilizer measurements within the seam $d$ times for a distance $d$ surface code, and take a majority vote. This creates a timelike distance of $d$ as well:
Only if $\lfloor \frac{d+1}{2}\rfloor$ many readouts fail does the majority shift and the measurement result is incorrectly flipped.

In the distributed setting, without sufficient edge improvement, the increased error rate on the Bell pairs can cause timelike errors to occur with a probability that scales as  $p^{\frac{1}{\ebitnoise}}$. To compensate, the measurement must now be repeated $d\ebitnoise$ times to reach a timelike distance $d$.

For instance, the state teleportation in \autoref{fig:tp_ls} uses a distributed $XX$ measurement that is performed via lattice surgery, where the $X$ plaquettes added to measure out the joint $XX$ stabilizer have no baseline to relate to. In contrast, the added $Z$ plaquettes can be related to the history of the boundary weight-two $Z$ plaquettes of previous time steps\footnote{In the monolithic setting of \autoref{fig:sc_mono_surg}, auxiliary data qubits are added at the seam. Their initialization is chosen as $Z$ eigenstates, so they also function as baselines.}, as shown in \autoref{fig:sc_surg_tle}.
\begin{figure}[!htb]
  \begin{subfigure}[t]{0.44\textwidth}
    \centering
    \includegraphics[width=\textwidth]{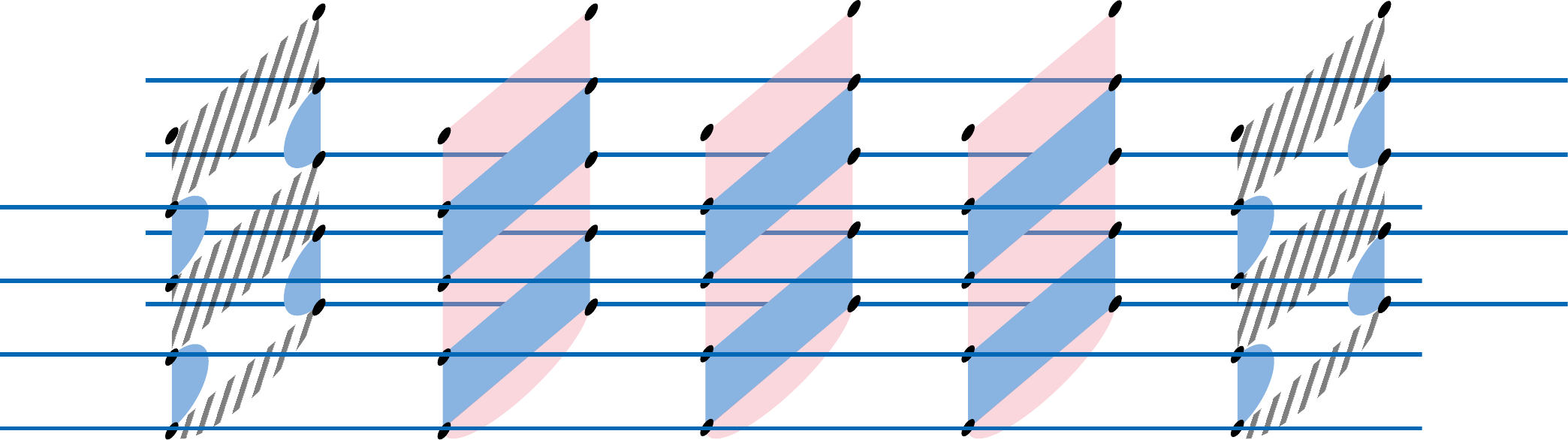}
    \caption{}
    \label{fig:sc_tle_Z}
  \end{subfigure}
  \hfill
  \begin{subfigure}[t]{0.44\textwidth}
    \centering
    \includegraphics[width=\textwidth]{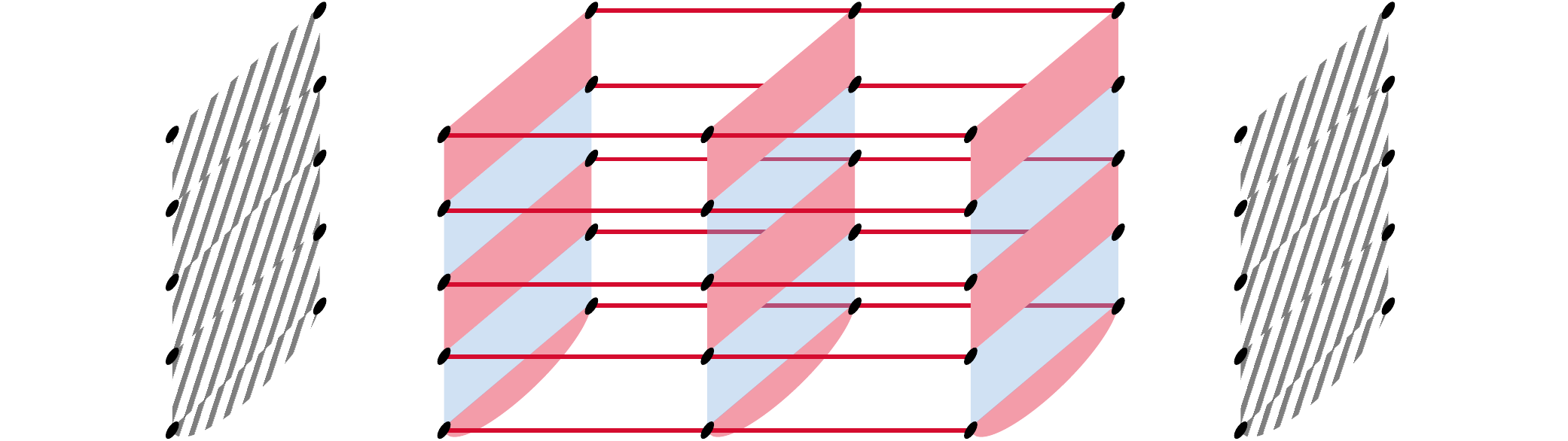}
    \caption{}
    \label{fig:sc_tle_X}
  \end{subfigure}
  \begin{subfigure}[t]{0.44\textwidth}
    \centering
    \[ \tikzfig[scale=0.4]{05_dist_sc/ls_seam_baseline_Z} \]
    \caption{}
    \label{fig:sc_tle_Z_ZX}
  \end{subfigure}
  \hfill
  \begin{subfigure}[t]{0.44\textwidth}
    \centering
    \[ \tikzfig[scale=0.4]{05_dist_sc/ls_seam_baseline_X} \]
    \caption{}
    \label{fig:sc_tle_X_ZX}
  \end{subfigure}
  \caption{During distributed lattice surgery, new distributed plaquettes are constructed between QPUs: (a) For the plaquettes matching the boundary type, there is a baseline of expected measurement outcomes from previous and future small boundary plaquettes. (b) For the opposite type of plaquettes, there is no local baseline, making them less robust against noisy readouts. (c) The timeline of a $Z$ plaquette at the seam. The dashed box indicates the surgery time frame. Weight-4 surgery plaquette measurements and weight-2 boundary plaquette measurements form detectors during the transitions between surgery and operation as separate patches. (d) The timeline of an $X$ plaquette at the seam. With no baseline, local seam detectors are confined to the surgery time frame and require precise readouts to ensure time-like code distance.}
  \label{fig:sc_surg_tle}
\end{figure}
Hence, for the distributed $Z$ plaquettes, the readout errors can be neglected, as in the memory case, whereas for the distributed $X$ plaquettes, the readout error must be reduced to the local readout probability $p$ to ensure a timelike distance of $d$ at the seam.

\FloatBarrier

\subsection{Plaquette Improvement Strategies for the Surface Code}\label{sec:sc_plaquette_improvement}

The stabilizer measurement circuits derived in
\autoref{sec:FT_dist_stab_meas}
allow separate tuning of fault improvement in the $X$ and $Z$ bases, which we can use to tailor the distributed measurement circuits for $X$ and $Z$ stabilizers to the surface-code context.
As in \autoref{tb:plaq_impl},
we will use the notation $X(d_X,d_Z)$ and $Z(d_X,d_Z)$ to denote the improvements of $X$ and $Z$ plaquettes, respectively, and we will assume an improvement strategy such that $d_Y \geq \min(d_X,d_Z)$. Each plaquette requires $d_X d_Z$ many Bell pairs to realize the improvement strategy. We summarize the results of different improvement strategies for lattice code memory and lattice surgery in \autoref{tb:str_mem} and \autoref{tb:str_ls}, respectively.
\begin{table}[h]
  \centering

  \begingroup
  \renewcommand{\arraystretch}{1.2}
  \setlength{\tabcolsep}{10pt}

  \begin{tabular}{l cccc  cc}
                                                                  & \multicolumn{4}{c}{\bf Error Probability} &
    \\
    \cmidrule(lr){2-5}
                                                                  & \multicolumn{2}{c}{\bf W-2 Hook}          & \multicolumn{2}{c}{\bf Readout} & \multicolumn{2}{c}{\bf Circuit Distance}                                                                                                  \\\cmidrule(lr){2-5}\cmidrule(lr){6-7}
    \textbf{Strategy}                                             & $X$                                       & $Z$                             & $X$                                      & $Z$                & $X$                                & $Z$                                  \\
    \midrule
    $X({2\ebitnoise,\ebitnoise})$,  $Z({\ebitnoise,2\ebitnoise})$ & $p^2$                                     & $p^2$                           & $p$                                      & $p$                & $d$                                & $d$                                  \\
    $X({2\ebitnoise,\ebitnoise})$,   $Z({\ebitnoise,\ebitnoise})$ & $p^2$                                     & $p$                             & $p$                                      & $p$                & $d$                                & $d$                                  \\
    $X({2\ebitnoise,1})$,  $Z({1,\ebitnoise})$                    & $p^2$                                     & $p$                             & $p^{1/\ebitnoise}$                       & $p^{1/\ebitnoise}$ & $d$                                & $d$                                  \\
    $X({\ebitnoise,1})$,  $Z({1,1})$                              & $\mathbf{p}$                              & $\mathbf{p^{1/\ebitnoise}}$     & $p^{1/\ebitnoise}$                       & $p^{1/\ebitnoise}$ & $\lceil\mathbf{\frac{d}{2}}\rceil$ & $\mathbf{d-1+\frac{1}{\ebitnoise}} $ \\
    \bottomrule
  \end{tabular}
  \endgroup

  ~\\

  \caption{Distributed surface code memory improvement strategies: Starting with context-free fault-tolerant plaquettes, Bell pair requirements can be reduced to allow benign hook and readout errors that do not degrade logical performance. Error probabilities that reduce circuit distance are indicated in bold. For the insufficient strategy $X({\ebitnoise,1}), Z({1,1})$, the circuit distance in the $X$ basis, which determines the memory's performance when operated in the $Z$ basis, is halved by malign $ X$-hook errors. Meanwhile, the circuit distance in the $Z$ basis is only slightly reduced by the high-probability $Z$ errors on the seam.}
  \label{tb:str_mem}
\end{table}

In the memory setting, the strategy that achieves full circuit distance with the least amount of improvement is $X(2\ebitnoise,1)$, $Z(1,\ebitnoise)$. With repetition based improvement, this strategy reduces the number of Bell pairs needed by a factor of $\ebitnoise$ for $X$ plaquettes and $2\ebitnoise$ for $Z$ plaquettes, compared to the context-free strategy. In the example of $\ebitnoise=3$, we reduce the number of Bell pairs needed by $75\%$.
\begin{table}[h]
  \centering

  \begingroup
  \renewcommand{\arraystretch}{1.2}
  \setlength{\tabcolsep}{10pt}
  \begin{tabular}{l cccc  cc}
                                                                  & \multicolumn{4}{c}{\bf Error Probability} &
    \\
    \cmidrule(lr){2-5}
                                                                  & \multicolumn{2}{c}{\bf W-2 Hook}          & \multicolumn{2}{c}{\bf Readout} & \multicolumn{2}{c}{\bf Circuit Distance}                                                                                                           \\\cmidrule(lr){2-5}\cmidrule(lr){6-7}
    \textbf{Strategy}                                             & $X$                                       & $Z$                             & $X$                                      & $Z$                & $X$                                         & $Z$                                  \\
    \midrule
    $X({2\ebitnoise,\ebitnoise})$,  $Z({\ebitnoise,2\ebitnoise})$ & $p^2$                                     & $p^2$                           & $p$                                      & $p$                & $d$                                         & $d$                                  \\
    $X({2\ebitnoise,\ebitnoise})$,   $Z({\ebitnoise,\ebitnoise})$ & $p^2$                                     & $p$                             & $p$                                      & $p$                & $d$                                         & $d$                                  \\
    $X({2\ebitnoise,\ebitnoise})$,  $Z({1,\ebitnoise})$           & $p^2$                                     & $p$                             & $p$                                      & $p^{1/\ebitnoise}$ & $d$                                         & $d$                                  \\
    $X({2\ebitnoise,1})$,  $Z({1,\ebitnoise})$                    & $p^2$                                     & $p$                             & $\mathbf{p^{1/\ebitnoise}}$              & $p^{1/\ebitnoise}$ & $\lceil\mathbf{\frac{d}{\ebitnoise}}\rceil$ & $d$                                  \\
    $X({\ebitnoise,\ebitnoise})$,  $Z({1,1})$                     & $\mathbf{p}$                              & $\mathbf{p^{1/\ebitnoise}}$     & $p^{1/\ebitnoise}$                       & $p^{1/\ebitnoise}$ & $\lceil\mathbf{\frac{d}{2}}\rceil$          & $\mathbf{d-1+\frac{1}{\ebitnoise}} $ \\
    \bottomrule
  \end{tabular}
  \endgroup

  ~\\

  \caption{Distributed surface code $XX$ lattice surgery improvement strategies: Starting with context-free fault-tolerant plaquettes, Bell pair requirements can be reduced to allow benign hook and readout errors that do not degrade logical performance. Error probabilities that reduce circuit distance are indicated in bold. In addition to the malign effects of surface code memory, insufficient readout suppression for $X$ plaquettes substantially reduces the timelike distance of the logical joint $X$ measurement.}
  \label{tb:str_ls}
\end{table}

In the lattice surgery setting, the most resource efficient strategy for reaching full circuit distance is $X(2\ebitnoise,\ebitnoise)$, $Z(1,\ebitnoise)$, saving a factor of $2\ebitnoise$ Bell pairs on the $Z$ plaquettes in the repetition based approach. In the example of $\ebitnoise=3$, the number of Bell pairs needed is reduced by $\sim 41.7\%$.

\FloatBarrier

\section{Distributed Color Code}\label{sec:color_code}

In this section, we demonstrate how distributed stabilizer measurements can be adapted to the context of the triangular color code on a hexagonal lattice.
As with the case of the surface code, we can reduce the overhead compared to the weight-4 and weight-6 stabilizer measurements shown in \autoref{subsec:Weight-4-stabilizers} and \autoref{subsec:Weight-6-stabilizers}.

As shown in \autoref{fig:cc_intui}, the color code consists of plaquettes of weight-$4$ and weight-$6$, each of which hosts both an  $X$ and a $Z$ stabilizer. Commonly chosen minimum weight logical $X$ and $Z$ operators connect two out of the three boundary corners. Other representatives (of possibly higher weight) can be created by stabilizer products, e.g. connecting one boundary corner and the opposite boundary edge, or three boundary edges, where the logical operator in the latter case contains a ``Y-shape'' intersection where three strings meet at one point\footnote{In general, logical $X$ and $Z$ operators can be represented as three-coloured paths branching from a common data qubit to connect the three distinct boundaries. Our choice of logicals as boundary edges matches with the case of a corner as the branching point. For more details, see~\cite{bombinIntroductionTopologicalQuantum2013}}.
Similar to the surface code, we can choose different types of distribution seams that use different plaquette splits. \autoref{fig:cc_dist_mem} shows a possible $3-3$ split approach as a vertical cut through the patch, as well as a $4-2$ split approach that results in a distribution seam that is parallel to a patch boundary.
\begin{figure}[!htb]
  \begin{subfigure}[t]{0.4\textwidth}
    \centering
    \includegraphics[width=\textwidth]{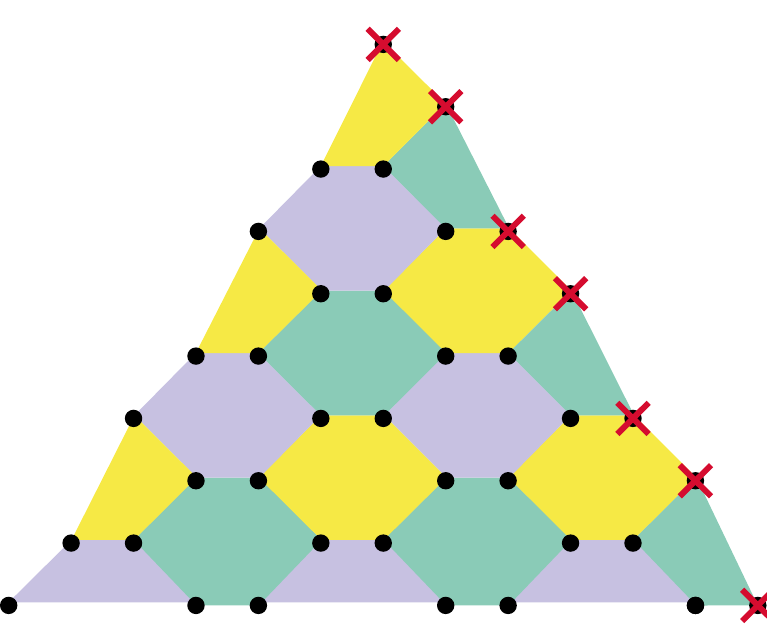}
    \caption{}
    \label{fig:cc_intuiA}
  \end{subfigure}
  \hfill
  \begin{subfigure}[t]{0.4\textwidth}
    \centering
    \includegraphics[width=\textwidth]{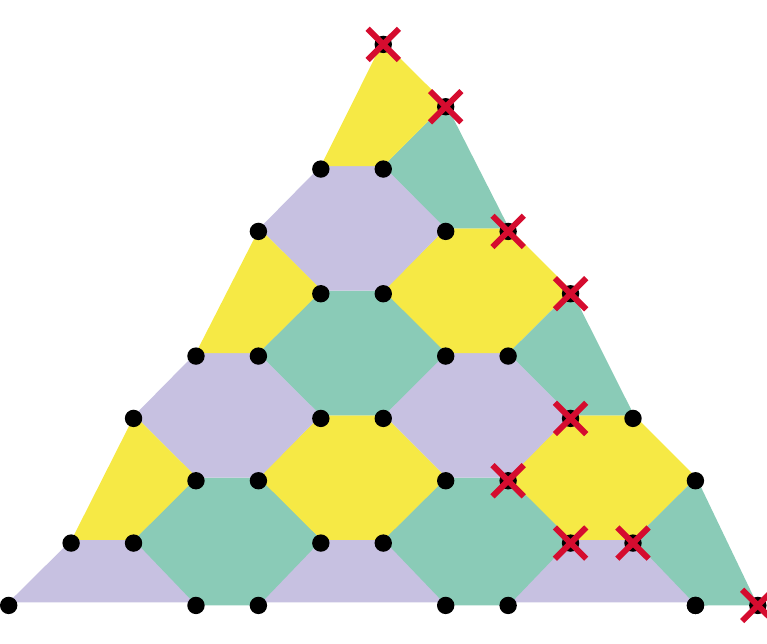}
    \caption{}
    \label{fig:cc_intuiB}
  \end{subfigure}
  \caption{Representation of a $d=7$ triangular color code. The black dots represent physical data qubits, and each face supports both an $ X$-type and a $Z$-type stabilizer acting on the data qubits it touches. The red $X$ symbols indicate physical errors of the same type, either $X$ or $Z$.
    (a) Example of a shortest error pattern leading to a logical error.
    (b) Deformation of the same logical error by multiplication with a stabilizer, yielding an equivalent logical path that passes through more than two data qubits on one plaquette.
  }
  \label{fig:cc_intui}
\end{figure}

\begin{figure}[!htb]
  \begin{subfigure}[t]{0.4\textwidth}
    \centering
    \includegraphics[width=\textwidth]{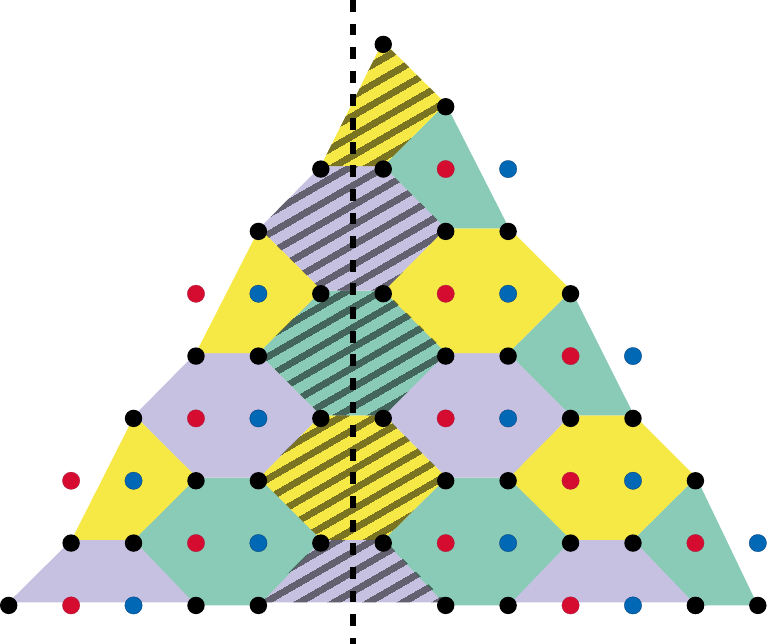}
    \caption{}
    \label{fig:cc_dist_mem_33}
  \end{subfigure}
  \hfill
  \begin{subfigure}[t]{0.4\textwidth}
    \centering
    \includegraphics[width=\textwidth]{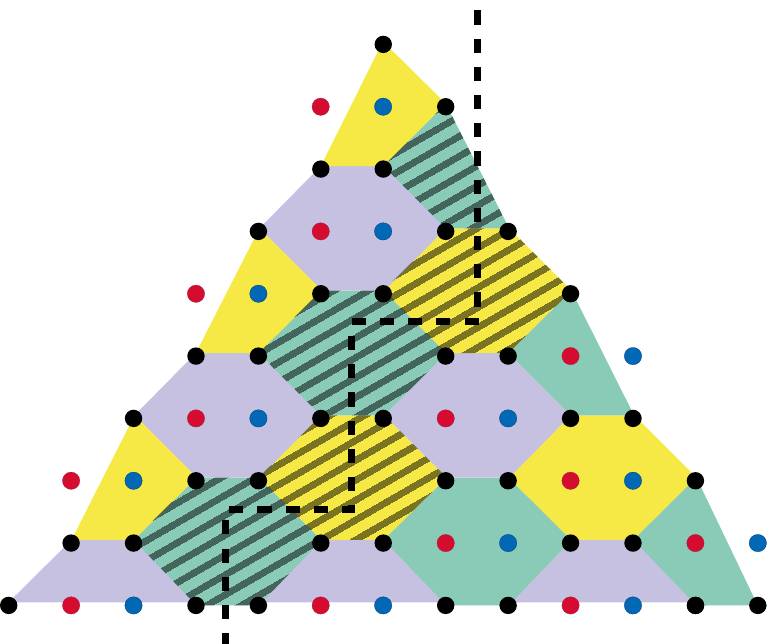}
    \caption{}
    \label{fig:cc_dist_mem_42}
  \end{subfigure}
  \caption{Distributed  $d=7$ triangular color code. Black dots represent data qubits, red and blue dots represent auxiliary qubits for $X$- and $Z$-stabilizer measurements.  The dashed line indicates the distribution seam. (a) Using the 3-3 splitting, the distribution seam aligns with a column of plaquettes. (b) Esing the 4-2 splitting, the distribution seam aligns with the boundaries of the triangular patch. All shaded plaquettes operate between QPUs and must be distributed, possibly requiring multiple auxiliary and communication qubits.}
  \label{fig:cc_dist_mem}
\end{figure}

\subsection{Syndrome Extraction and Benign Hook Errors in the Color Code}

In the monolithic setting, the color code can be implemented fault-tolerantly on a square grid layout using two auxiliary qubits per plaquette. These auxiliary qubits are shown in red and blue in \autoref{fig:cc_dist_mem}. One syndrome extraction method that has been used in the monolithic setting~\cite{lacroixScalingLogicColour2025} is the \emph{superdense} circuit introduced in Ref.~\cite{gidneyNewCircuitsOpen2023}. In addition to adapting the circuits of \autoref{subsec:Weight-4-stabilizers} and \autoref{subsec:Weight-6-stabilizers} to the color code context, we will also show how to adapt the superdense circuit to the distributed setting.

The logical operators most sensitive to a seam going across the triangle will be those that align with the seam. For this reason, the main concern will be logical operators that connect either two boundary corners, or a boundary corner and a boundary edge. An example of a logical operator with support on a minimal number of data qubits is a string along a boundary, as shown in \autoref{fig:cc_intuiA}.
Along each boundary, both weight-$4$ and weight-$6$ plaquettes contain exactly two boundary qubits.
Hence, a single fault during stabilizer measurement that propagates to faults on both data qubits can reduce the effective code distance.

For weight-$4$ plaquettes, the only possible hook errors are single faults that propagate to two data qubits.
Weight-$6$ plaquettes can additionally give rise to hook errors that affect three data qubits,
but due to the structure of the color code, these cannot be part of one of the shortest logical operators: they require deformations of the logical operator in order to avoid violating any of the code stabilizers (see~\cite{bombinTopologicalQuantumDistillation2006}).
For instance, the logical error in \autoref{fig:cc_intuiB} can be realized with a combination of hook errors that affect two data qubits and hook errors that affect three data qubits, but overall requires the same number of plaquettes to have an error as \autoref{fig:cc_intuiB}, so the residual distance stays the same.
This intuition was made rigorous in the monolithic case by~\cite{chamberlandTriangularColorCodes2020}, which considered different syndrome extraction circuits with varying levels of protection against hook errors. They showed that hooks affecting three data qubits have the same effect on circuit distance as hooks that affect two data qubits.
Therefore, a circuit that catches weight-2 hook errors and requires \emph{two} circuit faults for a propagation into three faults on the data qubits still reached full circuit distance.

As long as a weight-$3$ hook error is at most as likely as a weight-$2$ hook error, it can be neglected.
In the distributed scenario, this is no longer guaranteed, as an error propagation to two and three data qubits may include Bell pair errors and thus have vastly different error probabilities:
We can deform the logicals from the boundary to move them to the distributed plaquettes.
\autoref{fig:cc_dist_log_err} shows two scenarios, where the logical of \autoref{fig:cc_dist_log_2} can be applied by three weight-$2$ hook errors and a data qubit error, while the one from \autoref{fig:cc_dist_log_3} can be realized by one weight-$2$ hook, two weight-$3$ hooks, and a data qubit error.
In the monolithic setting without hook error suppression, the weight of the logical is $w=4$ since four atomic faults must occur in both cases. With flagging, the hook error weight increases, and we recover the code distance $w=d=7$.
\begin{figure}[!htb]
  \begin{subfigure}[t]{0.4\textwidth}
    \centering
    \includegraphics[width=\textwidth]{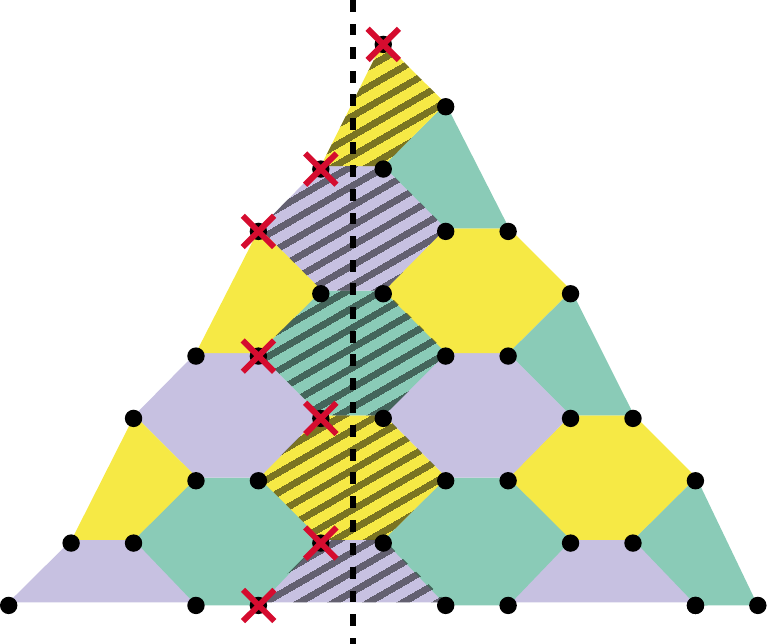}
    \caption{}
    \label{fig:cc_dist_log_2}
  \end{subfigure}
  \hfill
  \begin{subfigure}[t]{0.4\textwidth}
    \centering
    \includegraphics[width=\textwidth]{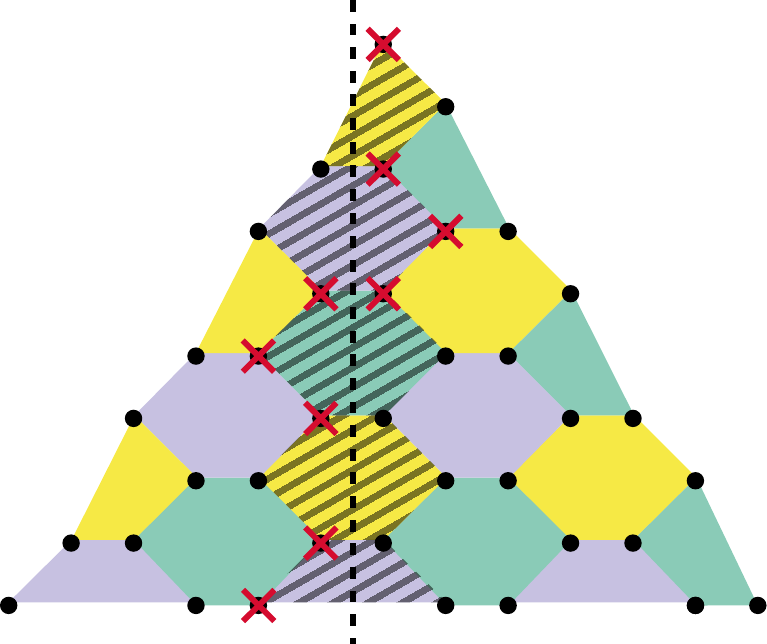}
    \caption{}
    \label{fig:cc_dist_log_3}
  \end{subfigure}
  \caption{Logical errors in a distributed $d=7$ triangular color code. The left boundary logical can be deformed so that its support is fully on the distributed 3-3 split plaquettes. (a) An error path with the same length as the boundary logical and at most two errors touching any given plaquette.
    (b) A longer error path where two distributed plaquettes are touched by three data qubit errors on the same side of the distribution seam.
    When malign hook errors are not suppressed, both paths can occur by 4 plaquettes failing. While in the monolithic case both paths have equal probability, in the distributed case (a) and (b) can have different probabilities depending on improvement and Bell pair noise strength $\ebitnoise$.
  }
  \label{fig:cc_dist_log_err}
\end{figure}
For the distributed weight-6 stabilizer measurements of \autoref{sec:FT_dist_stab_meas}, weight-2 hook errors require an error on a distribution edge combined with a \emph{local} error.
Therefore, without fault improvement, distributed weight-2 hooks are always less likely than unsuppressed monolithic hooks, and the logical path only containing weight-2 hooks is always less likely than its monolithic counterpart.
On the other hand, weight-3 hook errors require three independent faults across all distribution edges, each of which is highly likely depending on the Bell pair noise. The logical path containing weight-3 hooks can therefore be more likely than the weight-2 case, and even more likely than the lowest weight path in the monolithic case, assuming that $m$ is large and that not enough resources for fault improvement are available.

While the analysis of hook errors and circuit level fault tolerance in the color code setting is nontrivial,
the above results regarding hook errors provide a shortcut: adapting a monolithic circuit, and setting the fault-improvement of the distribution edges such that no logical is more likely than in the monolithic setting by following the above requirements on the allowed likelihood on different hook errors.
The previously introduced distributed weight-6 stabilizer measurements could be efficiently adapted to the color code setting by only improving the outer distribution edges to monolithic weight $w=1$, but an even more efficient construction starts from the following circuit:
\begin{equation}
  \tikzfig[scale=0.45]{06_dist_cc/cc_bell_flag}
\end{equation}
A complete round of syndrome extraction, including both $X$-type and $Z$-type stabilizer measurements, is shown.
The resulting circuit corresponds to the Bell-flagging construction of~\cite{baireutherNeuralNetworkDecoder2019}, which (similarly to the older flag circuit constructions) is shown to be fault tolerant.
Its distributed implementation requires two distribution edges per stabilizer.
Hence, a full $X+Z$ stabilizer-measurement round requires four distribution edges. The above reasoning about hook errors then sets the target improvement $w\geq 1$ for each distribution edge.

Another state-of-the-art construction is the  \emph{superdense} circuit. Its validity can be shown using ZX-calculus:
\begin{equation}
  \label{eq:cc_SD_dist}
  \tikzfig[scale=0.45]{06_dist_cc/cc_SD_dist}
\end{equation}
However, the Hopf rewrite is not a fault equivalent rewrite, and the fault tolerance of the superdense circuit remains an open question.
Recent work~\cite{lacroixScalingLogicColour2025} provides evidence that the superdense circuit preserves the full code distance, proving $d_{\mathrm{circ}}=d$ for $d=3,5,7$ via exact MaxSAT calculations and finding no counterexample up to $d=15$ through heuristic searches.
The superdense circuit is particularly interesting in the distributed setting.
Since it requires only two interactions between its auxiliary qubits to support both the $X$- and $Z$-type stabilizer measurements, instead of the four interactions required in the Bell-flag implementation above, the number of distribution edges required for a full round of $X$ and $Z$ measurements is reduced by a factor of two.
While the superdense circuit can also be easily adapted to the $4$-$2$ splitting, for the memory case we prefer the $3$-$3$ splitting introduced in~\autoref{eq:cc_SD_dist}, as the symmetric split can preserve the minimal cycle time.
Improving the distribution edges to weights $w_X,w_Y,w_Z \geq 1$ recovers the monolithic behavior for the distributed memory architecture shown in~\autoref {fig:cc_dist_mem}.

As a final note on hook errors, a recent study~\cite{kishonyColorCodeOffthehook2026} succeeded in avoiding malignant hook errors in the bulk without flagging through careful schedule engineering,
requiring Bell flagging only at the boundaries to achieve fault tolerance.
Investigating the effect of scheduling for distributed implementations is therefore a natural future direction. Meanwhile, in the numerical simulations within the present work, we will focus on the superdense circuit.

\subsection{Timelike Errors and Lattice Surgery in the Color Code}

\begin{figure}[!htb]
  \hspace{.2cm}
  \begin{subfigure}[t]{0.45\textwidth}
    \centering
    \includegraphics[width=\textwidth]{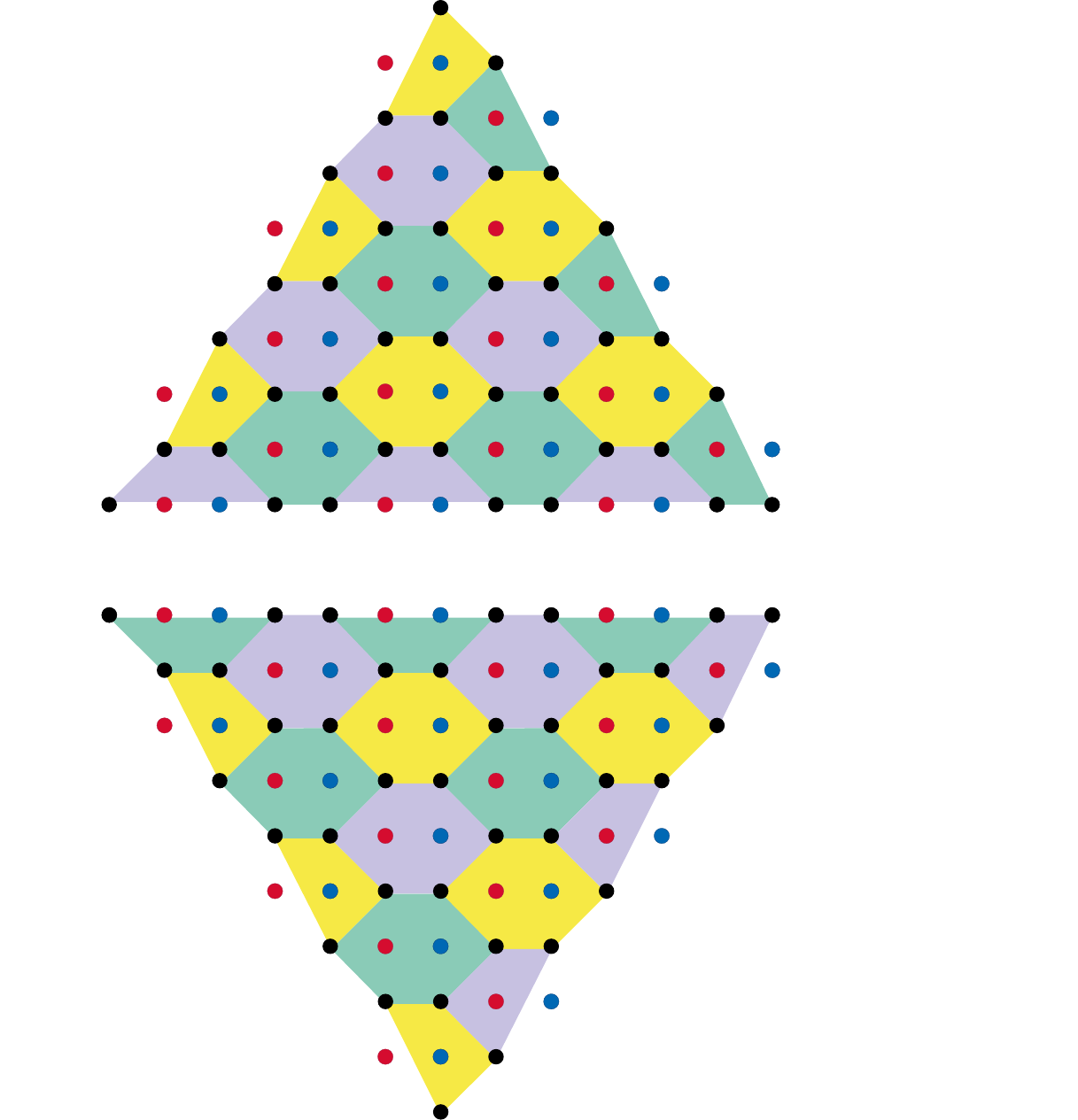}
    \caption{}
    \label{fig:cc_ls_pre}
  \end{subfigure}
  \hfill
  \begin{subfigure}[t]{0.45\textwidth}
    \centering
    \includegraphics[width=\textwidth]{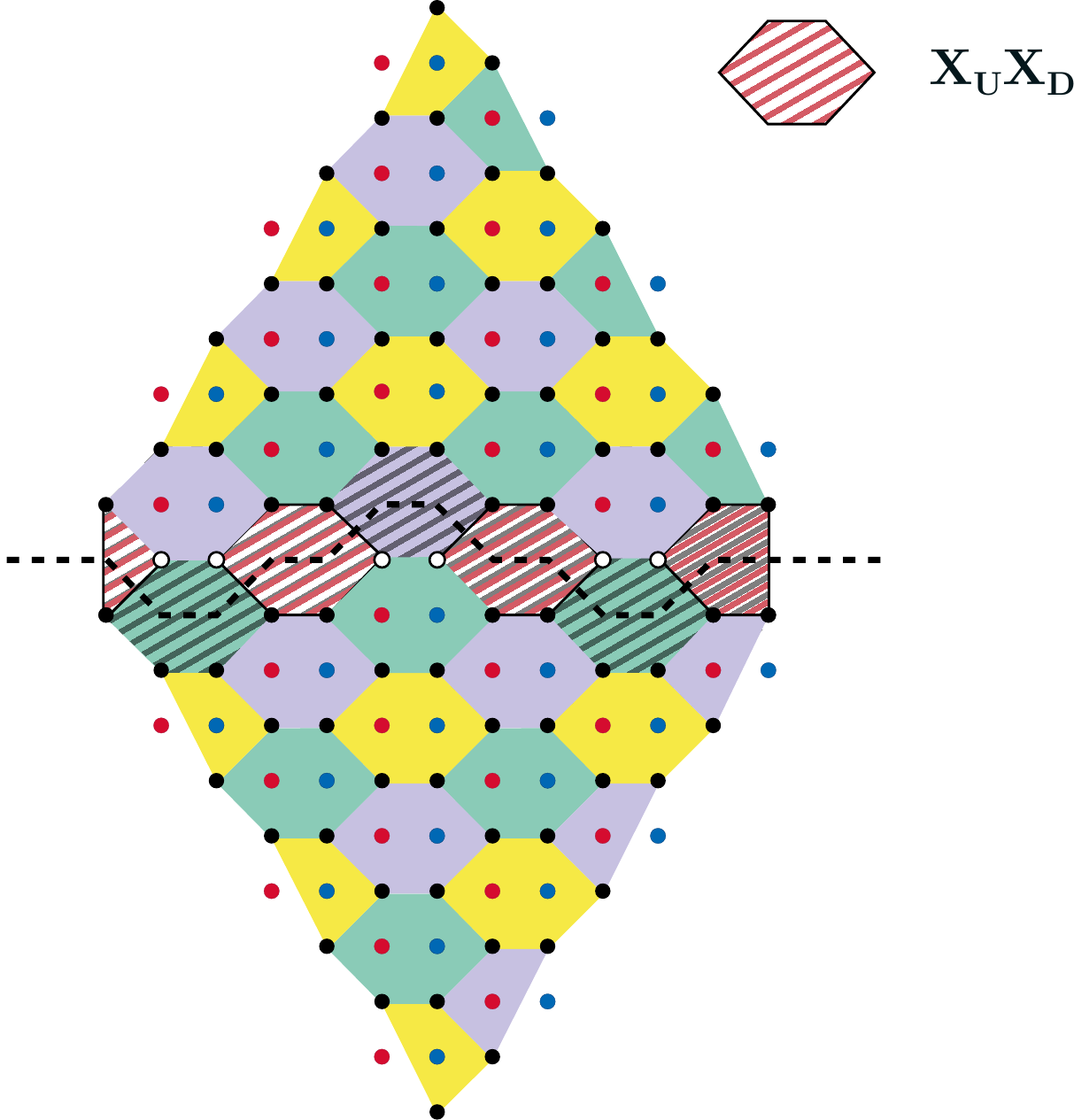}
    \caption{}
    \label{fig:cc_ls_merge}
  \end{subfigure}
  \hspace{.2cm}
  \caption{Distributed lattice surgery in the triangular color code: (a) Two $d=7$ color code patches with aligned boundaries. (b) Merged color code patches. The dashed line indicates the distribution seam. All shaded plaquettes operate between QPUs and must be distributed. The green weight-4 plaquettes on the bottom patch and the blue weight-4 plaquettes on the top patch are extended to (distributed) weight-6 plaquettes of both $X$ and $Z$ type, while the newly introduced distributed weight-6 plaquettes (shown in dashed red) extract the joint $XX$ operator by only measuring $X$ stabilizers.}
  \label{fig:cc_ls}
\end{figure}

Lattice surgery in the color code can be implemented by aligning two color code patches along their boundaries, as shown in \autoref{fig:cc_ls}, and choosing a coloring such that neighboring boundary plaquettes have compatible colors, as indicated.
The joint $XX$ (or $ZZ$) operator is measured by adding new weight-6 plaquettes between the boundaries of only $X$ (or $Z$) type (given the remaining color differing from all boundary plaquettes), which requires adding auxiliary data qubits within the surgery region.
Additionally, the weight-4 plaquettes at the borders need to be extended to weight-6 plaquettes of both $X$ and $Z$ type to properly cover the auxiliary data qubits, ensure commutativity, and restrict the degrees of freedom of the combined patch.
In a distributed scenario, both the plaquettes for logical extraction and extended local syndrome extraction need to be distributed, resulting in both 4-2 and 3-3 splits. An asymmetric splitting with all auxiliary data qubits residing on one QPU and only distributed 4-2 split plaquettes is also possible.
For the extended weight-6 plaquettes, any of the derived distributed circuits can be used to measure the $X$ and $Z$ stabilizers.
Since the plaquettes for logical extraction measure only one stabilizer type, the superdense extraction circuit reduces to the Bell-flagging extraction circuit.

The behavior of timelike errors is similar to that of the surface code. The extended plaquettes can be compared to the entire history of local $X$ and $Z$ plaquette measurements, and are therefore not affected by high-probability readout errors from the distribution edges.
In contrast, the plaquettes that contribute to the joint logical measurement outcome do not have a local baseline, and require readouts to be improved to the local readout probability $p$ to avoid degrading the timelike code distance.

\subsection{Plaquette Improvement Strategies for the Color Code}\label{sec:cc_plaquette_improvement}

As in \autoref{sec:sc_plaquette_improvement},
we summarize the plaquette improvements.
In contrast to the surface code, the symmetry of $X$ and $Z$ logicals in the color code means we do not need to distinguish between improvements for $X$ and $Z$ measurements.
We show repetition-based fault-improvement strategies for the weight-6 stabilizer adapted from the context-free case (\autoref{subsec:Weight-6-stabilizers}), denoted W6 for short,  as $\text{W6}_{33}({d_\text{left},d_\text{center},d_\text{right} | \din})$ for the 3-3 split and $\text{W6}_{42}({d_\text{left},d_\text{right} | \din})$ for the 4-2 split. Here $d_\text{left}, d_\text{center}, d_\text{right}$ denote the outer improvement $\dout$ for the left, center, and right distribution edge, respectively. The inner improvement $\din$ is chosen to be the same for each distribution edge.
For the Bell flagging (BF) and superdense (SD) circuits, the same improvements are applied to each distribution edge in the respective circuits and are denoted as $(\din, \dout)$.
Each $\text{W6}_{33}$ plaquette including $X$ and $Z$ measurement requires $2(d_\text{left}+d_\text{center}+d_\text{right})\din$ many Bell pairs to realize the improvement strategy, while $\text{W6}_{42}$ requires $2(d_\text{left}+d_\text{right})\din$, Bell flagging requires $4 \dout \din$ and supderdense requires $2 \dout \din$ many Bell pairs.

\autoref{tb:cc_str_mem} shows the effects of improvement strategies in the memory context.
For the 3-3 split, the center distribution edge of the W6 circuit does not need to be improved. Additionally, inner improvement is not required for W6 or BF in the memory context, but is required for the superdense circuit, since it measures both the $X$- and $ Z$-stabilizers.
On the other hand, the superdense circuit reduces the overall distribution edge count and achieves the most efficient cycle time if fault improvement is implemented in parallel.
For $m\leq2$, it is the most efficient overall among the context-aware fault-tolerant circuits, while for $m\geq2$, the other circuits become cheaper in terms of the number of Bell pairs required.
For the example of $\ebitnoise=3$ the Bell flagging strategy $BF(\ebitnoise,1)$ achieves the minimum number of required Bell pairs among the circuits (as does $\text{W6}_{42}(m,m,1)$). In comparison to the context-free $\text{W6}_{33}({\ebitnoise,\ebitnoise,\ebitnoise| \ebitnoise})$ strategy, it has a $\sim 77.8\%$ reduced Bell Pair cost.

For lattice surgery (considering an $XX$ measurement for concreteness, with the $ZZ$ measurement following the same pattern), 
extended local plaquettes can be improved using the same improvement strategies as in the memory setting. 
Only the seam stabilizers that measure the joint logical $XX$ operator need different improvement strategies, which are reported in \autoref{tb:cc_str_ls}.
Since we measure $X$ stabilizers only at the seam, the $X$ circuit distance is affected only by insufficient improvement, either by reducing space-like distance via hooks or time-like distance via readout errors.
Since this context is close to the context-free case, only a small reduction in Bell pair cost can be achieved. For $m=3$, compared to $\text{W6}_{33}({\ebitnoise,\ebitnoise,\ebitnoise| \ebitnoise})$, the improvement strategy $\text{BF}(\ebitnoise,1)$ saves $33.3\%$ in Bell pairs required.

\begin{table}[h]
  \centering

  \begingroup
  \renewcommand{\arraystretch}{1.2}
  \setlength{\tabcolsep}{10pt}

  \begin{tabular}{l ccc  cc}
                                                                       & \multicolumn{3}{c}{\bf Error Probability} & \multicolumn{2}{c}{\bf Circuit Distance}                                                                                                                                                                                              \\\cmidrule(lr){2-4}\cmidrule(lr){5-6}
    \textbf{Strategy}                                                  & W-2 Hook                                  & W-3 Hook                                 & Readout            & $X$                                                                               & $Z$                                                                               \\
    \midrule
    $\text{W6}_{33}({\ebitnoise, \ebitnoise,\ebitnoise | \ebitnoise})$ & $p^2$                                     & $p^3$                                    & $p$                & $d$                                                                               & $d$                                                                               \\
    $\text{W6}_{33}({\ebitnoise, 1,\ebitnoise|\ebitnoise})$            & $p^2$                                     & $p^2$                                    & $p$                & $d$                                                                               & $d$                                                                               \\
    $\text{W6}_{33}({\ebitnoise, 1,\ebitnoise|1})$                     & $p^2$                                     & $p^2$                                    & $p^{1/\ebitnoise}$ & $d$                                                                               & $d$                                                                               \\        \midrule
    $\text{W6}_{42}({\ebitnoise,\ebitnoise|\ebitnoise})$               & $p^2$                                     & $p^3$                                    & $p$                & $d$                                                                               & $d$                                                                               \\
    $\text{W6}_{42}({\ebitnoise,\ebitnoise|1 })$                       & $p^2$                                     & $p^3$                                    & $p^{1/\ebitnoise}$ & $d$                                                                               & $d$                                                                               \\
    $\text{W6}_{42}({1,1|1 })$                                         & $\mathbf{p^{1+1/\ebitnoise}}$             & $p^{2+1/\ebitnoise}$                     & $p^{1/\ebitnoise}$ & $\mathbf{\lceil\frac{d}{2}\rceil+ \lfloor\frac{d}{2}\rfloor\frac{1}{\ebitnoise}}$ & $\mathbf{\lceil\frac{d}{2}\rceil+ \lfloor\frac{d}{2}\rfloor\frac{1}{\ebitnoise}}$ \\        \midrule
    $\text{BF}({\ebitnoise,\ebitnoise})$                               & $p^2$                                     & $p^2$                                    & $p$                & $d$                                                                               & $d$                                                                               \\
    $\text{BF}({\ebitnoise,1})$                                        & $p^2$                                     & $p^2$                                    &
    $p^{1/\ebitnoise}$                                                 & $d$                                       & $d$                                                                                                                                                                                                                                   \\
    $\text{BF}({1,1})$                                                 & $\mathbf{p^{1/\ebitnoise}}$               & $\mathbf{p^{1/\ebitnoise}}  $            &
    $p^{1/\ebitnoise}$                                                 & $\lceil\mathbf{\frac{d}{2}}\rceil$        & $\lceil\mathbf{\frac{d}{2}}\rceil$                                                                                                                                                                                                    \\        \midrule
    $\text{SD}({\ebitnoise,\ebitnoise})^*$                             & $p^2$                                     & $p^2$                                    & $p$                & $d$                                                                               & $d$                                                                               \\
    \bottomrule
  \end{tabular}
  \flushleft {\footnotesize *error probabilities and circuit distance assuming fault tolerance of the monolithic counterpart, which is not proven}

  \endgroup

  ~\\

  \caption{Distributed color code memory improvement strategies: The strategies are applied for both $X$ and $Z$ measurements. Bell pair requirements can be reduced to allow weight-3 and weight-2 hook errors with probabilities up to $\geq p^2$. Error probabilities that reduce circuit distance are indicated in bold.
    For the 3-3 split W6 circuit, the improvement of $d_\text{center}$ can be dropped.
    Readout errors are also benign and need no fault-improvement.
    Only the superdense circuit still requires improvement against both types of errors as it measures $X$ and $Z$ together.}
  \label{tb:cc_str_mem}
\end{table}

\begin{table}[h]
  \centering

  \begingroup
  \renewcommand{\arraystretch}{1.2}
  \setlength{\tabcolsep}{10pt}

  \begin{tabular}{l ccc  cc}
                                                                      & \multicolumn{3}{c}{\bf Error Probability} & \multicolumn{2}{c}{\bf Circuit Distance}                                                                                                                           \\\cmidrule(lr){2-4}\cmidrule(lr){5-6}
    \textbf{Strategy}                                                 & W-2 Hook                                  & W-3 Hook                                 & Readout                       & $X$                                                                               & $Z$ \\
    \midrule
    $\text{W6}_{33}({\ebitnoise, \ebitnoise,\ebitnoise| \ebitnoise})$ & $p^2$                                     & $p^3$                                    & $p$                           & $d$                                                                               & $d$ \\
    $\text{W6}_{33}({\ebitnoise, 1,\ebitnoise| \ebitnoise})$          & $p^2$                                     & $p^2$                                    & $p$                           & $d$                                                                               & $d$ \\
    $\text{W6}_{33}({\ebitnoise, 1,\ebitnoise| 1})$                   & $p^2$                                     & $p^2$                                    & $\mathbf{p^{1/\ebitnoise}}  $ & $\mathbf{\frac{d}{\ebitnoise}}$                                                   & $d$ \\        \midrule
    $\text{W6}_{42}({\ebitnoise,\ebitnoise| \ebitnoise})$             & $p^2$                                     & $p^3$                                    & $p$                           & $d$                                                                               & $d$ \\
    $\text{W6}_{42}({\ebitnoise,\ebitnoise|1})$                       & $p^2$                                     & $p^3$                                    & $\mathbf{p^{1/\ebitnoise}}  $ & $\mathbf{\frac{d}{\ebitnoise}}$                                                   & $d$ \\
    $\text{W6}_{42}({1,1|\ebitnoise })$                               & $\mathbf{p^{1+1/\ebitnoise}}$             & $p^{2+1/\ebitnoise}$                     & $p$                           & $\mathbf{\lceil\frac{d}{2}\rceil+ \lfloor\frac{d}{2}\rfloor\frac{1}{\ebitnoise}}$ & $d$ \\        \midrule
    $\text{BF}({\ebitnoise,\ebitnoise})$                              & $p^2$                                     & $p^2$                                    & $p$                           & $d$                                                                               & $d$ \\
    $\text{BF}({\ebitnoise,1})$                                       & $p^2$                                     & $p^2$                                    &
    $\mathbf{p^{1/\ebitnoise}}  $                                     & $\mathbf{\frac{d}{\ebitnoise}}$           & $d$                                                                                                                                                                \\
    $\text{BF}({1,\ebitnoise})$                                       & $\mathbf{p^{1/\ebitnoise}}$               & $\mathbf{p^{1/\ebitnoise}}  $            &
    $p$                                                               & $\lceil\mathbf{\frac{d}{2}}\rceil$        & $d$                                                                                                                                                                \\        \midrule
    $\text{SD}({\ebitnoise,\ebitnoise})^*$                            & $p^2$                                     & $p^2$                                    & $p$                           & $d$                                                                               & $d$ \\
    \bottomrule
  \end{tabular}
  \flushleft {\footnotesize *error probabilities and circuit distance assuming fault tolerance of the monolithic counterpart, which is not proven}

  \endgroup

  ~\\

  \caption{Distributed color code $M_{XX}$ lattice surgery improvement strategies: The strategies are applied to only the plaquettes that measure the joint logical $XX$ operator. All other distributed plaquettes are assumed to be improved to fault-tolerance of the memory context.
    In addition to hook errors, high probability readout errors are now malign and degrade logical performance at $\leq p$. Since the plaquettes only measure $X$ stabilizers, only the time- and space-like distances in the $X$ basis are affected.}
  \label{tb:cc_str_ls}
\end{table}

\FloatBarrier

\section{Simulation Results}\label{sec:numerics}

While fault equivalence provides guarantees on the slopes of the logical error curves in the regime of noise, or equivalently on the exponent of the leading term in $p$, it neither determines the prefactor of the leading term nor the subleading terms. To benchmark the performance of the different distributed syndrome extraction circuits and obtain logical error curves, we turn to Monte Carlo simulations. The simulations are performed using the library stim~\cite{gidneyStimFastStabilizer2021}.

Distributed entanglement sharing is not natively supported in stim, but the flexibility of the library allows for the implementation of Bell pairs that follow the relevant noise model defined in~\autoref{sec:prelim}.
Unless stated differently, simulations are performed at a Bell pair noise level of $p^{1/3}$ compared to the circuit level noise level $p$. We implement a noisy Bell pair as follows: First, the two qubits that host the Bell pair are reset in the $Z$ and $X$ bases and entangled via a CNOT gate. This is done without noise. Then, a two-qubit depolarizing noise channel with an error rate of $p^{1/3}$ is applied to model the Bell pair noise.
Together with such noisy Bell pairs representing the interconnects, we assume a native gateset of $\{\text{Reset}_Z, \text{Reset}_X, M_Z, M_X, CNOT\}$ as local operations on each QPU. The classical corrections in the circuits, which take the form of $X$ and $Z$ gates conditioned on measurement results, can be tracked classically and are handled in postprocessing for all simulations.

We model the QPUs locally as superconducting systems, using the established SI1000 noise model~\cite{gidneyFaultTolerantHoneycombMemory2021, gidneyBenchmarkingPlanarHoneycomb2022} (a variant of circuit level noise with prefactors adjusted to mimic the noise in superconducting hardware -- for details, see \autoref{app:CLN}). We consider a distributed setup consisting of boundary-connected QPUs, where any auxiliary qubits involved in the distributed stabilizer measurements sit near the boundary.
The connectivity in the bulk of each QPU is restricted to a square grid architecture layout, typical for superconducting systems. Meanwhile, the connectivity is unrestricted for the
communication qubits and auxiliary qubits near the boundary.
This allows for simpler implementations, with the benchmarks showing the consequences of the main tradeoffs between space and time requirements of the plaquette designs without additional routing overhead.
For completeness, we also show a space-efficient implementation of surface code lattice surgery which is fully realizable on a square grid layout, including for the communication qubits and auxiliary qubits near the boundary.

\subsection{Distributed Weight-4 and Weight-6 Plaquettes in the [[4,2,2]] and [[6,4,2]] Codes}

We first benchmark the distributed weight-4 and weight-6 stabilizer measurement circuits derived in \autoref{sec:FT_dist_stab_meas} in a particularly simple setting: codes consisting \emph{only} of a single distributed weight-4 or weight-6 plaquette that hosts one $X$-type stabilizer and one $Z$-type stabilizer. The codes in question are the distributed $[[4,2,2]]$ code (with stabilizers $XXXX$ and $ZZZZ$) and the distributed $[[6,4,2]]$ code (with stabilizers $XXXXXX$ and $ZZZZZZ$).
For both codes, we first conduct memory experiments: starting with an initial state, we apply $d=2$ rounds of syndrome extraction cycles and then measure out the data qubits to compute the parity of the logical operators, checking whether or not the logical information was preserved.
The usage of $d=2$ rounds of syndrome extraction ensures a time-like code distance equal to the space-like one.
Since the distance $d=2$ only allows for detection of single data qubit faults, but not correction, all simulations at this distance were performed using postselection on any detection event. For each simulation, we used a sample size of $10^{10}$ shots.

\begin{figure}[!htb]
  \begin{subfigure}[ht]{0.56\textwidth}
    \centering
    \[ \tikzfig[scale=0.5]{07_sim_results/422_mem_circ} \]
    \caption{}
    \label{fig:422_mem_circ}
  \end{subfigure}
  \hfill
  \begin{subfigure}[ht]{0.39\textwidth}
    \centering
    \includegraphics[width=\textwidth]{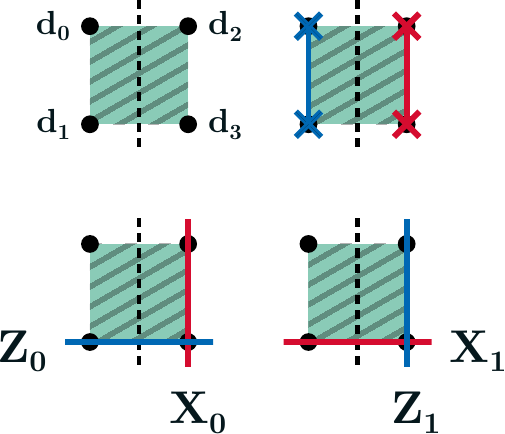}
    \caption{}
    \label{fig:422_mem_ops}
  \end{subfigure}
  \caption{Distributed $[[4,2,2]]$ code: (a) Memory circuit consisting of an initial state preparation, followed by two rounds of syndrome extraction via a single weight-4 $X$ and $Z$ plaquette, concluding with single qubit readouts. (b) The logical operators of the two logical qubits form horizontal and vertical pairs. Given a 2-2 split of the data qubits, only one logical operator per logical qubit is affected by distributed hook errors.}
  \label{}
\end{figure}

Starting with the $[\![4,2,2]\!]$ code, we consider a 2-2 split, with the circuit of the $[[4,2,2]]$ memory setting shown in \autoref{fig:422_mem_circ}.
The logical operators of the $[[4,2,2]]$ code are such that neither of the two logical qubits can be defined on only one of the two QPUs:
either the $X$ or $Z$ operator crosses the distribution boundary, as shown in \autoref{fig:422_mem_ops}.

Under fault-tolerant syndrome extraction, the lowest-weight logical errors are those that correspond to a pair of $X$ or $Z$ data qubit errors, which form a logical operator. These occur with probability $p^2$.
For distributed stabilizer measurements (circuit on the right hand side of \autoref{ft_dist_circuit}), a single fault on the distribution edge
can propagate to logical operators aligned with the seam: either $X_0$ or $Z_1$ (see \autoref{fig:422_mem_ops}).
Therefore, if the distribution edge is not fault-improved, but rather implemented with a single noisy Bell pair, the $Z_0$ and $X_1$ observables will be flipped with probability $p^{1/\ebitnoise}$.
\begin{figure}[!htb]
  \centering
  \includegraphics[width=\textwidth]{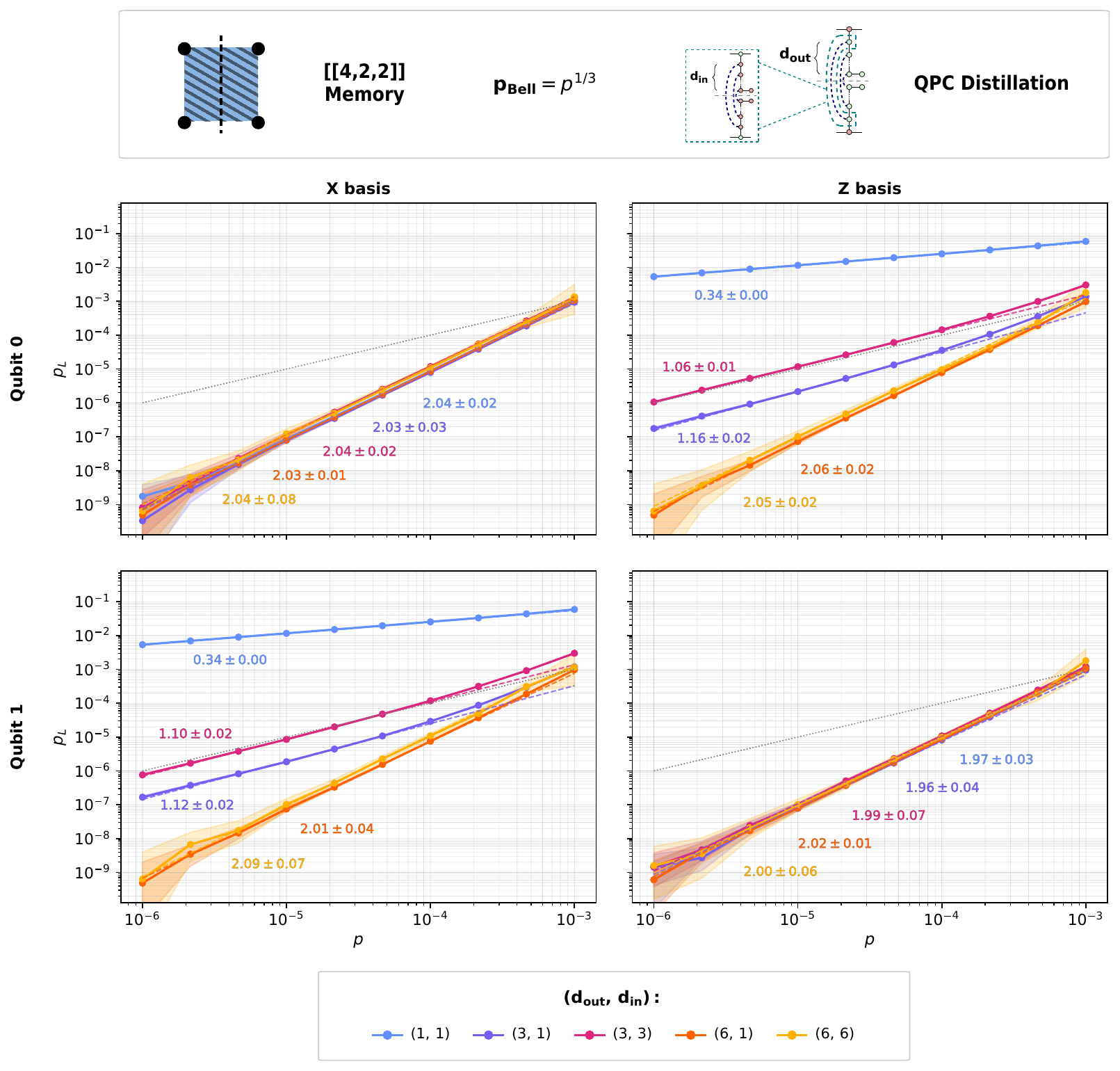}
  \caption{Distributed $[[4,2,2]]$ code: Memory simulation results for the logicals $X_0$, $Z_0$, $X_1$, $Z_1$ for Bell pair noise level $p^{1/3}$. The plaquettes were fault-improved by the \ed approach for combinations of outer and inner improvements $\dout, \din \in \{1,3,6\}$. We observe that Bell pair readout errors are always benign and that for $X_0$ and $Z_1$ with benign hook errors, no fault improvement is needed. For $Z_0$ and $X_1$ with malign hook errors, only the expected improvement of $2\ebitnoise=6$ outer improvement achieves the full circuit distance.}
  \label{fig:422_mem_res}
\end{figure}

The simulation results for the setting of a distributed memory are shown in \autoref{fig:422_mem_res} for the
two logical qubits
of the $[[4,2,2]]$ code, each measured in both the $X$ and $Z$ basis, with fault-improved plaquettes using the \ed approach.
Under error detection, we expect a slope of $d=2$ when the plaquettes are implemented fault-tolerantly.
(We recall from \autoref{sec:FT_dist_stab_meas} that distributed fault-tolerance is reached  with outer improvement $\dout=2\ebitnoise$ and an inner improvement $\din = \ebitnoise $.)
To demonstrate  the effect of varying degrees of fault-improvement
in the setting of Bell pair noise strength $\ebitnoise=3$, we probe combinations of outer and inner improvements $\dout, \din \in \{1,3,6\}$.
The results show the anticipated split between logicals that are parallel/orthogonal to the distribution seam and have benign/malign propagated errors.
The effect of a logical $X_0$ error is seen in the $Z$ basis measurement of qubit 0 (vice versa for $Z_0$), and same for qubit 1.
For $Z_0$ and $X_1$ errors, we reach the desired slope of $d=2$ in the corresponding measurements regardless of the amount of outer improvement, as the propagation of faults on the distribution noise does not contribute to such errors.
In contrast, for $X_1$ and $Z_0$ errors we see that an improvement of $\dout = 2\ebitnoise$ is necessary: there is a clear separation in the low-$p$ slopes between $\dout=1$, $3$ and $6$, with slopes of logical curves that converge to their asymptotic value as $p$ is lowered until we reach the expected $d=2$ logical performance for $\dout=6$.

We observe that $\din$ has no effect on the asymptotic behavior in any of the memory simulations.
The inner improvement affects only time-like errors, and the logical input state provides fixed expected outcomes for the first stabilizer measurement.
\begin{figure}[!htb]
  \centering
  \includegraphics[width=\textwidth]{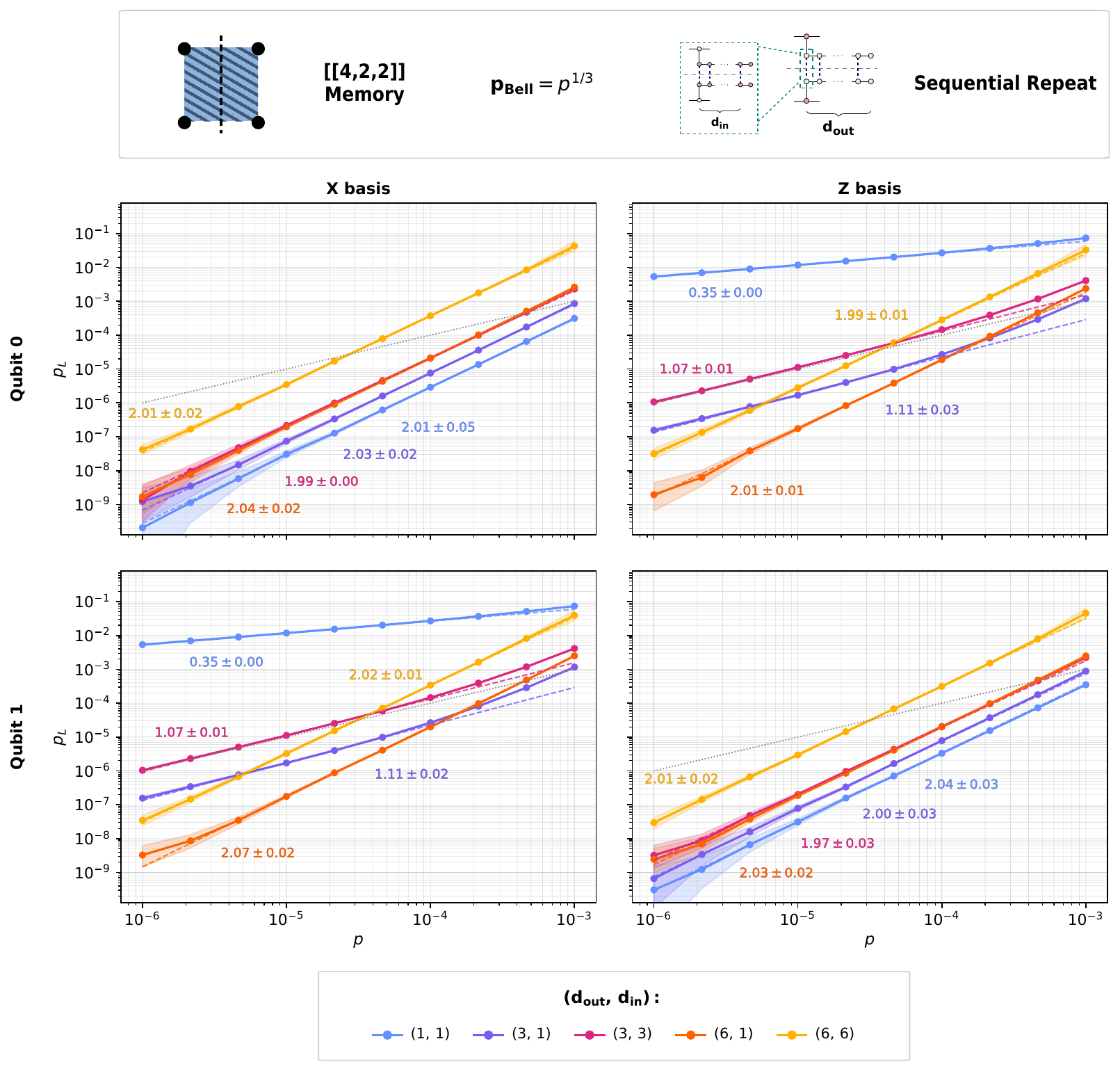}
  \caption{Distributed $[[4,2,2]]$ code: Memory simulation results for the logicals $X_0$, $Z_0$, $X_1$, $Z_1$ for Bell pair noise level $p^{1/3}$ using the \vone approach for combinations of outer and inner improvements $\dout, \din \in \{1,3,6\}$. We observe similar recovery of asymptotic behavior as for the \ed approach. Additional idling due to the sequential Bell pair processing shifts the curves with increasing improvement.}
  \label{fig:422_mem_res2}
\end{figure}

The \vone approach, shown in \autoref{fig:422_mem_res2}, reaches the same asymptotic behavior as the \ed approach. However, this example highlights how an increase of inner improvement can lead to a \emph{decrease} in overall logical performance, seen by the vertical shift of the logical error curves.
For the \ed approach, this effect can only be seen when the outer fault improvement is insufficient, while for the \vone approach it is always visible (though even more prominent in the insufficient cases).
Two mechanisms cause this behavior:
\begin{enumerate}
  \item The impact of $\dout$ and $\din$ on the execution time $T$ of a distributed stabilizer measurement circuit.
        During the duration of $T$ the data qubits are subject to idle noise, meaning that
        if an increase in $\dout$ or $\din$ results in an increased $T$, the probability of a data qubit idle error during the stabilizer measurement also increases.
        The \ed approach has a favorable $T$ scaling, as all Bell pairs are processed in parallel and the effect is barely visible compared to the sequential approach.
  \item The increase of idle noise and gate noise on auxiliary qubits inside the distributed stabilizer measurement circuit.
        This increases the probability of undetectable sets of intra-QPU errors or undetectable combinations of intra-QPU and Bell pair faults, increasing both the prefactor to the leading $p^2$ term and contributing to higher-order subleading terms.
\end{enumerate}

\begin{figure}[!htb]
  \begin{subfigure}[ht]{0.66\textwidth}
    \centering
    \[ \tikzfig[scale=0.6]{07_sim_results/422_lr_circ} \]
    \caption{}
    \label{fig:422_lr_circ}
  \end{subfigure}
  \hfill
  \begin{subfigure}[ht]{0.29\textwidth}
    \centering
    \includegraphics[width=\textwidth]{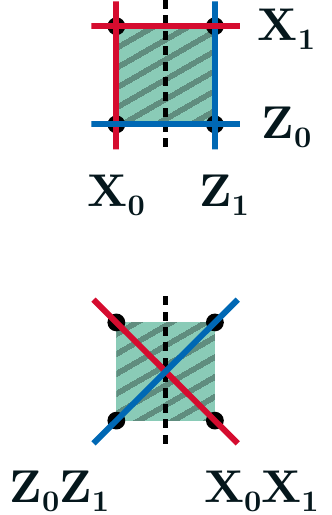}
    \caption{}
    \label{fig:422_lr_ops}
  \end{subfigure}
  \caption{Distributed $[[4,2,2]]$ code: (a) Lattice surgery circuit consisting of an initial state preparation, followed by two rounds of syndrome extraction via a single weight-4 $X$ and $Z$ plaquette and a measurement of a joint logical $ZZ$ or $XX$ via a weight-2 plaquette. (b) The logical operators of the two logical qubits are formed by a horizontal and a vertical qubit pair. The joint logical operators $XX$ and $ZZ$ then form pairs along any of the diagonals.}
  \label{fig:422_lr}
\end{figure}

We also consider a setting that mimics distributed lattice surgery in the [[4,2,2]] code: distributed logical two-qubit measurements.
More specifically, we measure the $X_0 X_1$ or $Z_0 Z_1$ operators via weight-2 measurements: \autoref{fig:422_lr_ops} shows the placements of the joint $XX$ and $ZZ$ operators as diagonal data qubit pairs.
The new weight-2 ``lattice surgery stabilizer'' is measured after the weight-4 $X$ and $Z$ stabilizers in each syndrome extraction cycle, as shown in \autoref{fig:422_lr_circ}.
The results are shown in \autoref{fig:422_ls_res}.

\begin{figure}[!htb]
  \centering
  \includegraphics[width=\textwidth]{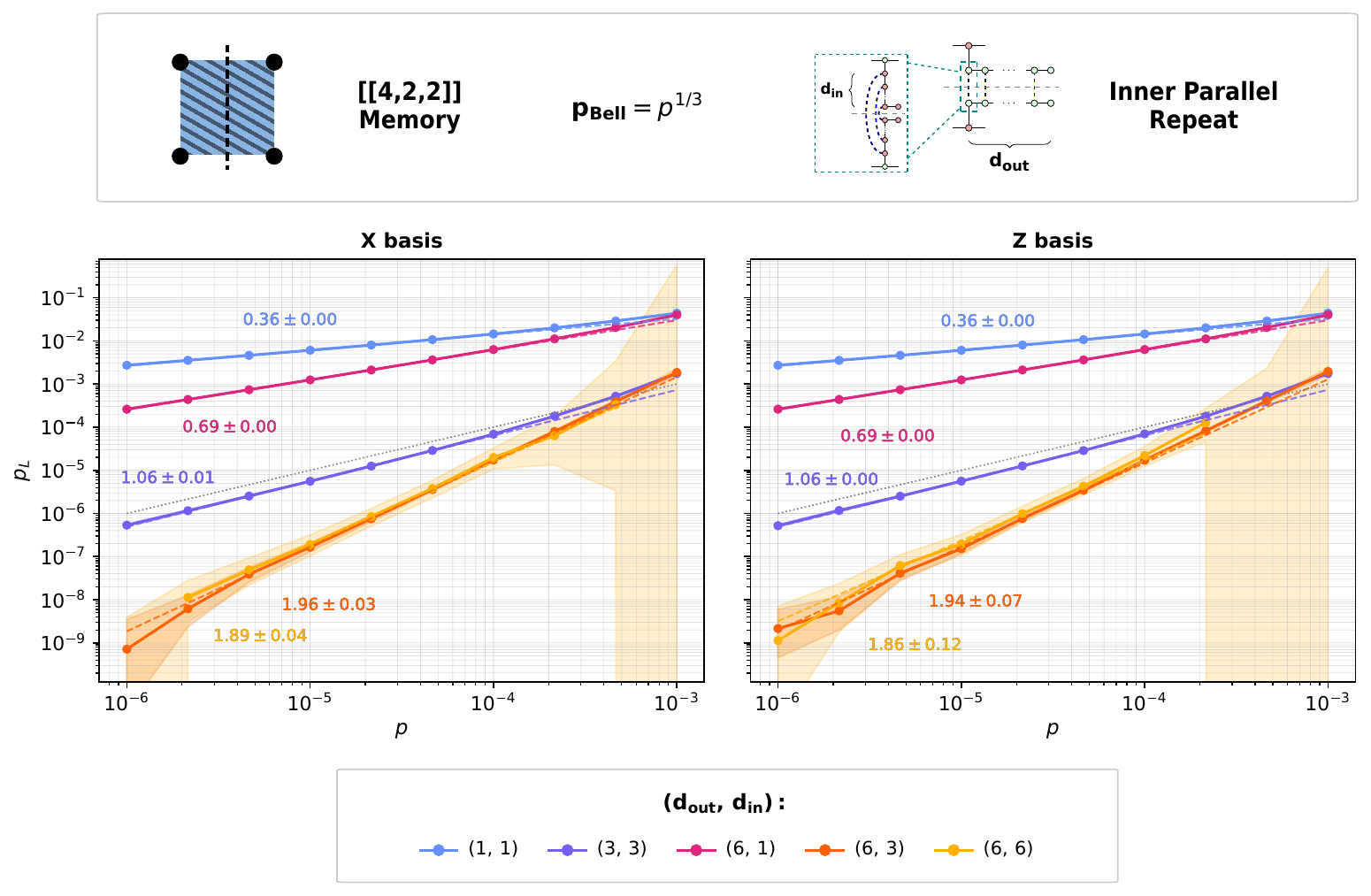}
  \caption{Distributed $[[4,2,2]]$ code: Lattice surgery simulation results for the logicals (a) $X_0 X_1$, (b) $Z_0 Z_1$ for Bell pair noise level $p^{1/3}$. The plaquettes were fault-improved by the \ed approach for combinations of outer and inner improvements $\dout, \din \in \{1,3,6\}$. We observe that Bell pair readout and hook errors are malign. Only with the improvement of $\dout \geq 2\ebitnoise=6$ and $\din\geq \ebitnoise=3$ the full circuit distance is achieved, as expected.}
  \label{fig:422_ls_res}
\end{figure}

Since the joint operators form diagonals, they always cross the distribution boundary and must be measured using distributed circuits.
The distributed weight-2 measurements do not produce hook errors themselves, but are affected by weight-2 propagated errors of the distributed weight-4 stabilizer measurements.
Compared to the memory setting, readout errors also become relevant: unlike the weight-4 stabilizers defining the code space, the new weight-2 ``lattice surgery stabilizer'' measurements lack a baseline to compare to. When repeated twice they form a single detector between them that can catch readout errors, so that the time-like distance now relies on their readout error probability.
In the case of $\dout=1$, hook errors dominate as the leading error term and $\din$ only impacts through higher-order effects.
For $\dout=3$ and $\din=1$, the insufficient suppression of readout errors dictates the logical performance, and only with $\din \geq3$ do we recover the different asymptotic behaviors seen in the memory setting.
Finally, for $\dout=6$ and $\din\geq 3$ the stabilizer measurements become fault-tolerant and the joint measurement achieves the full circuit distance $d=2$.

We proceed with the [[6,4,2]] code, focusing on the memory setting. (The effect of noisy readouts requiring $\din$ improvement for [[6,4,2]] ``lattice surgery'' is analogous to the [[4,2,2]] case.)
We simulate both a 3-3 split and a 4-2 split, using the distributed weight-6 stabilizer measurement circuits derived in \autoref{subsec:Weight-6-stabilizers}.
The results demonstrate the different effects of error propagation to data qubits.
For the sake of brevity, we refrain from reporting the individual results of all eight possible $X$ or $Z$ logical measurements of the four logical qubits, and instead report the logical error rates of \emph{any} logical being flipped.
This naturally shifts the onset of the asymptotic behavior to lower noise regimes, as there are now more higher-order error mechanisms  contributing to the error rate.
\begin{figure}[!htb]
  \centering
  \includegraphics[width=\textwidth]{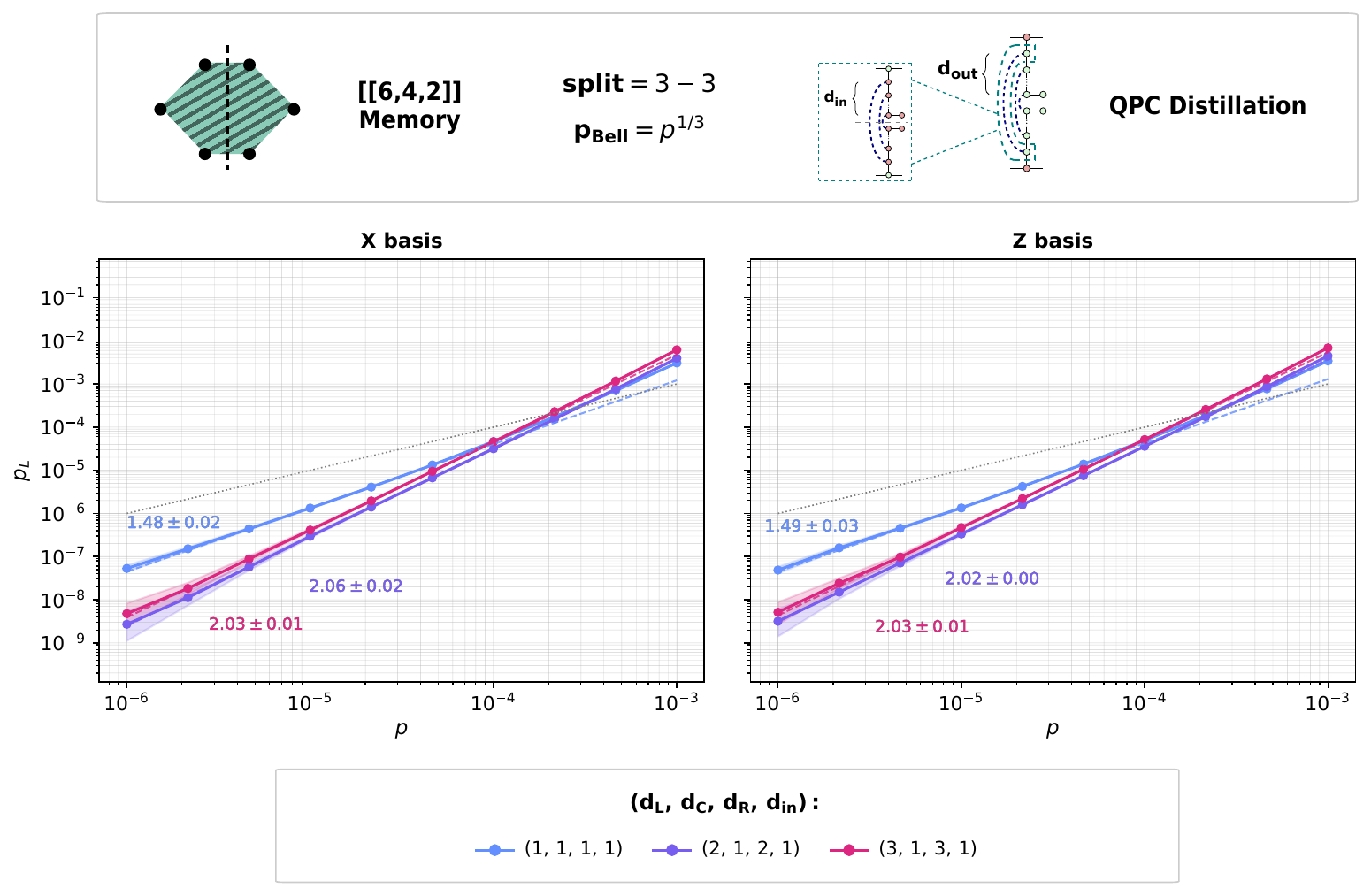}
  \caption{Distributed $[[6,4,2]]$ code: Memory simulation results of the \emph{3-3} split for Bell pair noise level $p^{1/3}$. The plaquettes were fault-improved by the \ed approach for combinations of outer improvements $d_L, d_C, d_R$ and inner improvements $\din$. We observe that Bell pair readout errors are benign and weight-2 and weight-3 hook errors are malign. With the improvement of $d_L, d_R \geq \ebitnoise=3$ the full circuit distance is achieved, as expected.}
  \label{fig:642_mem_33}
\end{figure}

\autoref{fig:642_mem_33} shows the results for the 3-3 split.
The logicals of the $[[6,4,2]]$ code are of weight two, meaning that a weight-3 propagated error or a weight-2 propagated error originating from the distribution edges can flip a logical on their own.
By choosing $(d_L, d_C, d_R, \din)$ such that we only increase the outer improvements $d_L$ and $d_R$, at $(3,1,3,1)$ the weight of both weight-2 and weight-3 propagated errors is $w\geq 2$, and we recover the code distance of $d=2$.
Interestingly, for $(2,1,2,1)$ we only see a hint of the curve bending away from a slope of two around a physical error rate $p^{-6}$. $(1,1,1,1)$ also shows only a moderate drop to a slope of $\approx 1.5$.
Their asymptotic slopes are expected to be $5/3$ and $1$, respectively.
We attribute this high-$p$ lack of asymptotic behavior to the aforementioned increase in higher-order contributions,
which could
place us within
the so-called \emph{waterfall} regime introduced by~\cite{guScalableNeuralDecoders2026}. This would mean that $w=2$ faults can dominate the logical error rate in this regime even
when lower-weight faults can contribute, due to differing prefactors.
We expect the curves will reach the asymptotic slopes in lower physical error regimes.

\begin{figure}[!htb]
  \centering
  \includegraphics[width=\textwidth]{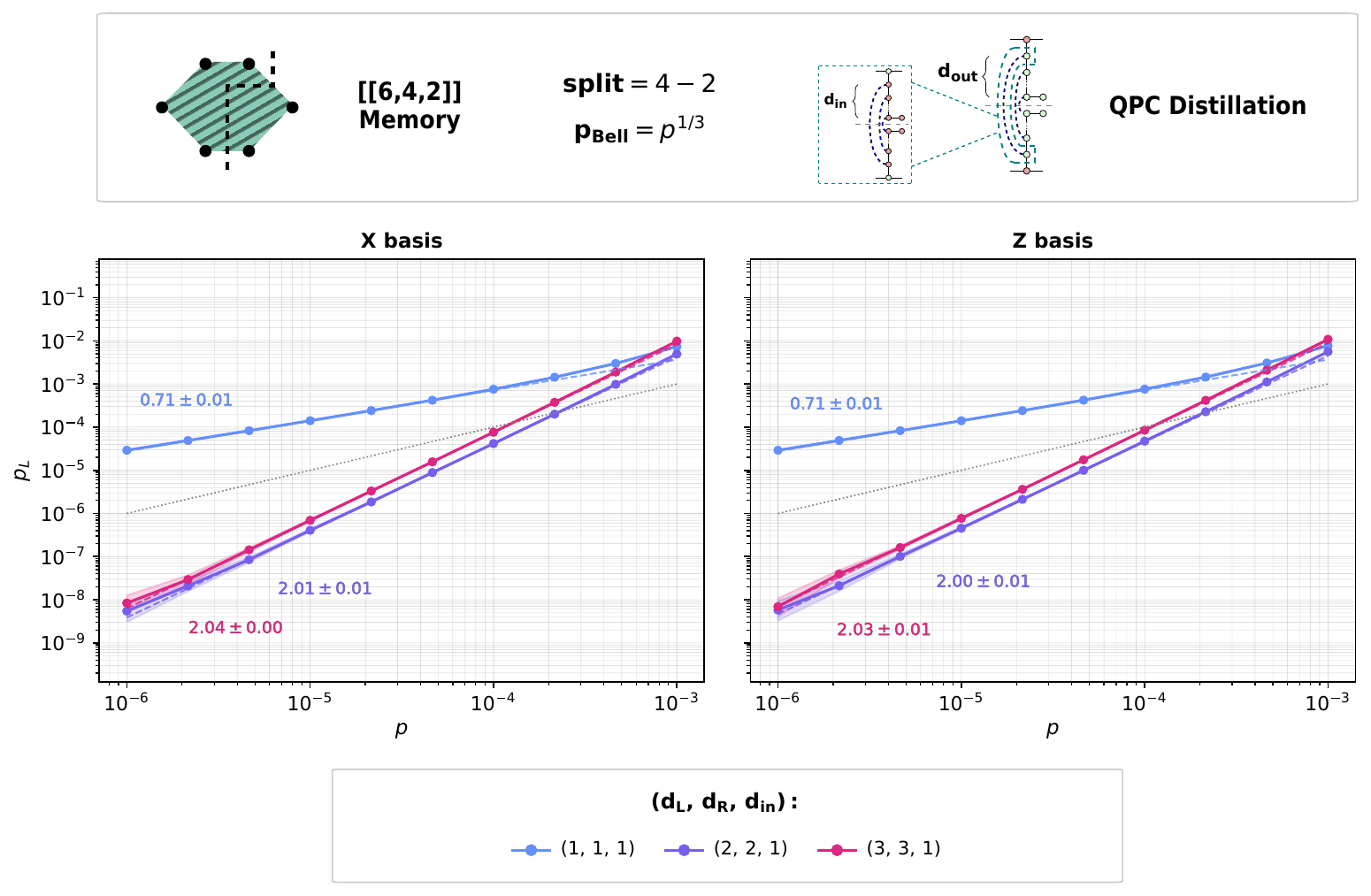}
  \caption{Distributed $[[6,4,2]]$ code: Memory simulation results of the \emph{4-2} split for Bell pair noise level $p^{1/3}$. The plaquettes were fault-improved by the \ed approach for combinations of outer improvements $d_L, d_C, d_R$ and inner improvements $\din$. We observe that Bell pair readout errors are benign and weight-2 and weight-3 hook errors are malign. With the improvement of $d_L, d_R \geq \ebitnoise=3$ the full circuit distance is achieved, as expected.}
  \label{fig:642_mem_42}
\end{figure}

The results of the 4-2 split in \autoref{fig:642_mem_42} show a similar behavior.
For improvements $(d_L, d_R, \din)$ of $(3,3,1)$ we improve weight-2 propagated faults to $w=2$, making the stabilizer measurements fault-tolerant.
However, we again observe that a $(2,2,1)$ improvement, with an expected residual distance of $4/3$, barely lowers the logical performance in the sampled physical error regime.
Meanwhile the slope of $(1,1,1)$ is closer to its expected asymptotic value of $2/3$.
\begin{figure}[!htb]
  \centering
  \includegraphics[width=\textwidth]{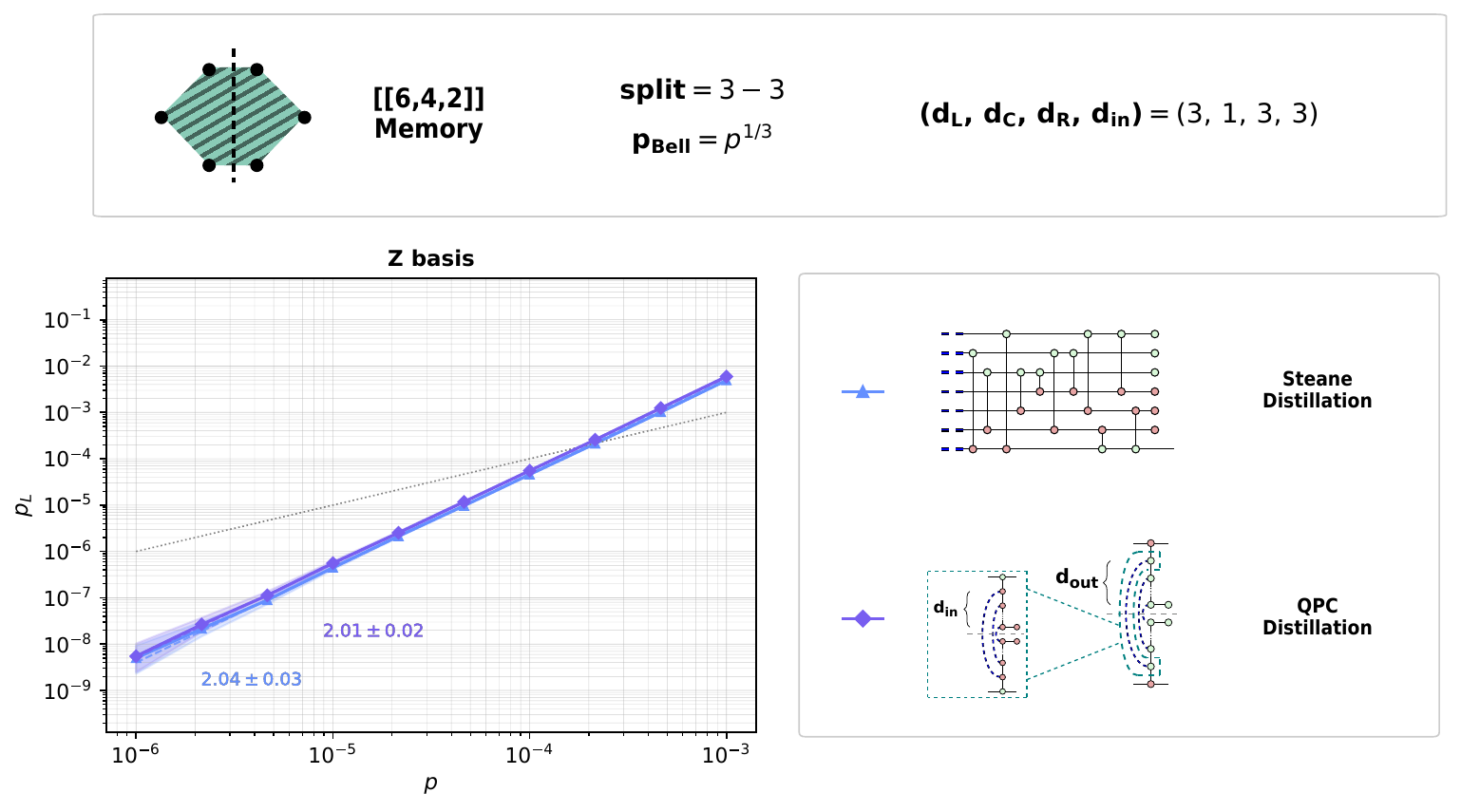}
  \caption{Distributed $[[6,4,2]]$ code: Memory simulation results of the 3-3 split for Bell pair noise level $p^{1/3}$, comparing fault-improvement with the \ed approach to Steane code distillation. Improvements are chosen as $(3,1,3,3)$, in accordance to the Steane code distances $d_X, d_Z=3$. We observe that both approaches achieve the full circuit distance.}
  \label{fig:642_mem_dist}
\end{figure}

As a final demonstration in the [[6,4,2]] setting, we implement distillation based on the Steane code for the 3-3 split, as shown in \autoref{fig:642_mem_dist}.
Since the Steane code has fixed distances $d_X, d_Z=3$, it can improve a distribution edge up to $w_X=d_X/\ebitnoise=1$,  $w_Z=d_Z/\ebitnoise=1$ in the scenario of $\ebitnoise=3$.
We choose both the left and right distribution edges to be improved by the Steane distillation and use the QPC distillation scheme to improve the center distribution edge to $\dout=1, \din=3$, resulting in an overall $(3,1,3,3)$ improvement.
Comparing this strategy to the usage of \ed for all distribution edges, we find that both have close to identical logical curves and recover full circuit distance for the $[[6,4,2]]$ code.

\FloatBarrier
\subsection{Distributed Surface Code}

Within this subsection we present the results of simulating distributed memory and lattice surgery in the rotated surface code, comparing the context-aware strategies developed in \autoref{sec:surface_code} to context-free strategies.

We consider distances $d>2$, performing no postselection but instead correcting errors based on their syndromes. This requires choosing a decoder.
The surface code has the advantage of being \emph{matchable}, making it fast to decode~\parencite{higgottSparseBlossomCorrecting2025}. However, this advantage only holds when the syndrome extraction circuits preserve matchability. While the simple syndrome extraction circuit commonly used in the monolithic setting preserves matchability, this is not the case for the distributed circuits derived in the present work: the additional detecting regions inside the distribution edges break the matchability of the overall code.

As discussed in \autoref{sec:integrated_decoding}, one option is to split the decoding into a two-step process.
While the decoding of the distillation step depends on the chosen improvement approach, the outer code is again a matchable surface code. However, the simplicity of this decoder construction comes at the cost of requiring twice the distance for the fault-improvement in the distillation step.
For this reason, we instead opted for integrated decoding, allowing for lower Bell pair requirements.
We use the decoder Tesseract~\parencite{beniTesseractSearchBasedDecoder2025}, both for the surface code simulations and for the color code simulations in the next subsection.
To keep decoding times manageable, we use the \emph{short-beam} configuration of Tesseract, trading accuracy for runtime.
\begin{figure}[!htb]
  \centering
  \includegraphics[width=\textwidth]{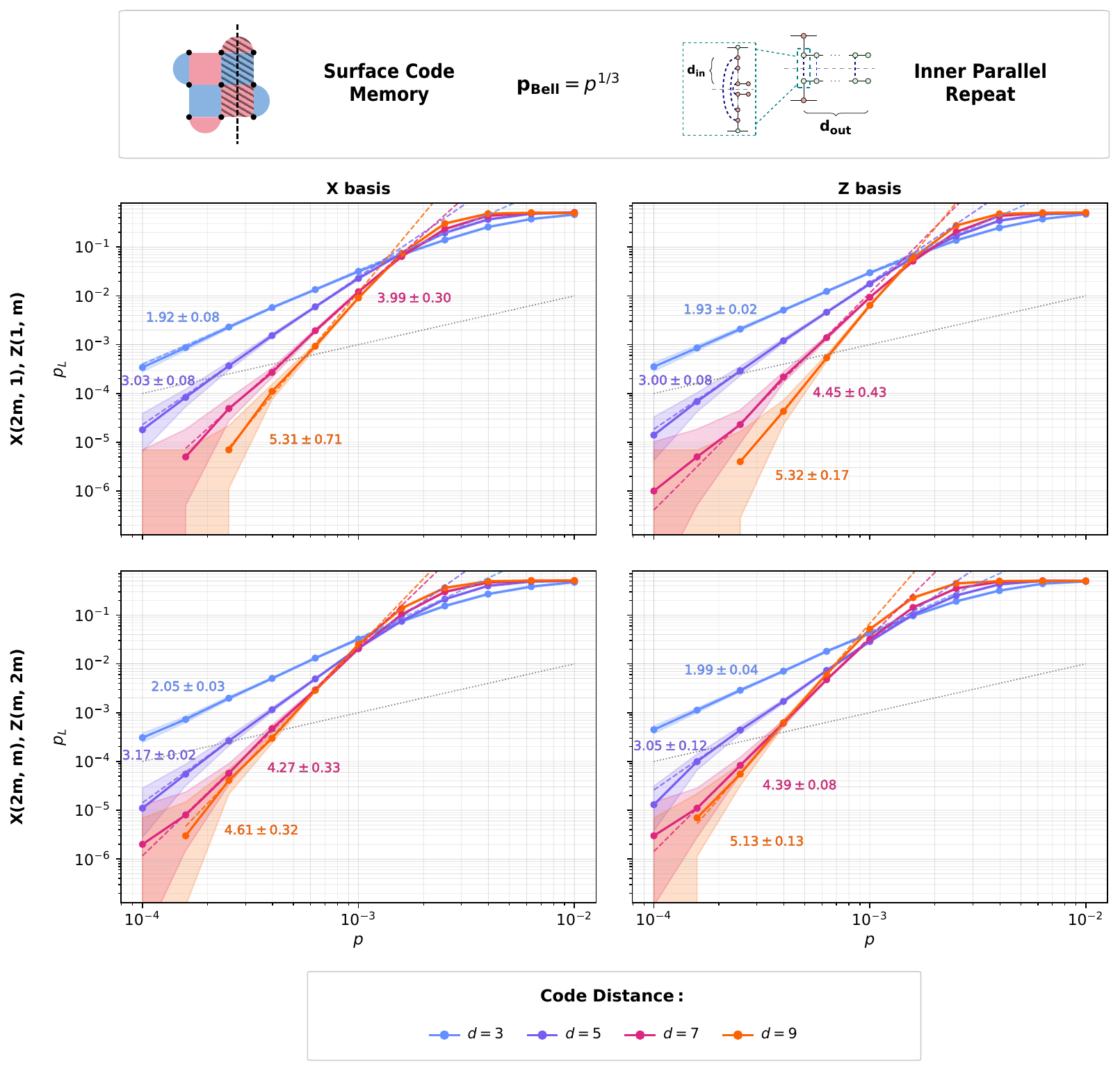}
  \caption{Distributed surface code: Memory simulation results for Bell pair noise level $p^{1/3}$ and distributed 2-2 split plaquettes improved by the \vtwo approach. Both the context-aware (top) and context-free (bottom) improvement strategies achieve the full circuit distance for both $X$ basis (left) and $Z$ basis (right), with the context-aware improvement having a slightly higher threshold.}
  \label{fig:SC_mem_dist}
\end{figure}

\autoref{fig:SC_mem_dist} shows the results of the distributed memory simulations in the $X$ and $Z$ basis, using 2-2 split plaquettes and taking $\ebitnoise=3$ . We use the \vtwo approach for both the context-free fault-improvement strategy $X({2\ebitnoise,\ebitnoise})$, $Z({\ebitnoise,2\ebitnoise})$ and the context-aware fault-improvement strategy $X({2\ebitnoise, 1})$, $Z({1 ,\ebitnoise})$, simulating the code distances $d\in \{3,5,7,9\}$.
Each configuration was evaluated with $10^7$ shots.

As is seen in the figure, both the context-free and context-aware strategy reach the desired slopes for each distance and Pauli basis.
For distances 7 and 9, the logical error curves are even surpassing the expected slopes of $4$ and $5$. We tentatively attribute this behavior to the physical noise rates still being within the waterfall regime, but also note that the results at the lower end of the error range come with a large variance (due to the need for larger sample sizes to observe a sufficient number of logical errors), and that this variance adds more uncertainty to the estimated slope.

\begin{figure}[!htb]
  \centering
  \includegraphics[width=\textwidth]{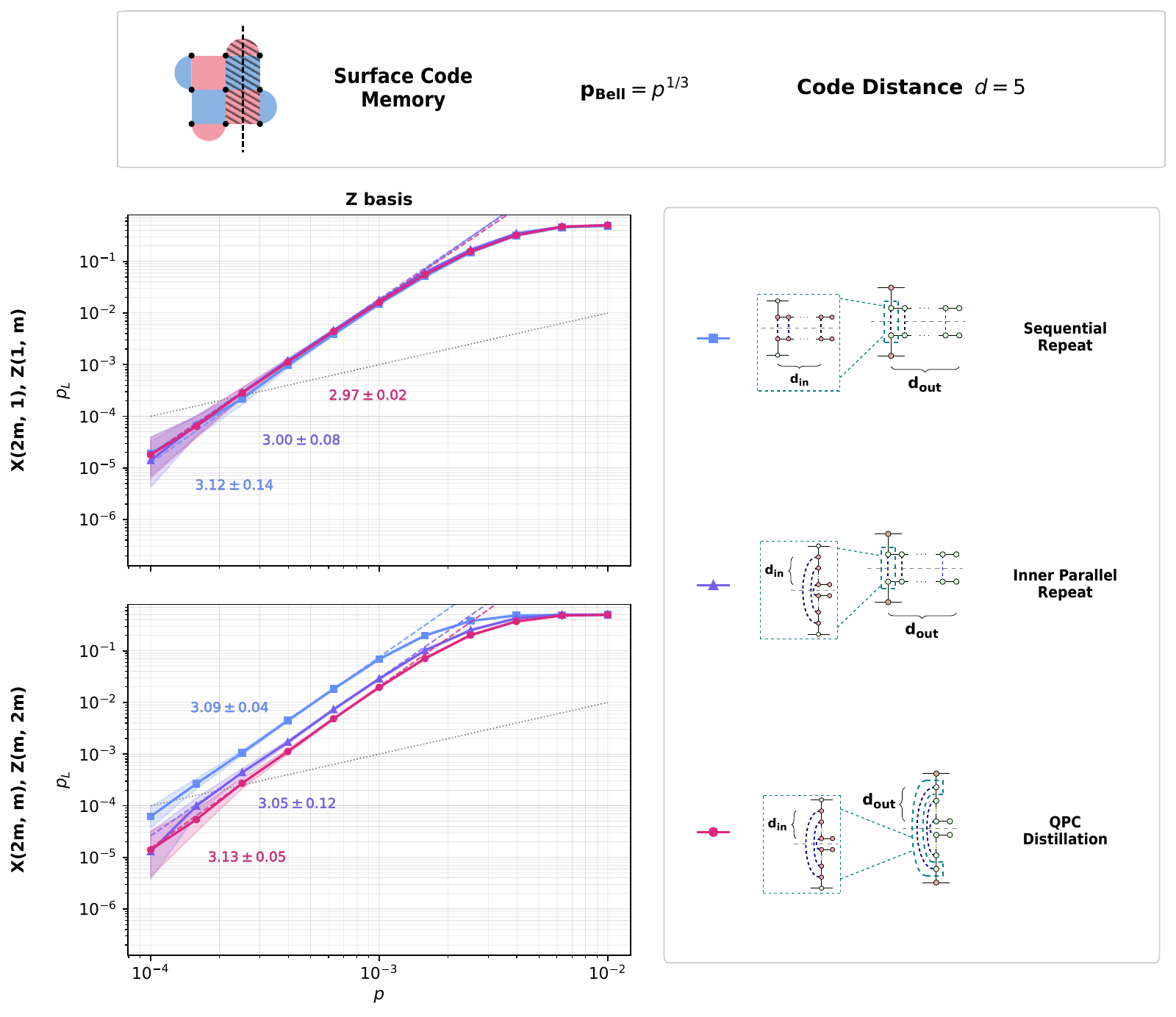}
  \caption{Distributed surface code: Memory simulation results for Bell pair noise level $p^{1/3}$ and distributed 2-2 split plaquettes for fixed distance $d=5$ in the $Z$ basis. Both the context-aware (top) and context-free (bottom) are compared for three different improvement approaches. The additional unnecessary improvement of the context-aware strategy increases idle-time for the (partially) sequential approaches, decreasing their logical performance.}
  \label{fig:SC_mem_dist2}
\end{figure}

An important point to note in \autoref{fig:SC_mem_dist}  is that in addition to its lower resource requirements, the context-aware strategy also has a better threshold compared to the context-free strategy.
Both the amount of performance improvement seen from context-awareness and the overall performance itself depend on how sequential the fault-improvement approach is.
\autoref{fig:SC_mem_dist2} shows a comparison of the \vone\!\!, \vtwo and \ed approaches for fixed distance $d=5$.
While the three approaches have similar error curves with the context-aware strategy, the \vtwo and especially \vone approach have a shift in logical performance with the context-free strategy.
The reason is that the unnecessary improvements induce an increase in idle time on the data qubits for the (partially) sequential approaches as their cycle times $T$ increase.
In contrast, for the parallel \ed approach both strategies have similar logical performance (with the context-aware approach still holding an advantage over the context-free approach in terms of the number of Bell pairs required).

\begin{figure}[!htb]
  \centering
  \includegraphics[width=\textwidth]{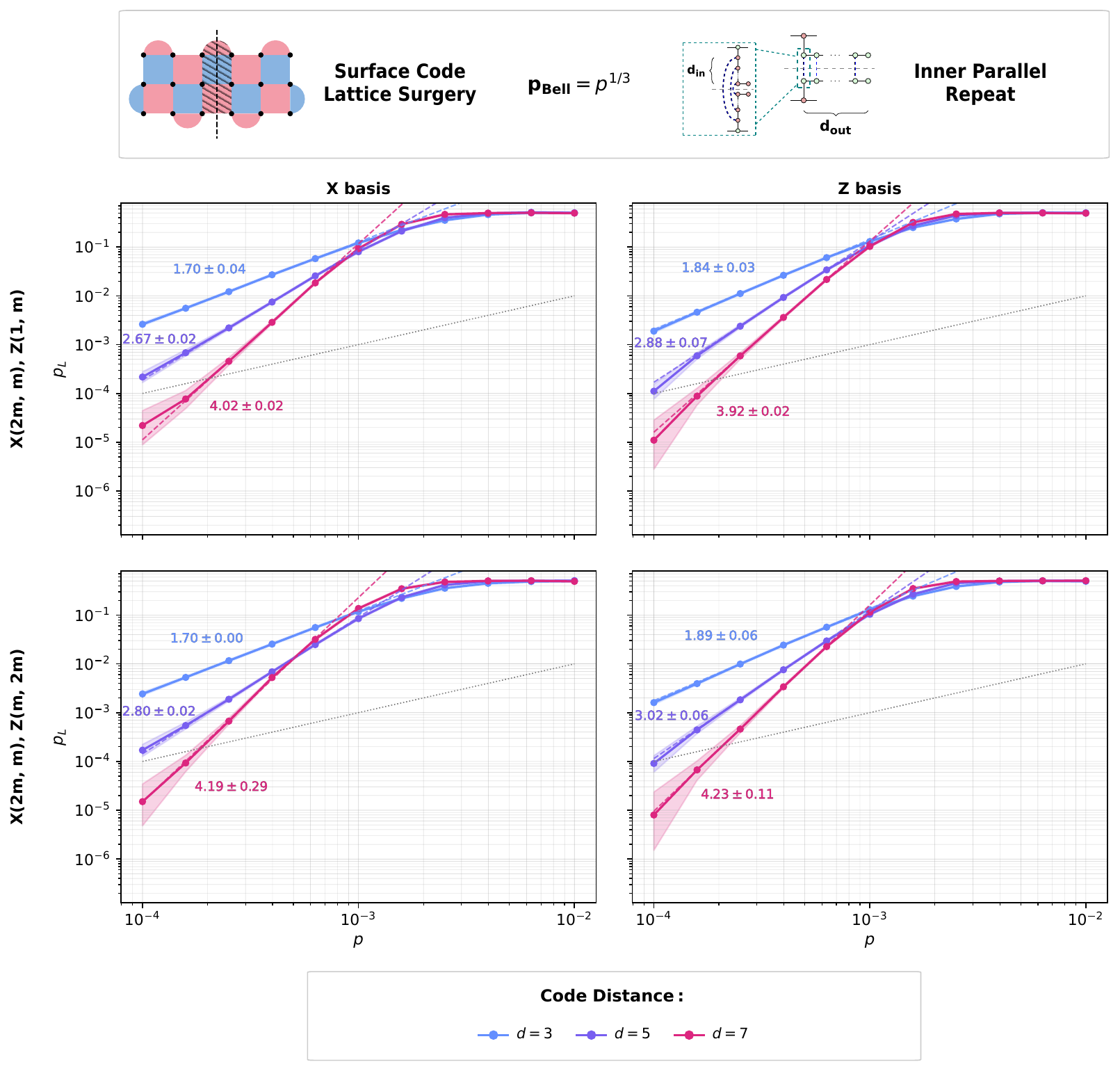}
  \caption{Distributed surface code: Lattice surgery simulation results for Bell pair noise level $p^{1/3}$ and distributed 2-2 split surgery plaquettes improved by the \vtwo approach. Both the context-aware (top) and context-free (bottom) improvement strategies achieve the code distance for both $X$ basis (left) and $Z$ basis (right) with an equal threshold of $\sim 10^{-3}$.}
  \label{fig:SC_LS_dist}
\end{figure}

We next show the simulation results for distributed lattice surgery, which we benchmark through the state teleportation protocol of \autoref{fig:tp_ls} in the $X$ and $Z$ bases, taking $\ebitnoise=3$.
The $XX$ surgery is realized with a single-column seam according to \autoref{fig:sc_surg_comb} with 2-2 split plaquettes.
The plaquettes are fault-improved with the \vtwo approach for both the context-free improvement strategy $X({2\ebitnoise,\ebitnoise})$, $Z({\ebitnoise,2\ebitnoise})$ and the context-aware improvement strategy for lattice surgery $X({2\ebitnoise, \ebitnoise})$, $Z({1 ,\ebitnoise})$.
We only simulated up to code distances $d\in \{3,5,7\}$, as the simulation and decoding of the larger and more complex circuits involved in lattice surgery are more demanding.
Each configuration was evaluated with $10^7$ shots, and the results are shown in \autoref{fig:SC_LS_dist}.

Compared to the memory setting, the difference between the context-free and context-aware fault improvements is less visible. The main difference is seen near threshold the $X$ basis: in the context-free case, the crossing point of $d=5$ and $d=7$ is shifted to the left. Meanwhile, in the low-$p$ regime both reach full slope and similar logical error rates as with the context-aware fault-improvement.
The difference in near-threshold behavior between the $X$ and $Z$ bases could be an artifact of decoding with the short beam Tesseract configuration; we note that since the chosen teleportation circuit utilizes $XX$ lattice surgery, the $X$ basis teleportation contains a more complex correlation surface involving the joint measurement plaquettes at the seam.

\begin{figure}[!htb]
  \centering
  \includegraphics[width=\textwidth]{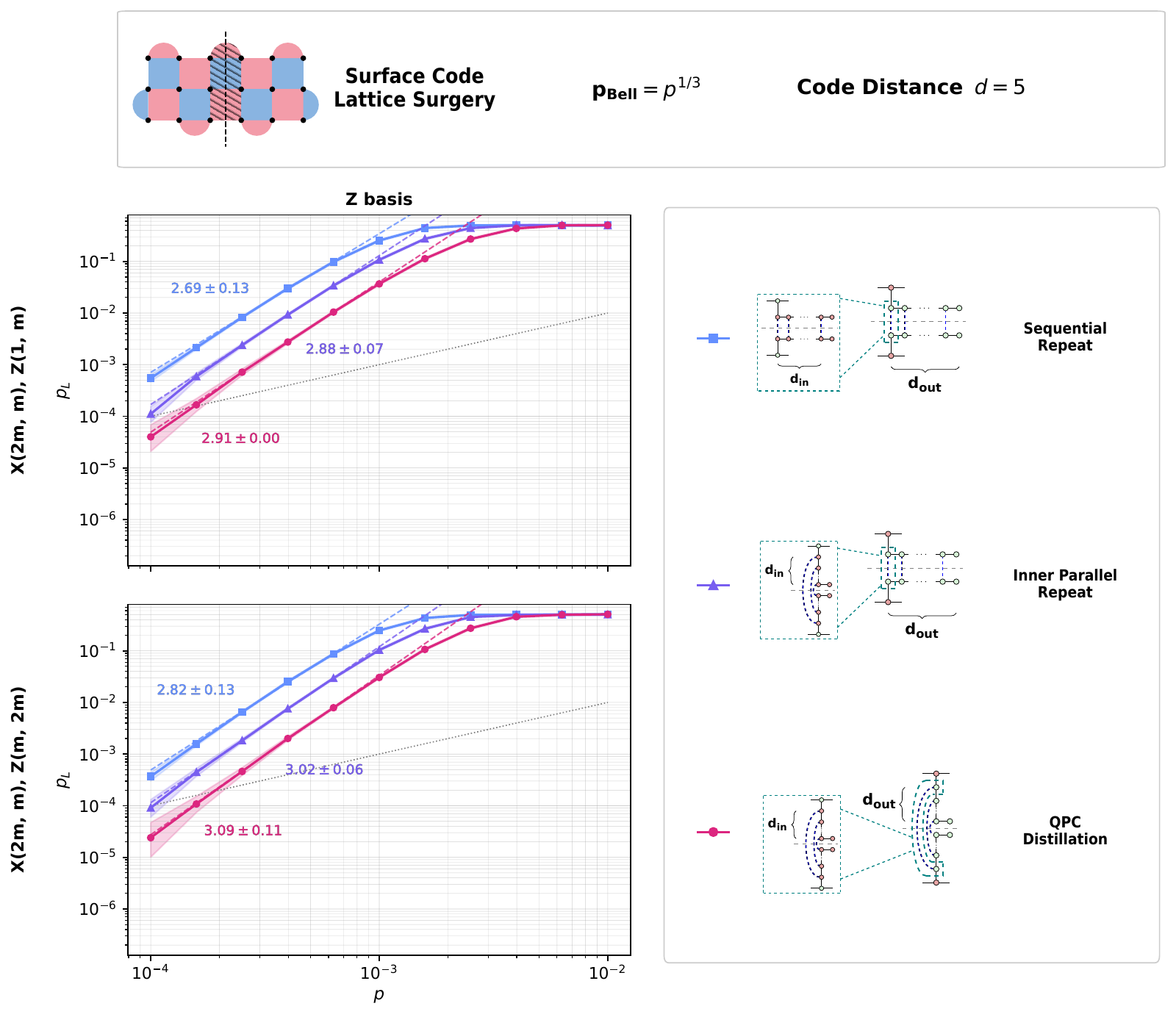}
  \caption{Distributed surface code: Lattice surgery simulation results for Bell pair noise level $p^{1/3}$ and distributed 2-2 split surgery plaquettes for fixed distance $d=5$ in the $Z$ basis. Both the context-aware (top) and context-free (bottom) are compared for three different improvement approaches. Contrary to the memory scenario, the approaches have equal curve shifts for both the context-free and context-aware strategy.}
  \label{fig:SC_LS_dist2}
\end{figure}

Apart from the near-threshold difference in behavior, there is no significant performance difference between context-free and context-aware improvement for lattice surgery. (But, again, the context-free improvement saves on resources.)
We can see that shifts in logical performance between sequential and parallel fault-improvement approaches are now apparent for \emph{both} the context-free and context-aware improvement, as shown in \autoref{fig:SC_LS_dist2} for fixed distance $d=5$.
The reason for observing the same behavior between context-free and context-aware improvement is that the type of plaquette with the highest Bell pair requirement (here: $X$-type) bottlenecks the overall execution time, hence determining the amount of idle noise.
Although the $Z$ plaquettes of the context-aware strategy require only a fraction of the Bell pairs of the context-free plaquettes, the $X$ plaquettes still need the same amount of fault-improvement $X({2\ebitnoise,\ebitnoise})$.
For the partially or fully sequential approaches, this means that while the $Z$ stabilizer measurements will be finished earlier for the context-aware strategy,  the data qubits will still idle until the $X$ stabilizer measurements are finished.

\FloatBarrier

\subsection{Distributed Color Code}

Within this subsection we present the results of simulating distributed memory and lattice surgery in the triangular color code, focusing on context-aware fault-improvement.
The color code is not matchable, and Tesseract is commonly used for its decoding already in the monolithic setting. In the distributed setting we use it to perform integrated decoding, again with the {short-beam} configuration.

\begin{figure}[!htb]
  \centering
  \includegraphics[width=\textwidth]{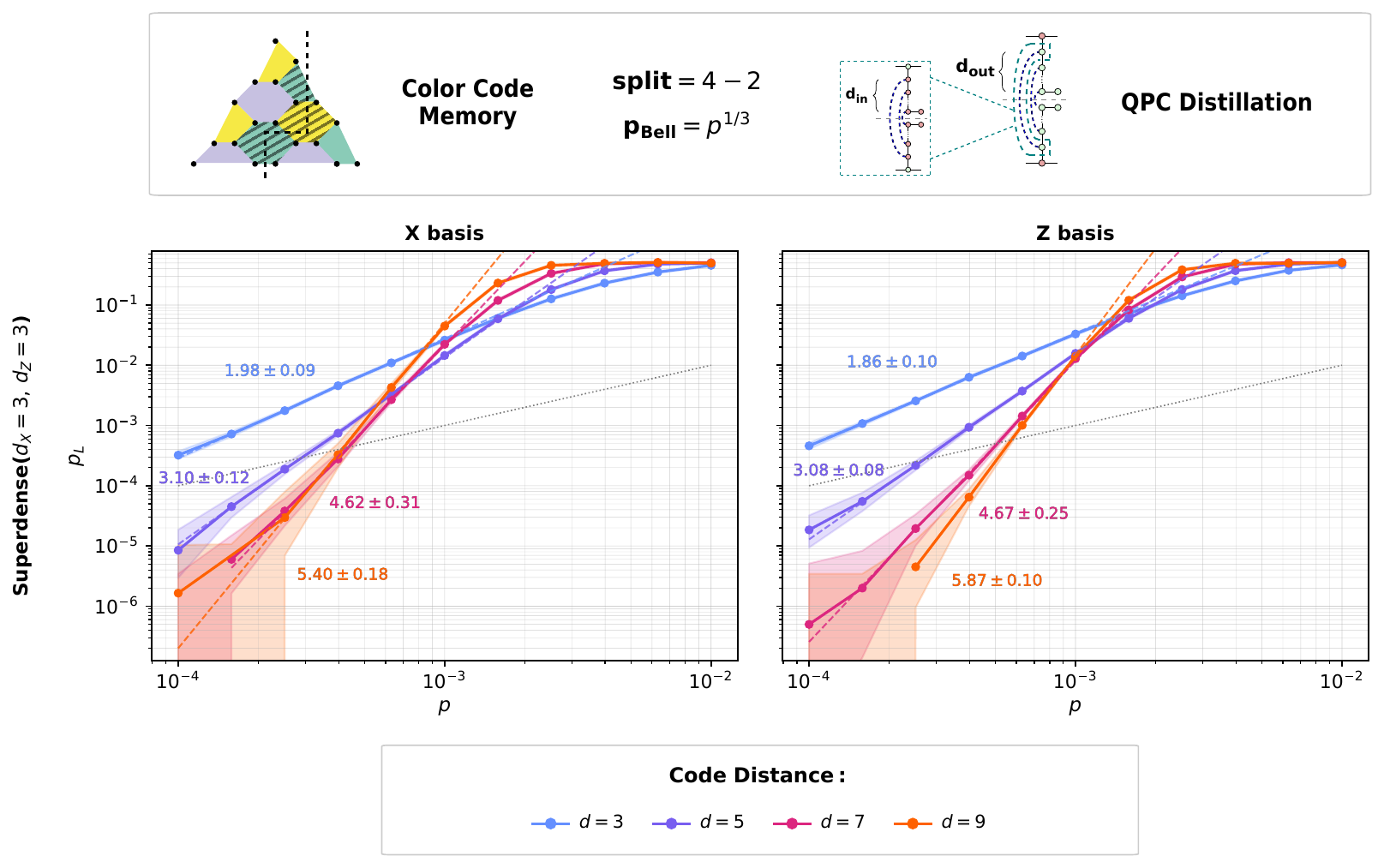}
  \caption{Distributed color code: Memory simulation results for Bell pair noise level $p^{1/3}$ and distributed 4-2 split plaquettes improved by the \vtwo approach. With improvements of $(d_X=3, d_Z=3)$ for the distribution edges in the superdense extraction circuits the full circuit distances are achieved.}
  \label{fig:CC_mem_dist}
\end{figure}

The results of the distributed memory simulations for $\ebitnoise=3$ are shown in \autoref{fig:CC_mem_dist}. We chose the distribution seam that utilizes 4-2 split plaquettes with distributed superdense circuits.
We fault-improve the plaquettes with the \ed approach, improving the distribution edges by $({\ebitnoise,\ebitnoise})$ and simulating code distances $d\in \{3,5,7,9\}$. Each configuration was evaluated with $10^6$ shots.
We see that we reach the desired slopes for each distance in both the $X$ and $Z$ Pauli basis.
For distances 7 and 9 we attribute the steepness of the slopes in the $Z$ basis
as indicating an even more prominent waterfall regime than in the surface code.

We find that in the $X$ basis, the logical error curves for distance 7 and 9 exhibit a slight shift compared to the $Z$ basis. This is unexpected, since the color code is symmetric between $X$ and $Z$. The superdense circuit is symmetric between $X$ and $Z$ measurements up to a time ordering, and the fault-improvement is also symmetric.
The origin of the shift is unclear, with one possibility being that it is an artifact of the decoder (either directly induced by the time-asymmetry, or originating from a random ordering of detectors that happens to disfavor the $X$ basis). We leave it as an open question.

\begin{figure}[!htb]
  \centering
  \includegraphics[width=\textwidth]{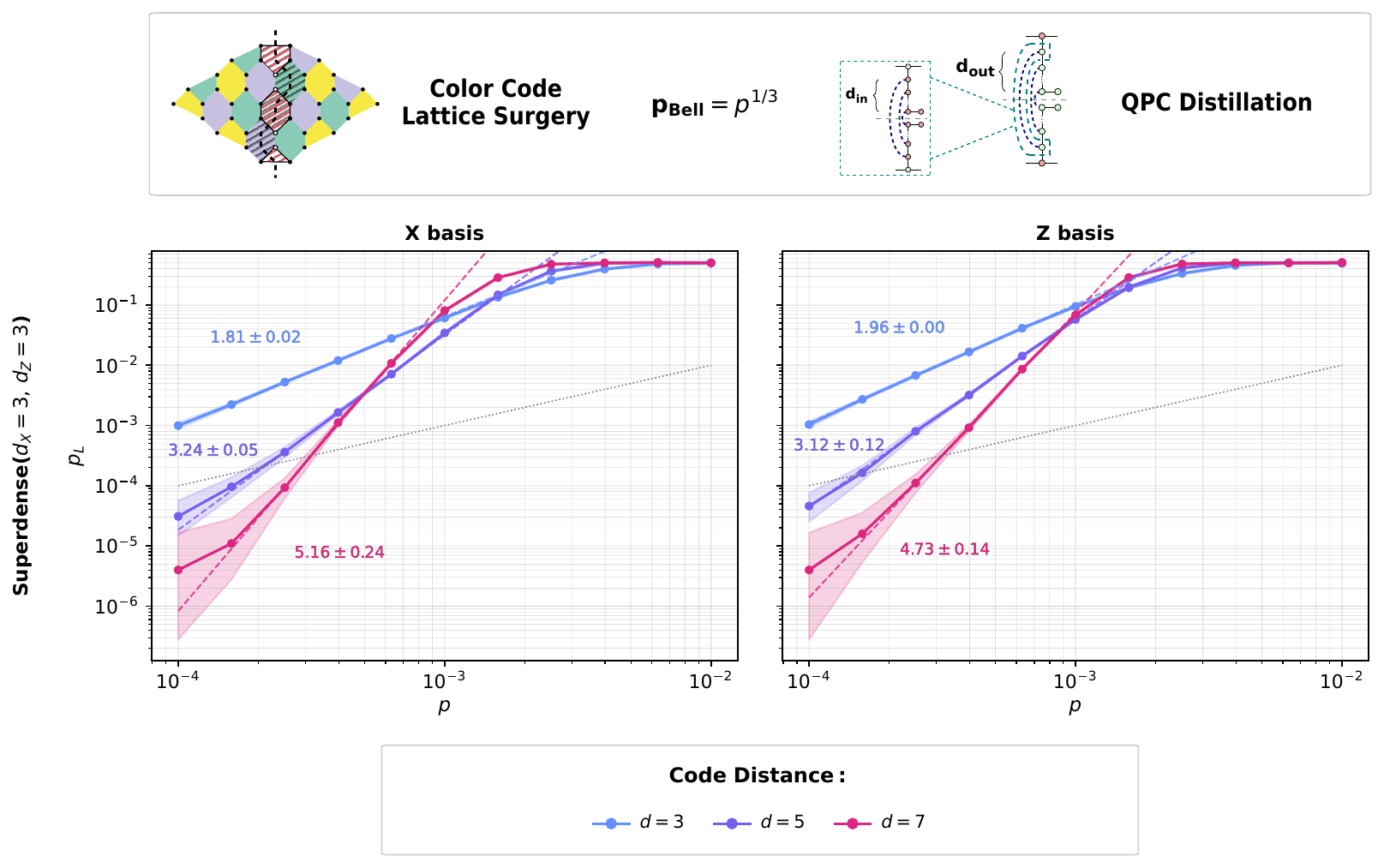}
  \caption{Distributed surface code: Lattice surgery simulation results for Bell pair noise level $p^{1/3}$ and a combination of distributed 3,3 and 4-2 split surgery plaquettes improved by the \vtwo approach. With improvements of $(d_X=3, d_Z=3)$ for the distribution edges in the superdense extraction circuits the full circuit distances are achieved.}
  \label{fig:CC_ls_dist}
\end{figure}

We next show simulation results for distributed lattice surgery in the color code, following the same teleportation-based benchmarking protocol as with the surface code.
The $XX$ surgery is realized according to \autoref{fig:cc_ls}. It involves extended boundary plaquettes hosted locally, extended boundary plaquettes that are distributed with a 4-2 split, and finally new distributed plaquettes with a 3-3 split that allow for extraction of the logical measurement outcome.
Both the 4-2 and 3-3 distributed plaquettes are implemented using the superdense extraction circuit, with distribution edges improved to $({\ebitnoise,\ebitnoise})$ using the \ed approach.
As in the surface code, we restrict the simulations to $d\in \{3,5,7\}$ due to the increased computational demand of lattice surgery.
Each configuration was evaluated with $10^6$ shots.

The lattice surgery results are shown in \autoref{fig:SC_LS_dist}, where the logical error curves show the expected asymptotic behavior.
Similar to what was seen in the surface code, we observe that the crossing point of the $d=5$ and $d=7$ logical error curves is shifted to the left in the $X$ basis.
Given the asymmetry of the teleportation circuit, this asymmetry between the $X$ and $Z$ bases is less surprising than in the color code memory setting. As noted for the surface code, the decoder choice could be a possible origin of the behavior.

\FloatBarrier

\subsection{Distributed Surface Code with Connectivity-Constrained QPU Boundaries}

The distributed circuits used in the simulations above have not been designed for restricted connectivity among the auxiliary qubits along the QPU boundary; only the connectivity of bulk plaquettes has been restricted to the square lattice. The assumed boundary flexibility allows, for instance, the high degree of parallelization in the \ed approach, but may not be realistic. To conclude the present section, we therefore design and simulate circuits that can be implemented when the qubits near the boundaries are constrained to square-lattice connectivity, focusing on the setting of distributed lattice surgery in the rotated surface code.
More specifically, we assume that for each distributed plaquette we can utilize \emph{one} Bell pair channel (connecting to one
communication qubit per QPU), and that we are furthermore limited to \emph{two} auxiliary qubits per QPU.
Both the auxiliary qubits and communication qubits follow the local square-lattice connectivity, as shown in \autoref{fig:sc_sq_grid_layout}.

\begin{figure}[!htb]
  \begin{subfigure}[ht]{0.47\textwidth}
    \centering
    \includegraphics[width=\textwidth]{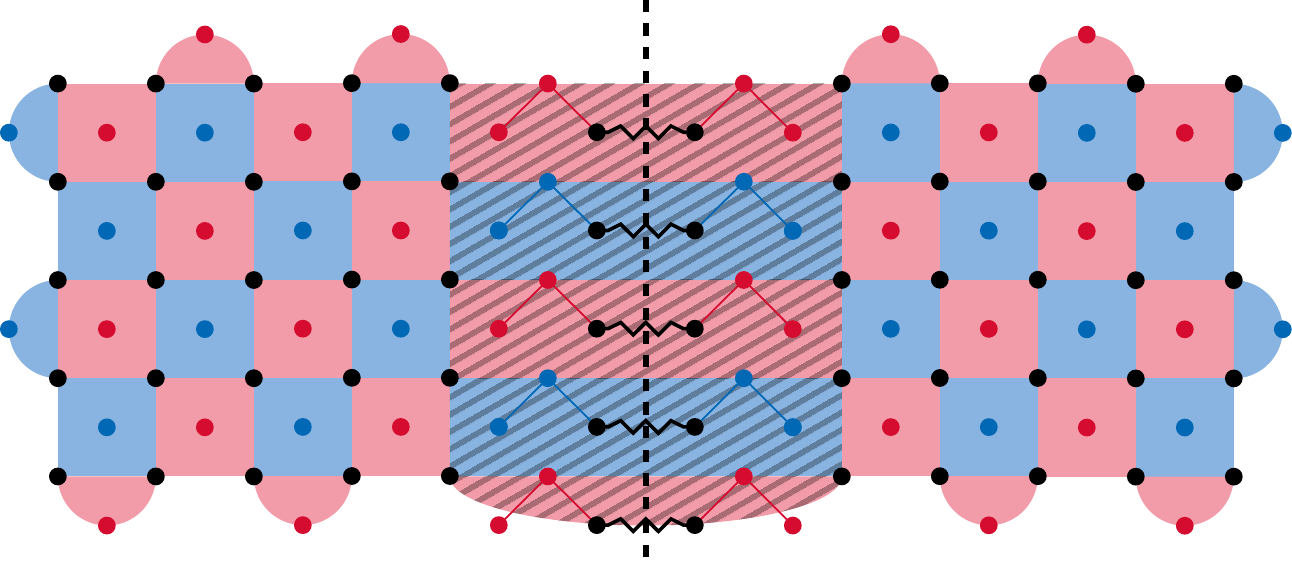}
    \caption{}
    \label{fig:sc_sq_grid_layout}
  \end{subfigure}
  \hfill
  \begin{subfigure}[ht]{0.47\textwidth}
    \centering
    \[ \tikzfig[scale=0.4]{07_sim_results/square_grid_circ} \]
    \caption{}
    \label{fig:sc_sq_grid_circ}
  \end{subfigure}
  \caption{Distributed lattice surgery under hardware constraints: (a) Each distributed plaquette only has access to one shared Bell pair channel and two auxiliary qubits per QPU; they all follow the local square grid connectivity layout (the three local auxiliary qubits are then restricted to a line connectivity). (b) Fault improvement inside the distributed plaquettes is realized with a modified \vone approach. Since only one Bell pair is available at a time, the first generated Bell pair is locally swapped to an auxiliary qubit. Then, the following Bell pairs are processed sequentially to realize repeated measurements on the swapped initial Bell pair. We here show the SWAPs using circuit notation rather than swapping the lines in the ZX diagram.}
  \label{}
\end{figure}

To account for these additional constraints, we design an adapted
version of the \vone circuit, shown in \autoref{fig:sc_sq_grid_circ}.
Notice that on each side of the seam, the auxiliary qubit that connects to the communication qubit never interacts directly with data qubits, only with the other auxiliary qubit.
This only requires 1D nearest-neighbor connectivity among the auxiliary and communication qubits, shown as a ``zig-zag line'' on the square grid. The auxiliary qubits are never shared between different plaquettes. Each SWAP in the circuit is implemented using three CNOT gates.

The resource demands of this approach scale linearly with the distance of the surface code, requiring in particular $d$ Bell pair channels. We note, however, that it is possible to design implementations with fewer than $d$  channels.
One straightforward option is to measure the distributed plaquettes in multiple timesteps, such as measuring all $X$ plaquettes first and then all $Z$ plaquettes. The two-step option halves the number of required channels without increasing the number of local operations, since each communication qubit is nearest-neighbor to both an $X$-plaquette auxiliary qubit and a $Z$-plaquette auxiliary qubit.

Another option for adapting to a limited Bell pair generation rate (either due to limited channels or a limited rate for each channel) is to buffer Bell pairs using additional auxiliary qubits. In a lattice surgery setting where a particular two-qubit logical measurement is only occasionally performed, this buffering could be done in-between such measurements.
The proposed circuit is in fact already an example of buffering, since we swap the first Bell pair transmitted through the channel onto the inner auxiliary qubit before use, instead of having two Bell pairs simultaneously available from two separate channels.

\begin{figure}[!htb]
  \centering
  \includegraphics[width=\textwidth]{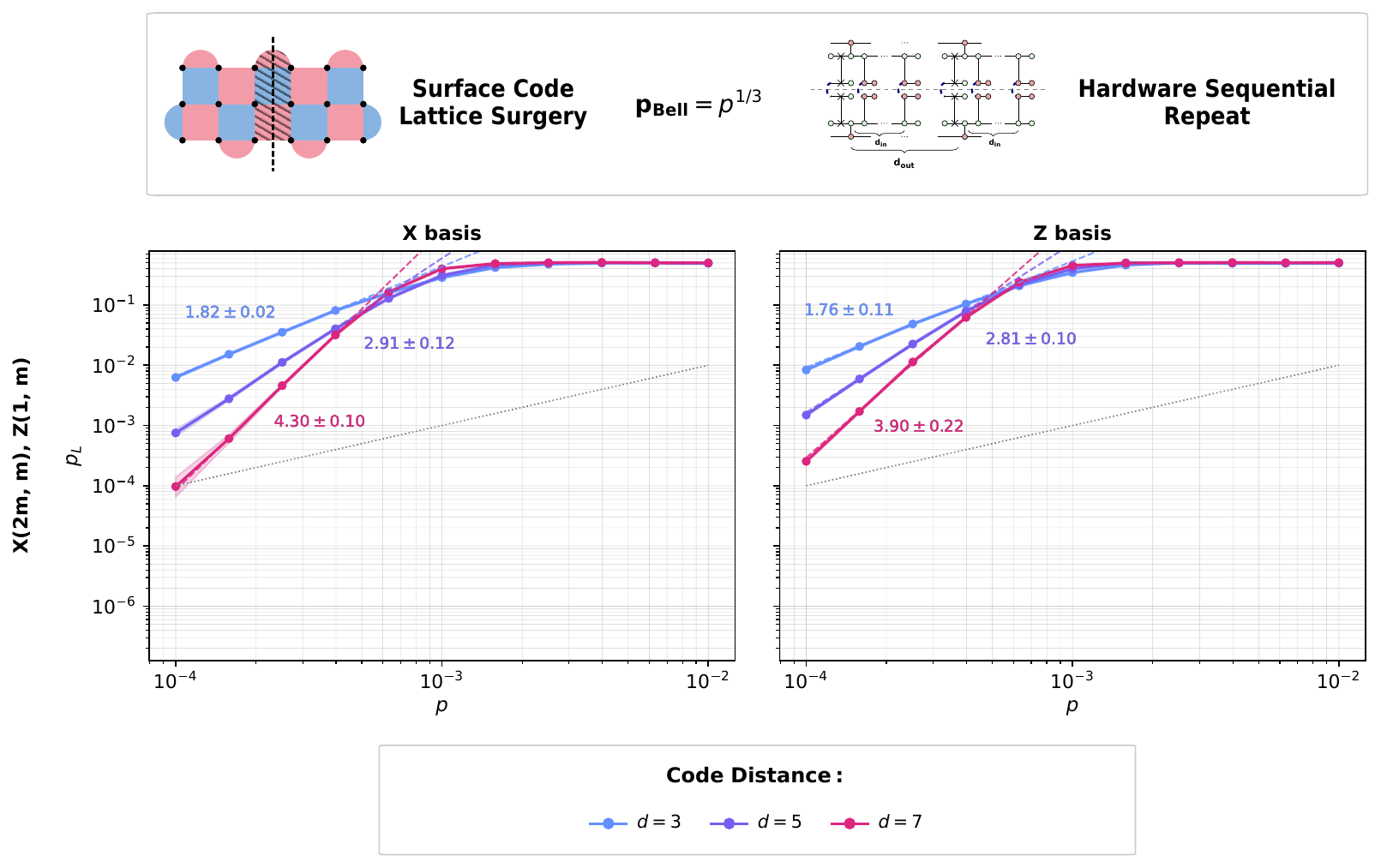}
  \caption{Distributed surface code: Lattice surgery under hardware constraints. Evaluation for distance $d=3,5,7$ for Bell pair noise $p^{1/3}$ with improvement strategy $X({2\ebitnoise, \ebitnoise})$, $Z({1 ,\ebitnoise})$ using the adapted \vone distribution circuit. We observe the expected asymptotic slope behavior with a shifted threshold compared to the unrestricted \vone circuit.}
  \label{fig:SC_LS_HW_dist_3}
\end{figure}

We benchmark the constrained lattice surgery circuits through a teleportation circuit, as in previous subsections. The results for $\ebitnoise=3$ and distance $d=3,5,7$ are shown in \autoref{fig:SC_LS_HW_dist_3}.
The fault-improvement is done through the context-aware $X({2\ebitnoise, \ebitnoise})$, $Z({1 ,\ebitnoise})$ strategy.
While the slopes reach the expected asymptotic behavior, we see a clear shift in the threshold from the less constrained
surface code lattice surgery. This shift is expected, given the increase in idle noise of the hardware restricted circuit.

\begin{figure}[!htb]
  \centering
  \includegraphics[width=\textwidth]{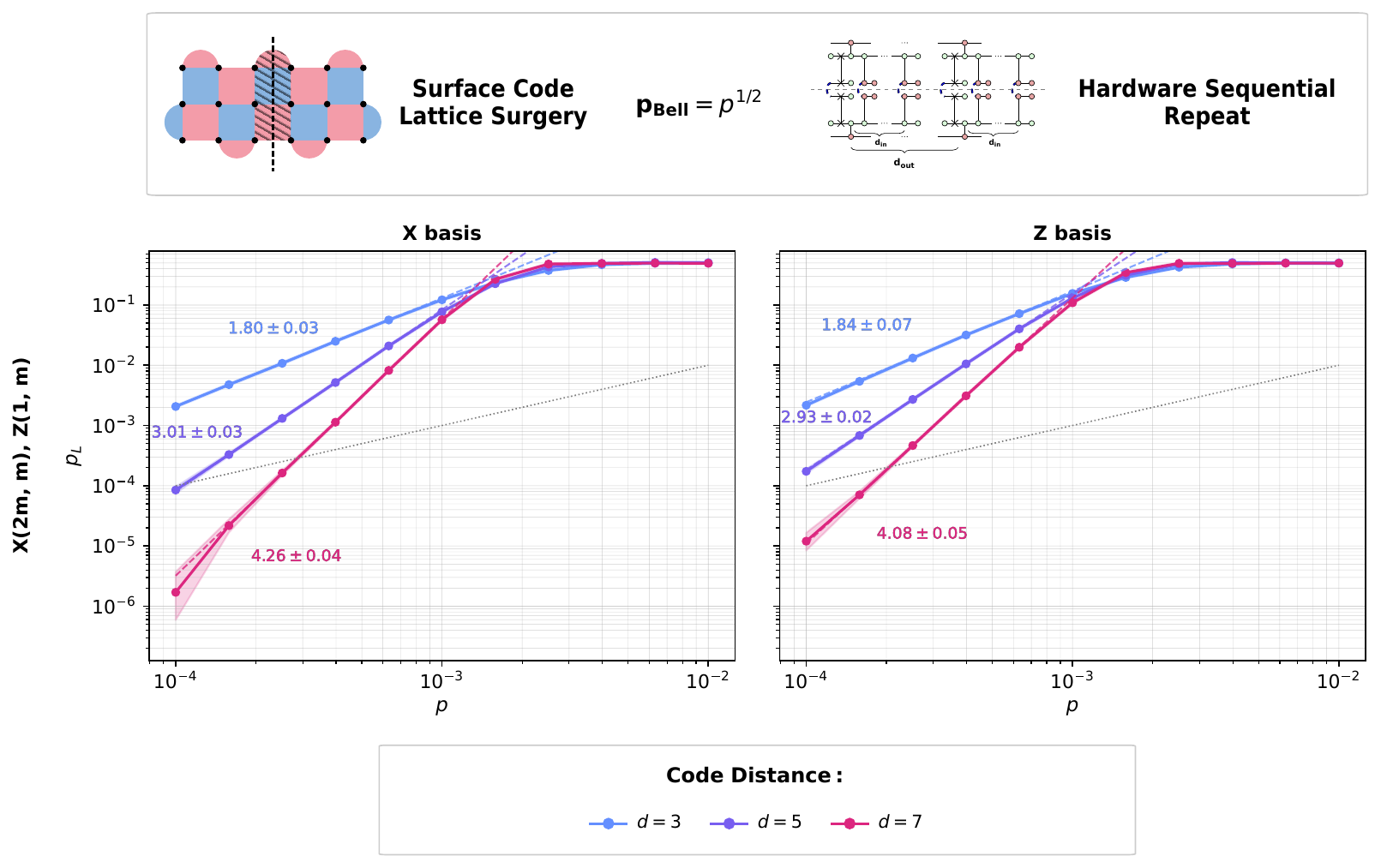}
  \caption{Distributed surface code: Lattice surgery under hardware constraints. Evaluation for distance $d=3,5,7$ for a less pessimistic Bell pair noise $p^{1/2}$ with improvement strategy $X({2\ebitnoise, \ebitnoise})$, $Z({1 ,\ebitnoise})$ using the adapted \vone distribution circuit. We observe the expected asymptotic slope behavior with a threshold at physical error rates $p \sim 10^{-3}$.}
  \label{fig:SC_LS_HW_dist_2}
\end{figure}

For comparison, we also benchmark the constrained circuits under a Bell pair noise level of $\ebitnoise=2$. Apart from modifying the amount of fault-improvement accordingly, the circuit is identical. The results are shown in \autoref{fig:SC_LS_HW_dist_2}, and we see that the reduction in required improvement significantly improves the logical performance.
The threshold is now around physical error rates $p \sim 10^{-3}$, with the corresponding Bell pair noise at $p_{Bell}\sim 10^{-3/2} \sim 0.03$.

\FloatBarrier
\section{Discussion}

In this work, we have introduced a new approach to distributed fault-tolerance, and used it to derive resource-efficient distributed circuits for stabilizer measurements. There are two main conceptual features of the approach:
First, integrating the protection against interconnect noise into the larger context, rather than treating it as a separate process.
Second, modeling the integrated problem using ZX-diagrams under weighted adversarial noise, which allows for a natural extension of previous work on fault tolerance, and especially of \emph{fault tolerance by construction}.
The new circuits reduce the overall Bell pair overhead compared to separate entanglement distillation, and can be adapted to different hardware settings through space-time tradeoffs.

We perform extensive numerical benchmarking and find that, even with very conservative assumptions on the noise levels on interconnects, it is possible to implement distributed fault-tolerant circuits with the same sub-threshold scaling as their monolithic counterparts. In the final example of \autoref{sec:numerics}, we see that for distributed surface code lattice surgery with only square-lattice connectivity among on-chip qubits, and only one Bell pair channel per distributed stabilizer, on-chip noise levels below $p\sim 10^{-3}$ and interconnect noise levels below $p_{\rm Bell}\sim 0.03$ are sufficient for successful (sub-threshold) operation at full circuit distance.
Given recent progress on distributed hardware, such noise levels do not seem unrealistic in the near term.

Outside of the setting of distributed quantum computing, we also note that a noise model containing significantly different noise levels in different parts of the circuit is more generally applicable, with one example being a monolithic setting with long-range intra-QPU connections that are noisier than nearest-neighbor gates. While such settings can also be treated by modifying prefactors rather than fault weights ($p$ versus $\alpha p$ instead of $p$ versus $p^w$), weighted adversarial noise simplifies the derivations: tracking and updating prefactors during ZX rewrites is more cumbersome than tracking and updating fault weights.

There are several directions for future work.  
Within the present work, we treated the Bell pair channel and the transduction to on-chip qubits as a black box, and focused on correlated Pauli noise on the shared Bell pair. While performing the fault-improvement using only on-chip operations has the advantage that such operations experience only on-chip levels of noise, 
 an approach that does not abstract away some of the details of the Bell pair channel would allow us to incorporate the effects of loss (in photonic channels) and leakage on the communication qubits. In the case of the former, one could draw lessons from fusion-based quantum computing and integrate them. This would require extending the framework of fault tolerance by construction to account for both leakage and loss.

We have seen that context-aware fault tolerance can reduce resource requirements over context-free fault tolerance. The context-aware analysis within the present work relied on known properties of the surface code and color code, which are among the most well-studied codes to date. To be able to extend the benefits of context-aware fault tolerance to more general settings, it is important to find systematic ways for deducing when fault propagation is benign within a given context.

In the distillation approach to fault-improvement, the present work has focused on $k=1$ distillation codes. As noted in \autoref{sec:fault_improvement}, $k>1$ generically requires a higher distance, to compensate for the lack of guarantees around how many logical Bell pairs are affected by faults on the interconnects. An open direction is how to identify or construct $k>1$ codes (together with suitable encoding circuits) that have guarantees such that this distance increase is not needed. Using such codes could then potentially lead to even further Bell pair savings.

A particularly important direction for future work is \emph{decoding}, given that integrated decoding has a substantial advantage in terms of lower resource requirements. Due to the difficulty of the integrated decoding problem, approximations are generally necessary, and we consider it likely that such approximations explain some of the shifts in the logical curves seen in \autoref{sec:numerics}. 
A more thorough investigation of decoding schemes tailored to the distributed setting is therefore a natural extension of this work.

Finally, as distributed hardware becomes available, the outcome of distributed fault-tolerance experiments will guide the theoretical development of new hardware-tailored protocols. Given the flexibility and resource savings that arise from integrating the treatment of interconnect noise into the larger circuit context, we believe that this will be a useful approach for designing fault-tolerant circuits for future distributed quantum computers.

\section*{Acknowledgements}
BR acknowledges support from Simon Harrison via the Wolfson Harrison UK Research Council Quantum Foundation Scholarship.
BP is supported by the Engineering and Physical Sciences Research Council grant number EP/Z002230/1, ``(De)constructing quantum software (DeQS)''.
EM and MS acknowledge funding by the German Ministry of Research, Technology and Space (BMFTR) through the project QuMAL-KI under Grant 50RA2208A (German Research Center for Artificial Intelligence) and Grant 50RA2208B (University of Bremen) by the German Aerospace Center (DLR) and through the project QuaSa under Grant No. 13N17300 administered by the VDI/VDE Innovation + Technik GmbH (VDI).
LGS is supported through a Leverhulme-Peierls Fellowship at the University of Oxford, funded by grant no. LIP-2020-014.

\medskip
\printbibliography

\onecolumn
\begin{appendix}
    \section{On-Chip Noise Models}\label{app:CLN}

For the numerical simulation, the SI1000 model from~\cite{gidneyBenchmarkingPlanarHoneycomb2022} was used.
It builds on the noise channels and rules of a circuit-level noise model:

\begin{table}[h]
    \centering
    \resizebox{\linewidth}{!}{
        \begin{tabular}{|c|l|}
            \hline
            \textbf{Noisy Gate}       & \textbf{Definition}                                                                                 \\
            \hline
            $\text{AnyClifford}_2(p)$ & \text{Any two-qubit Clifford gate, followed by a two-qubit depolarizing channel of strength $p$.}   \\
            \hline
            $\text{AnyClifford}_1(p)$ & Any one-qubit Clifford gate, followed by a one-qubit depolarizing channel of strength $p$.          \\
            \hline
            $\text{Reset}_Z(p)$       & Initialize the qubit as $\ket{0}$, followed by a bitflip channel of strength $p$.                   \\
            \hline
            $M_Z(p)$                  & Precede with a bitflip channel of strength $p$, and measure the qubit in the $Z$-basis.             \\
            \hline
            $\text{Idle}(p)$          & If the qubit is not used in this time step, apply a one-qubit depolarizing channel of strength $p$. \\
            \hline
            $\text{ResonatorIdle}(p)$ & If the qubit is not measured or reset in a time step during which other qubits are                  \\ &  being measured or reset, apply a one-qubit depolarizing channel of strength $p$. \\
            \hline
        \end{tabular}
    }
    \caption{
        Noise channels for circuit-level noise models, with noise parameters $p$ tunable to realize specific models such as standard depolarizing circuit-level noise and SI1000 circuit-level noise.
    }
    \label{tbl:noise}
\end{table}
Compared to ``default'' (standard depolarizing) circuit-level noise (CLN) the SI1000 noise model distinguishes between default idle noise and resonator idle noise, i.e. it models longer idle periods for time steps that include measurements or resets. Additionally, the noise strengths of the CLN noise channels are tuned with prefactors to mimic the noise of superconducting hardware more closely:
\begin{table}[h]
    \centering
    \begin{tabular}{|r|l|l|}
        \hline
        \textbf{Abbreviation}
         & CLN
         & SI1000
        \\\hline
        \textbf{Name}
         & \begin{tabular}{@{}l@{}}Standard\\Depolarizing\end{tabular}
         & \begin{tabular}{@{}l@{}}Superconducting\\Inspired\end{tabular}
        \\\hline
        \textbf{Noisy Gateset}
         & \noindent\begin{tabular}{@{}l@{}}
                        $\text{AnyClifford}_2(p)$ \\
                        $\text{AnyClifford}_1(p)$ \\
                        $\text{Reset}(p)$         \\
                        $M(p)$                    \\
                        $\text{Idle}(p)$          \\
                    \end{tabular}
         & \begin{tabular}{@{}l@{}}
               \vspace{-0.25cm}
               {}                           \\
               $\text{AnyClifford}_2(p)$    \\
               $\text{AnyClifford}_1(p/10)$ \\
               $\text{Reset}(2p)$           \\
               $M(5p)$                      \\
               $\text{Idle}(p/10)$          \\
               $\text{ResonatorIdle}(2p)$   \\
           \end{tabular}
        \\\hline
    \end{tabular}
    \caption{
       Parameters for standard depolarizing circuit-level noise and SI1000 circuit-level noise. The superconducting-inspired acronym SI1000 refers to an expected cycle time of about $1000$ nanoseconds.
    }
    \label{tbl:models}
\end{table}
    \section{Propagation of Y Faults During Fault-Improvement of Distribution Edges through Repetition}\label{sec:yproof}

In this section, we show the effects of $Y$ errors on fault improvement by repetition and prove the weights stated in \autoref{prop:fault-improve}.

A $Y$ error corresponds to a correlated $X$ and $Z$ error, or in the language of ZX as a fault gadget with an $X$ and $Z$ target on the same edge:
\[ \tikzfig[scale=0.5]{A03_app_FT_w/y_gadget} \]

For simplicity, we will not construct the ZX diagram with all X, Y, and $Z$ fault gadgets; instead, we will use insights from the isolated $X$ and $Z$ cases to argue for all relevant cases involving $Y$ errors.

First, we can see that single internal $Y$ errors will be detected by the inner detecting region due to their $X$ target.
Therefore, we only have to check pairs of errors that include $Y$ errors.

Any combination of errors on the internal horizontal edges with weight $w=1$ is fault-equivalent to a combination of outside faults with similar weight:
\[ \tikzfig[scale=0.5]{A03_app_FT_w/y_proof1} \]

Also a combination of a fault on an internal horizontal edge and a vertical edge does not generate a lower weight outside fault. For instance, a combination of $Y$ and $X$ error always propagates to an outside $Y$ error of $w\geq 1$:
\[ \tikzfig[scale=0.5]{A03_app_FT_w/y_proof2} \]

The only remaining relevant faults are combinations on the vertical edges. Two $Y$ errors propagate to a combination of two outside $X$ errors, which corresponds to an $X$ error on the internal edge of the distribution edge diagram:
\[ \tikzfig[scale=0.5]{A03_app_FT_w/y_proof3} \]
Since there are two independent mechanisms on the LHS of \autoref{prop:fault-improve} that correspond to the hook mechanism on the RHS, the mechanism with the higher probability dictates the noise behavior of the diagram. This corresponds to the final edge weight as the minimum, i.e., $\min(v_X + w_X, v_Y + w_Y)$.

Similarly, any of the two combinations of $X$ and $Y$ error propagates to a correlated outside $X$ and $Y$ error, the same effect as a $Y$ error on the internal edge of the distribution edge diagram:
\[ \tikzfig[scale=0.5]{A03_app_FT_w/y_proof4} \]
Again, for the two independent mechanisms on the LHS of \autoref{prop:fault-improve}, the one with the lower weight dictates the edge weight $\min(v_X + w_Y, v_Y + w_X)$.

Finally, since $Y$ errors on the vertical edges do not introduce any low weight $Z$ errors, we have recovered all edge weights of \autoref{prop:fault-improve}. \qed

\section{Implementability and Classical Corrections}\label{sec:corrections}

To make the ZX diagrams implementable as distributed circuits, we consider an on-chip gate set of CNOTs and single qubit gates.
For operations across QPUs, we use the established model in which Bell pairs are locally swapped to the communication qubits of the respective QPUs.
In terms of ZX diagrams this means that any inter-QPU Bell state generation must be subject to noise in accordance with the noise model for interconnects described in \autoref{sec:prelim}.
Our proposed distribution edge with two edges of modified noise can be fault-tolerantly rewritten into the desired, implementable form using individual local measurements on each QPU:
\begin{equation}
  \tikzfig[scale=0.5]{A_app_corr/impl_ebit_cx}
\end{equation}

Notice that the measurements are assumed to be zero in a noise free setting.
When implementing the diagrams as a circuit, this need not always hold; it depends on the context of use for the distributed edge.
One option for implementing the circuit in this form is to explicitly post-select on zero measurement outcomes.
As an alternative to this costly approach, we can track the effect of possible measurement results by introducing a pair of parametrized phase gates at the measurement, pushing one into the measurement spider and the other one to the output edges of the diagram.
For the example of the $XX$ distribution edge, we begin by interpreting each vertical edge as a two-qubit measurement with individual $k_1$ and $k_2$ measurement results, and then push them out.
We can make use of the fact that the two measurements form a detecting region, which implies that $k_1 = k_2 = k$.
We then have two remaining phase spiders at the outside of the diagram which correspond to classical corrections based on the measurement results.
\[\tikzfig[scale=0.5]{A_app_corr/impl_nondet_m}\]
This also extends to arbitrary repetitions:
\[\tikzfig[scale=0.5]{A_app_corr/impl_nondet_m2}\]
Coming back to our distributed implementation with two local measurements, the $x_i$ measurement result is now split into two local components $k_i =k_{i1} \oplus k_{i2}$.
Even though the local measurements do not stay deterministic, their parity and consequently the classical corrections do.
\[\tikzfig[scale=0.5]{A_app_corr/impl_ebit_cx2}\]

The phase $k\pi$ of any given pushed-out correction term needs to be known (determined from a measurement outcome in the past) at the time step where the correction is applied. In our example, the right correction term is pushed out to a time step at which $k$ is determined by the repeated measurements. On the other hand, the left correction is pushed to an earlier time step where no measurement has happened yet.
Therefore, when the diagram is embedded into a larger context, we have to further push the leftmost correction term out until the correction occurs at a later time step than the measurement.
In the following example, the distribution edge is an outer improvement inside a weight-4 measurement:
\[\tikzfig[scale=0.5]{A_app_corr/cc_past}\]
Both corrections are pushed out on outer qubits and can be moved to a time step after the measurements are executed. When the corrections are pushed through the CNOTs, they are also applied to the $\ket{0}$ state initialization and the $l_1$ measurement on the auxiliary qubit. Since both operations are of the opposite Pauli type, the correction has no effect on them.

In the context of our proposed distribution edges, if we not only protect against one type of error but have a nested structure with inner and outer repetitions, only our construction for outer repetitions requires classical corrections.
This can be seen by starting a Pauli web from the first measurement to the beginning of the circuit to track what previous elements determine its stabilizer information:
\[\tikzfig[scale=0.5]{A_app_corr/impl_baselines}\]
The Pauli web of the first inner repetition traces back to the initial Bell pair, forming a detecting region. Hence, the first inner distributed measurement is deterministic and equal to 0.
For the first outer repetition, the Pauli web is not contained inside, as we have individual $\ket{+}$ initializations on the auxiliary qubits, which are not $Z$ eigenstates.
The Pauli web flow reaches the boundaries of the diagram. Therefore, the first measurement is not deterministic and classical corrections are required.

    \section{QPC Encoding Circuit}\label{app:Shor}

In this section, we prove the improvement of entanglement distillation with the QPC encoder of \autoref{sec:QPC_encoder}.

We show fault equivalence by collapsing the entanglement distillation circuit into the improved edge using our set of fault-improving rewrites.
\[ \tikzfig[scale=0.4]{A_app_shor/ED_shor1} \]
We begin with the repeated inner protection blocks.
First, we merge in the measurements and adjust the colors of two-legged spiders.
\[ \tikzfig[scale=0.4]{A_app_shor/ED_shor2} \]
Now we can iteratively use our fault-improving rewrite (\autoref{prop:fault-improve}).
\[ \tikzfig[scale=0.4]{A_app_shor/ED_shor3} \]
With the $Z$-improvement complete, the same approach can now be applied to the outer protection blocks.
\[ \tikzfig[scale=0.4]{A_app_shor/ED_shor4} \]
    \section{CSS Code Generalization: Steane Code Example}\label{app:Steane}

In Section~\ref{sec:QPC_encoder} we showed how the proposed communication-edge improvements relate to the Quantum Parity Code (QPC). For this family of codes, the repeat-construction of \autoref{prop:nested_reps} provides a space--time tradeoff between parallel distillation and sequential measurement sequences. The implementation can therefore be adapted to the available resources, such as the Bell pair production rate and the number of communication qubits.

A natural question is whether similar tradeoffs can be achieved for other CSS codes. In this appendix, we investigate this question for the Steane code. We construct a Steane-based distribution edge and compare the resulting spacetime tradeoff with the QPC-based approach. As in the QPC case in the main text, we present parallel and sequential constructions below.

If seven Bell pairs are simultaneously available, efficient encoder circuits can be directly used for distillation~\cite{xuVariationalCircuitCompiler2021,buchbinderEncodingArbitraryState2011}:
\begin{equation}
    \label{eq:steane_distill}
    \tikzfig{A_app_steane/steane_distill_2}
\end{equation}

To trade off space for time, we build on existing work on representing CSS codes in ZX-calculus~\parencite{kissinger2022,huang2023,khesin2025}.
Specifically, we use the ZX normal form for CSS subsystem codes, as introduced in Section $3.1$ of~\cite{huang2023}, for the Steane code. This is shown as the RHS of \autoref{eq:steane_nf}:
\begin{equation}
    \label{eq:steane_nf}
    \tikzfig{A_app_steane/steane_graph}
\end{equation}
Through the bipartition of the normal form, each Bell pair is attached to a separate $Z$-spider, and all $Z$-spiders are connected to $X$-spiders, which represent the $Z$-stabilizer checks and $Z$-logicals of the Steane code.
To construct a sequential encoding circuit, we reorder and unfuse the normal form such that each $X$-spider is realized on an individual auxiliary qubit and each $Z$-spider gets executed one after another on a single communication qubit:
\begin{equation}
    \label{eq:steane_seq_1q}
    \tikzfig[scale=.35]{A_app_steane/steane_seq_1q}
\end{equation}
Notice that the encoding circuit is not built through fault-equivalent rewrites, but according to \autoref{prop:on-chip-dist} is still sufficient for on-chip distillation.

Compared to the simple structure of the QPC, the weight-4 stabilizers of the Steane code impose more constraints: None of the $Z$-stabilizer measurements can be easily executed sequentially to reuse auxiliary qubits.

Analogously to the approach presented in \autoref{section4_3}, we can interpolate between strictly sequential and parallel processing.
For instance, with two Bell pairs available in parallel, we can reduce the number of auxiliary qubits from four to two by carefully ordering Bell pairs and CNOTs, such that two $X$-spiders are finished before the other remaining two start:
\begin{equation}
    \label{eq:steane_seq_2q}
    \tikzfig{A_app_steane/steane_seq_2q}
\end{equation}

As a final example, \autoref{eq:steane_5q} shows a circuit variant that requires five Bell pairs and already integrates the Steane distillation inside a distribution edge: The circuit, together with its counterpart on the other QPU, executes a fault-improved $XX$ measurement acting on their respective auxiliary qubit, as shown in \autoref{eq:steane_example}.
\begin{equation}
    \label{eq:steane_5q}
    \tikzfig{A_app_steane/steane_5q}
\end{equation}

Compared with the QPC construction of \autoref{sec:QPC_encoder}, the Steane code provides a higher encoding rate at the cost of a larger encoding circuit. Since the Steane code is not a concatenated code and has a denser connectivity structure, the extent to which the interaction patterns can be separated and redistributed in time is limited. In contrast, the structure of concatenated repetition codes underlying the QPC construction is more separable and therefore offers greater flexibility in trading communication qubits for additional time steps. This comes from two properties of the QPC encoder. First, it is a concatenated CSS code, which is inherently decomposable. More specifically, consider the QPC encoder from \autoref{def:shor_encoder}; the $d_X$ $Z$-spiders representing the inner code can be executed independently producing $d_X$ intermediate logicals. Therefore, the $Z$-spiders can be executed sequentially on one register of $d_Z$ qubits while storing the resulting logicals in a $d_X$ qubit register. Second, all stabilizers of the underlying repetition codes have weight $2$, which imposes few constraints and allows for high flexibility, since only two qubits must be present simultaneously.

A general investigation of trade-off possibilities and limitations for CSS codes and systematic circuit construction procedures is outside the scope of this paper and is left as future work.

\FloatBarrier

\end{appendix}

\end{document}